\newcommand{\orcid}[1]{\href{https://orcid.org/#1}{\textcolor[HTML]{A6CE39}{\aiOrcid}}}
\newcommand{\be}{\begin{equation}}
\newcommand{\ee}{\end{equation}}
\newcommand{\ba}{\begin{eqnarray}}
\newcommand{\ea}{\end{eqnarray}}
\newtheorem{lemma}{Lemma}
\newcommand{\R}{\mathbb{R}}
\newcommand{\C}{\mathbb{C}}
\newcommand{\F}{\mathbb{F}}
\def\>{\rangle}
\def\<{\langle}
\newtheorem{theo}{Theorem}
\newtheorem{prop}{Proposition}
\newtheorem{cor}{Corollary}
\newtheorem{defi}{Definition}
\newtheorem{obs}{Observation}
\definecolor{lime}{HTML}{A6CE39}
\DeclareRobustCommand{\orcidicon}{
	\begin{tikzpicture}
	\draw[lime, fill=lime] (0,0) 
	circle [radius=0.16] 
	node[white] {{\fontfamily{qag}\selectfont \tiny ID}};
	\draw[white, fill=white] (-0.0625,0.095) 
	circle [radius=0.007];
	\end{tikzpicture}
	\hspace{-2mm}
}
\begin{document}

\title{Unbounded Quantum Advantage in Communication with Minimal Input Scaling}
\author{Sumit Rout\orcidS{}}
\affiliation{International Centre for Theory of Quantum Technologies (ICTQT), University of Gda{\'n}sk, Jana Ba{\.z}ynskiego 8, 80-309 Gda{\'n}sk, Poland}
\author{Nitica Sakharwade\orcidN{}}
\affiliation{International Centre for Theory of Quantum Technologies (ICTQT), University of Gda{\'n}sk, Jana Ba{\.z}ynskiego 8, 80-309 Gda{\'n}sk, Poland}
\author{‪Some Sankar Bhattacharya‬\orcidB{}}
\affiliation{Física Teòrica: Informació i Fenòmens Quàntics, Universitat Autònoma de Barcelona,
08193 Bellaterra, Spain}
\author{‪Ravishankar Ramanathan\orcidR{}‬}
\affiliation{Department of Computer Science, The University of Hong Kong, Pokfulam Road, Hong Kong}
\author{Pawe{\l} Horodecki\orcidP{}}
\affiliation{International Centre for Theory of Quantum Technologies (ICTQT), University of Gda{\'n}sk, Jana Ba{\.z}ynskiego 8, 80-309 Gda{\'n}sk, Poland}

\begin{abstract}
In communication complexity-like problems, previous studies have shown either an exponential quantum advantage or an unbounded quantum advantage with an exponentially large input set $\Theta(2^{n})$ bit with respect to classical communication $\Theta(n)$ bit. In the former, the quantum and classical separation grows exponentially in input while the latter's quantum communication resource is a constant. Remarkably, it was still open whether an unbounded quantum advantage exists while the inputs do not scale exponentially. Here we answer this question affirmatively using an input size of optimal order. Considering two variants as tasks: 1) distributed computation of relation and 2) {\it relation reconstruction}, we study the one-way zero-error communication complexity of a relation induced by a distributed clique labelling problem for orthogonality graphs. While we prove no quantum advantage in the first task, we show an {\it unbounded quantum advantage} in relation reconstruction without public coins. Specifically, for a class of graphs with order $m$, the quantum complexity is $\Theta(1)$ while the classical complexity is $\Theta(\log_2 m)$. Remarkably, the input size is $\Theta(\log_2 m)$ bit and the order of its scaling with respect to classical communication is {\it minimal}. This is exponentially better compared to previous works. Additionally, we prove a lower bound (linear in the number of maximum cliques) on the amount of classical public coin necessary to overcome the separation in the scenario of restricted communication and connect this to the existence of Orthogonal Arrays. Finally, we highlight some applications of this task to semi-device-independent dimension witnessing as well as to the detection of Mutually Unbiased Bases.
\end{abstract}

\maketitle 

\section{Introduction}
Quantum Shannon theory replaces the classical carrier of information with quantum systems in Shannon's model of communication \cite{wilde2011classical}. This initiated a tide of attempts to understand the advantage of encoding classical information in a quantum system. Over the past few decades, there have been numerous works probing the advantage of quantum resources over classical counterparts in various information-theoretic scenarios. Many of these works provide a deeper insight into quantum theory. Some of these quantum advantages have found practical applications in the field of quantum cryptography \cite{Bennett2014,PhysRevLett.69.1293}, quantum communication \cite{ambainis2008quantum,RevModPhys.82.665,Buhrman2016,BSW10,Tavakoli2020} and quantum computing \cite{Howard2014,Grover1996,doi:10.1098/rspa.1992.0167} to name a few. In a prepare and measure scenario, the major share of effort has been devoted to showing an advantage in quantum communication complexity \cite{Yao,Kushilevitz}. It involves computing the minimum communication required between distant parties to perform a distributed computation of functions \cite{Buhrman98}. 

Karchmer and Wigderson \cite{KW88} initiated the study of the communication complexity of relations and established a connection between the communication complexity of certain types of relations and the complexity of Boolean circuits. In \cite{Raz} Raz provided an example of an exponential gap between the classical and quantum communication complexity for a relation. 

Another closely related line of study has been to explore quantum advantage based on orthogonality graphs-inspired problems. In most cases, orthogonality graphs that yield quantum advantage are not Kochen-Specker colourable (KS-colourable) \cite{Saha2019}, thus connecting this set of tasks to the feature of quantum contextuality \cite{Kochen1975}. 

In terms of communication complexity, it is relevant to look into the tasks that show a large separation between classical and quantum communication resources. In this context, a pertinent question arises regarding the maximum gap between these resources. In this article, we introduce a new task based on the communication complexity of relations called {\it relation reconstruction}. For this task, we identify a class of relations based on graphs, such that there is an unbounded gap between one-way zero-error classical and quantum communication. In \cite{Saha2019, Saha2023}, the authors show the quantum communication complexity advantage for graphs which are not KS colourable. This raises questions regarding the usefulness of KS colourable graphs in demonstrating a similar utility of quantum resources. Our proposed task shows quantum advantage independent of the graphs being KS-colourable (or not). Lastly, the exponential advantage in \cite{Raz} requires an infinite set of inputs. In \cite{Massar01}, the authors show an unbounded separation between quantum and classical resources in the absence of public coin while using an input set which is exponentially large relative to the advantage. This leaves open the possibility of obtaining a similar advantage while using an {\it exponentially smaller input set}. Here we solve this problem affirmatively. Our proposed task requires only a finite set of inputs to establish an unbounded separation between classical and quantum resources. In our communication complexity type task, the input size is of the same order as the advantage. It is practically hard to demonstrate such an unbounded advantage if the input data size is exponentially larger than the advantage (See Section \ref{sec:disc} for detailed discussion). Note that the trade-off here is the fact that, unlike in \cite{Raz,Buhrman98,Massar01}, where the exponential advantages are evident when distant parties share public coins (classical correlations), the unbounded gap presented in this work precludes such possibilities.    

In this work, the relation we considered is specified by the rules of distributed {\it clique labelling problem (CLP)} over a graph. We first study the task of distributed computation of the relation by two parties Alice and Bob, and calculate the one-way zero-error communication complexity of the relation (CCR). For this task where any valid answer belonging to the relation is accepted, we show that there is no advantage in encoding information in quantum systems. However, another version of the task, called relation reconstruction, where Bob's output in different runs should span all correct answers, entails a non-trivial quantum advantage in communication. We refer to the communication cost for this relation reconstruction task as Strong Communication Complexity of Relation (S-CCR). This new task is equivalent to the possibility of reconstructing the relation from the complete observed input-output statistics. Demanding reconstruction is a stronger condition than the communication complexity of relations since a function (a special case of relations) can always be reconstructed from the observed statistics in the latter task, while in general for a relation this does not hold.

\subsection{Main Results}
We consider two distinct scenarios depending on the availability of public coins ({\it i.e.} pre-shared correlation) and direct communication resources between the two parties: (i) the spatially separated parties do not share any public coin and have access to one-way communication, (ii) the communication channels can transmit systems of fixed operational dimensions and the public coin is resourceful. In the first scenario, we find that there exist communication tasks that entail an unbounded separation between the operational dimensions of the classical and quantum message systems. We also demonstrate a quantum advantage for the relation induced by a class of vertex-transitive self-complementary graphs, called {\it Paley graphs}. In the second scenario, we show that there exist communication tasks that imply classical channels are required to be assisted by unbounded amounts of the classical public coin ({\it i.e.} shared randomness) while the quantum channel does not require any additional assistance. Further, we show that there exist graphs for which the task with a classical bit channel requires classical public coin increasing linearly in the number of maximum cliques whereas with quantum public coin ({\it i.e.} quantum correlation) assistance it can be performed by a 1 e-bit-assisted classical bit channel.

\subsection{Outline of the paper}

The article is organised as follows. Section \ref{sec:not} introduces the necessary preliminaries for this work, including orthogonal representations and binary/KS colouring of graphs, which we later apply to our communication tasks. We also discuss the notion of the operational dimension of a system in a theory which would be useful to compare classical and quantum systems as resources for communication. In Sec. \ref{sec:ccr} we first introduce the setup and discuss the communication complexity of relations. We then introduce a variation of the task involving distributed computation of a relation, define the Strong Communication Complexity of Relation and a payoff for this new task. In subsection \ref{subsec:cliquecolouring} and \ref{sec:CLP}, we introduce the notion of clique labelling for a graph and the clique labelling problem which we further elaborately discuss in the later sub-sections. In section \ref{sec:sccind_results} we provide all the results for the two tasks when different resources are accessible to the parties. For the results in subsections \ref{CLP:classical} - \ref{SCLP:other}, we consider one-way noiseless communication as a resource and the parties have access to a local source of randomness or private coins only. On the other hand for the results provided in subsection \ref{subsec:corr}, we consider public coin between parties to be a resource as well when they have access to a noiseless one-way communication channel of bounded dimension. In subsection \ref{CLP:classical} we consider the distributed clique labelling problem, calculate the classical and quantum communication complexity of relation (CCR) specified by this problem, and show that there is no quantum advantage in this task. Then we consider the variation of this task called relation reconstruction where Bob's outputs must span all the correct answers. In \ref{subsec:direct}, we calculate the classical strong communication complexity of the relation (S-CCR) and show that it grows with the order of the graph. Next in \ref{subsubsec:directq} we calculate an upper bound on quantum S-CCR and show that when the orthogonality graphs belong to a certain class then the gap between classical and quantum S-CCR can be arbitrarily large. Subsequently, in subsection \ref{SCLP:other} we show that quantum advantage in S-CCR exists for even a larger class of graphs by explicitly considering Paley graphs as an example. In subsection \ref{subsec:corr}, when an additional classical public coin is allowed between the parties, we calculate the amount of classical public coin assistance to bounded classical communication required when orthogonality graphs belong to some specific class. We show the lower bound on the classical public coin assistance grows with the number of maximum cliques. Additionally, we show the advantage of e-bit assistance to classical channels over assistance from arbitrary amounts of classical public coin.  In Sec. \ref{sec:appl} we list some applications of the proposed communication scenario. Finally in Sec. \ref{sec:disc} we summarise the results and also list some open questions. We also compare the current work with pre-existing results and discuss some foundational insights into the results we have presented.

\section{Preliminaries}\label{sec:not}
In this section, we briefly go over known concepts relevant to the article, including notions of orthogonal representation, binary colouring of graphs which is widely used in the study of contextuality \cite{CSW10} and operational dimension.

\subsection{Graphs, Orthogonal Representation, and Binary Colouring}
A graph $\mathcal{G}=(\mathcal{V},\mathcal{E})$ consists of a set of vertices $\mathcal{V}:=(v_1,v_2,\dots,v_n)$ and a set of edges $\mathcal{E}:=(e_1,e_2,\dots,e_m)$ between the vertices. Additionally, the edges may also have a directional property and a weight, which gives rise to further classifications of directed or undirected graphs and weighted or unweighted graphs. In this work, we consider simple undirected unweighted graphs. A subgraph of a graph $\mathcal{G}$ is a graph $\mathcal{G'}=(\mathcal{V}',\mathcal{E}')$ where $\mathcal{E}'\subseteq \mathcal{E}$ such that $\forall e_i\in \mathcal{E}'$ the vertices connected by $e_i$ belong in $\mathcal{V}'\subseteq \mathcal{V}$. For any graph $\mathcal{G}$, a clique is a fully connected subgraph of $\mathcal{G}$. The size of the clique is given by the number of vertices in the subgraph. A {\it maximum clique} of a graph $\mathcal{G}$ is a clique with the largest size and the number of vertices in it is referred to as {\it clique number}.

Among many different representations of an arbitrary graph, orthogonal representation over complex fields is useful in demonstrating the impossibility of a non-contextual hidden variable model for quantum mechanics \cite{CSW10,Rabelo2014}. Here we make use of a general definition of orthogonal representation. The orthogonal representation of a graph over arbitrary fields is defined as the following \cite{Peeters1996}:

\begin{defi}\label{def:FOR}
Given a graph $\mathcal{G}:=(\mathcal{V},\mathcal{E})$, an orthogonal representation of $G$ over field $\F$ is described by the function $\phi~:\mathcal{V}\rightarrow \F^d$, such that\\
  (i) for any two adjacent vertices $v_i$ and $v_j$, $\<\phi(v_i),\phi(v_j)\>=0$,\\
    (ii) $\phi(v_i)\neq\phi(v_j)$, for all $i\neq j$\\
    where $d$ is the dimension of the vector space over field $\F$ and $\<~ ,~\>$ denotes the scalar product (bilinear form) over field $\F$.\\
    This representation is $\it{faithful}$, if $\<\phi(v_i),\phi(v_j)\>=0$ implies that $v_i$ and $v_j$ are adjacent; and is $\it{orthonormal}$, if $|\phi(v_i)|=1$ for all $v_i\in \mathcal{V}$.
\end{defi}

 In this article, we would only consider the orthogonal representations such that the vectors are normalized, {\it i.e.}, $|\phi(v_i)|=1$ for all $v_i\in \mathcal{V}$. An important problem regarding this representation is to find the minimum dimension, say $d_{\F}$, such that the above definition holds. For such an optimal orthogonal representation we denote the {\it faithful orthogonal range} of the graph $\mathcal{G}$ over field $\F$ as $d_{\F}$ (for example, $d_{\R}$, $d_{\C}$ etc.). A lower and an upper bound to the faithful orthogonal range $d_{\F}$ satisfied over an arbitrary field ${\F}$, are given as follows:
 
 \begin{equation}\label{eq:for}
     \omega \le d_{\F} \le d_{\F'} \le |\mathcal{V}|
 \end{equation}
 
 where, $\F'\subseteq \F$, $\omega$ is the maximum clique size and $|\mathcal{V}|$ is the number of vertices in the graph $\mathcal{G}$ also known as the order of the graph. The lower bound follows from the constraint that there should be at least $\omega$ number of orthogonal vectors for any faithful orthogonal representation. The upper bound says that it is always trivially
possible to provide an orthogonal representation with $|\mathcal{V}|$ number of mutually orthogonal vectors. Lov\'{a}sz $\it{et~al.}$ \cite{lovasz1989orthogonal} provided a necessary and sufficient condition for finding minimal $d$ over the real field $\R$ for a class of orthogonal representations called $\it{general~ position}$, for which any set of $d$ representing real vectors are linearly independent.

\begin{prop}\label{prop:lovasz}[Lov\'{a}sz {\it et al.} '89 \cite{lovasz1989orthogonal}]
Any graph $\mathcal{G}:=(\mathcal{V},\mathcal{E})$, has a general position faithful orthogonal representation in $\mathbb{R}^d$ if and only if at least $(|\mathcal{V}|-d)$ vertices are required to be removed to make the complementary graph $\Bar{\mathcal{G}}$ disconnected.
\end{prop}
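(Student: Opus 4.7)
The plan is to prove the two implications separately. The forward direction yields to a clean dimension-counting argument, while the converse requires a generic/inductive construction that is the main technical obstacle.

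For the ``only if'' direction, suppose $\phi:\mathcal{V}\to\mathbb{R}^d$ is a general position faithful orthogonal representation and let $S\subseteq\mathcal{V}$ be a set of $k$ vertices whose removal disconnects $\Bar{\mathcal{G}}$ into components with vertex sets $A_1,\dots,A_t$, $t\geq 2$. Because no edges of $\Bar{\mathcal{G}}$ run between distinct components, every pair $(a,b)\in A_i\times A_j$ with $i\neq j$ is an edge of $\mathcal{G}$, so orthogonality forces $\mathrm{span}(\phi(A_i))\perp\mathrm{span}(\phi(A_j))$. General position gives $\dim\mathrm{span}(\phi(A_i))=\min(|A_i|,d)$; if some $|A_i|\geq d$ then that one subspace already equals $\mathbb{R}^d$, contradicting $t\geq 2$. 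Hence $|A_i|\leq d-1$ for all $i$, and mutual orthogonality of the spans inside $\mathbb{R}^d$ gives $\sum_i|A_i|\leq d$, so $k=|\mathcal{V}|-\sum_i|A_i|\geq|\mathcal{V}|-d$.

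For the converse, assume $\Bar{\mathcal{G}}$ has vertex connectivity at least $|\mathcal{V}|-d$. I would build $\phi$ iteratively: order the vertices $v_1,\dots,v_n$ and assign unit vectors $\phi(v_i)\in\mathbb{R}^d$ one at a time. At step $i$, the orthogonality requirements confine $\phi(v_i)$ to the subspace $W_i:=\bigl(\mathrm{span}\{\phi(v_j):j<i,\,v_iv_j\in\mathcal{E}\}\bigr)^\perp$. Both the general position constraint on $\phi(v_i)$ and the faithfulness constraint (non-orthogonality to all previously placed non-neighbours) define Zariski-open subsets of the unit sphere in $W_i$, so a generic unit vector in $W_i$ meets every requirement -- provided $\dim W_i\geq 1$. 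By inductively maintained genericity, $\dim W_i = d-|N_\mathcal{G}(v_i)\cap\{v_1,\dots,v_{i-1}\}|$, so the construction goes through as long as each vertex has strictly fewer than $d$ previously-assigned $\mathcal{G}$-neighbours.

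The combinatorial core -- and what I expect to be the main obstacle -- is to extract such an elimination ordering from the connectivity hypothesis on $\Bar{\mathcal{G}}$. The hypothesis rules out exactly the obstruction identified in the forward direction: there are no disjoint nonempty $A,B\subseteq\mathcal{V}$ with every cross-pair adjacent in $\mathcal{G}$ and $|A|+|B|>d$. Inducting on $|\mathcal{V}|$, one argues that the absence of such a join-structure forces, at each stage, the existence of a vertex $v$ whose $\mathcal{G}$-neighbourhood in the remaining vertex set has size at most $d-1$, which can be peeled off to continue the construction. Once the ordering is produced, the remainder of the proof is purely algebraic: the Zariski-openness of the general position and faithfulness loci ensures that the inductive genericity choices can be made simultaneously, yielding the desired representation as in Ref.~\cite{lovasz1989orthogonal}.
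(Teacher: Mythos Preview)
The paper does not prove Proposition~\ref{prop:lovasz}; it is quoted verbatim from Lov\'asz--Saks--Schrijver~\cite{lovasz1989orthogonal} and used as a black box (only the upper-bound direction is invoked, in the proof of Theorem~\ref{theo:unbounded}). So there is no in-paper argument to compare against.

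That said, your sketch is broadly the standard one, with one misdiagnosis. The ``only if'' direction is clean and correct. For the converse, you flag the elimination ordering as the main obstacle, but it is in fact trivial: $(|\mathcal V|-d)$-connectivity of $\bar{\mathcal G}$ already forces $\delta(\bar{\mathcal G})\ge |\mathcal V|-d$ (remove the $\bar{\mathcal G}$-neighbourhood of any vertex to isolate it), hence $\Delta(\mathcal G)\le d-1$, so \emph{every} ordering has the property you want. The real work is exactly the part you wave through: showing that at step $i$ a generic vector in $W_i$ can be chosen in general position with all previously placed vectors. Zariski-openness in $W_i$ is not enough on its own---you must argue that the bad locus (the union of the $(d-1)$-dimensional spans $U_T$) does not contain $W_i$, and this need not follow from the inductive hypothesis alone unless genericity is tracked globally over the whole tuple $(\phi(v_1),\dots,\phi(v_{i-1}))$ rather than step-by-step. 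Setting that up carefully (e.g.\ via an irreducibility/dimension argument on the variety of orthogonal representations) is where~\cite{lovasz1989orthogonal} does the heavy lifting, and it is also where the connectivity hypothesis is genuinely used beyond the degree bound.
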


In particular, Proposition \ref{prop:lovasz} provides an upper bound on the faithful orthogonal range (definition \ref{def:FOR}) for a class of graphs, a fact that we will use later (subsection \ref{subsubsec:directq}) to arrive at our main result of unbounded quantum advantage in a communication task.
Throughout this work, we will refer to a graph with a faithful orthogonal representation in minimum dimension as an orthogonality graph.

Given a graph $\mathcal{G}$, the problems concerning the colouring of its vertices with one of two possible colours have been widely studied and share deep connections with quantum non-contextuality. In the following, we define the binary colouring of an orthogonality graph.

\begin{defi}\label{def:binary}
Binary colouring of a graph $\mathcal{G}:=(\mathcal{V},\mathcal{E})$ is a binary function $\mathsf{f}~:\mathcal{V}\rightarrow \{0,1\}$ such that\\
    (i) for any two adjacent vertices $v_i$ and $v_j$, $\mathsf{f}(v_i)\mathsf{f}(v_j)=0$,\\
    (ii) for any maximum clique $\mathsf{C}_k$ of the graph $\mathcal{G}$ there is exactly one vertex $v_*\in \mathsf{C}_k$, such that $\mathsf{f}(v_*)=1$.
\end{defi}
A point to note here is that not all graphs are binary colourable. A {\it Binary Colouring} of graph $\mathcal{G}$ with $n$ vertices, if possible, can be thought of as a binary string of length $n$. On the other hand, the set of the binary strings corresponding to all different binary colourings uniquely describes the graph $\mathcal{G}$. In the subsequent sections, we will use the term "colouring of a graph" to refer to the binary colouring of the graph.

\subsection{Comparison of Classical and Quantum Resources}

In any communication protocol, the carrier of the message, as well as the sources of private or public coins, are physical systems, which may be described as classical or quantum (or more generally but outside the purview of this work by a post-quantum theory). To facilitate comparing resources, we will describe below the notion of \textit{Operational dimension} from the framework of General Probabilistic Theory (GPT) \cite{plavala2021}. The {\it operational dimension} of a system is the largest cardinality of the subset of states that a single measurement can perfectly distinguish.

Importantly, the operational dimension of a theory is different from the dimension of the vector space $V$ in which the state space $\Omega$ is embedded. For instance, for qubit the state space, the set of density operators $\mathcal{D}(\mathbb{C}^2)$ acting on $\mathbb{C}^2$ is embedded in $\mathbb{R}^3$. However, the operational dimension of this system is $2$, as at most two-qubit states can be perfectly distinguished by a single measurement. Generally, the operational dimension is equivalent to the Hilbert space dimension for a quantum system. We will refer to this notion when comparing communication resources between the quantum and classical scenarios.

\section{Communication Complexity of Relations}\label{sec:ccr}

In this Section, we will briefly discuss the communication complexity of relations. Consider a bipartite prepare and measure scenario involving Alice and Bob who are separated in space and can communicate. A relation is defined as a subset $\mathcal{R}\subseteq X \times Y \times B$, where $X$ and $Y$ are the set of possible input values of Alice and Bob, respectively, and $B$ is the set of possible output values that can be produced by Bob. A simple example is the relation $\mathcal{R}$ where $X$ and $Y$ are sets of Parents and the set $B$ is the set of Children and a valid tuple $(x,y,b)\in \mathcal{R}$ when $b$ is a child of $x$ and $y$. There might be multiple {\it correct answers} if $x$ and $y$ have multiple children. There is also the possibility of no valid output for a given $x$ and $y$ if they have no children. We will consider relations that have a valid output $b$ for any valid input $(x,y)$. The task for Alice and Bob is distributed computation of relation $\mathcal{R}$. They can use communication as a resource for this purpose. A protocol $\mathsf{P}$ to perform this task may involve one-way or two-way communications with single or multiple rounds. However, in this work, we are interested in {\it one-way} communication protocols only. The cost of a protocol $\mathsf{P}$ is defined as the minimum amount of communication required to perform the distributed computation for any input $(x,y)\in X\times Y$. Now we will define the Communication Complexity of a Relation (or CCR).

\begin{defi} \textbf{CCR}
The communication complexity of a relation $\mathcal{R}\subseteq X \times Y \times B$ is the minimum communication required from Alice to Bob such that for any input variables $x\in X$ and $y\in Y$, Bob's output $b$ gives the tuple $(x,y,b)\in \mathcal{R}$. Note that Alice and Bob should know the relation $\mathcal{R}$ before the task commences.
\end{defi}

 In other words, the communication complexity of the relation $\mathcal{R}$ is the minimum communication required when optimised over all protocols that can compute $\mathcal{R}$. In a generalised setting, the distributed computation task may allow for some small errors to lower the cost of communication. Throughout this article we consider only {\it zero-error } scenario, {\it i.e.} $P(b|x,y)=0$ whenever $(x,y,b)\notin \mathcal{R}$ for all $(x,y)\in X \times Y$. In most cases, rather than finding the optimal protocol or its cost, which is often difficult, one is interested in providing a lower bound for the communication complexity. A trivial {\it zero-error} protocol using $\log_2 |X|$ bit, which requires that Alice sends all information about her input to Bob, provides a trivial upper bound for communication complexity.

 The protocols for the distributed computation of a relation depending on encoding and decoding strategies have the following types. Firstly, the classical one-way $\log_2 m$ bit communication protocol can be deterministic. Such a deterministic protocol consists of a fixed encoding $\mathbb{E}$ by Alice which is a `$\log_2 |X|$ bit to $m$ bit' deterministic function and a decoding by Bob $\mathbb{D}$ which is a `$m\log_2 |Y|$ bit to $\log_2 |B|$ bit' deterministic function, {\it i.e.} $\mathbb{E}:\{1,\cdots,|X|\}\mapsto\{0,\dots,m-1\}$ and $\mathbb{D}:\{0,\dots,m-1\}\times \{1,\cdots,|Y|\}\mapsto\{1,\cdots,|B|\}$. The communication cost of such a protocol is defined as the length of the message in bits sent by Alice on the worst choice of inputs $x$ and $y$. The one-way deterministic zero-error communication complexity of relation $\mathcal{R}$, denoted by $\mathbf{D}(\mathcal{R})$ is the cost of the best protocol ({\it i.e} protocol with minimum communication cost) that allows computation of relation $\mathcal{R}$ without any error. Secondly, the parties can have access to private coins or sources of local randomness. In a classical one-way protocol with private coins, they can locally alternate over the space of all possible encodings ($\mathbb{E}$) and decodings ($\mathbb{D}$) while following some probability distribution $P_{\mathbb{E}}$ and $P_{\mathbb{D}}$ respectively. We denote the private coin-assisted communication complexity of relation $\mathcal{R}$ as $\mathbf{R}_{\it priv}(\mathcal{R})$. Thirdly, the parties can have access to public coins or sources of shared correlations. In a classical one-way protocol with public coins, they can switch between deterministic encoding and decoding schemes following some correlation $P_{\mathbb{E}\times \mathbb{D}}$. Here $P_{\mathbb{E}\times\mathbb{D}}$ is a probability distribution over the space of the Cartesian product of deterministic encodings and decodings. We denote the public coin-assisted communication complexity of relation $\mathcal{R}$ as $\mathbf{R}_{\it pub}(\mathcal{R})$. Note that the sources of shared correlations can be classical or quantum and we will refer to them as classical public coin or quantum public coin respectively. The classical communication complexities of a relation $\mathcal{R}$ satisfy the following ordering:

\begin{align}
   \mathbf{R}_{\it pub}(\mathcal{R})\le \mathbf{R}_{\it priv}(\mathcal{R})\le \mathbf{D}(\mathcal{R})
\end{align}

In the communication complexity of functions, there is only a single {\it correct answer} that Bob may output. The task of communication complexity of relations differs from that of functions since there may be more than one correct answer for Bob. This allows us to define a stronger variation of the distributed computation task that enforces that Bob outputs all correct answers over different rounds of the prepare and measure scenario. We will refer to the minimum communication required for this task as Strong Communication Complexity of Relations (S-CCR). Naturally, when the relation is a function (a subclass of relations) S-CCR and CCR are equal as both the tasks are equivalent in the case of functions. 

\begin{defi} \textbf{S-CCR}
The strong communication complexity of a relation $\mathcal{R}\subseteq X \times Y \times B$ is the minimum communication required from Alice to Bob such that for any input variables $x\in X$ and $y\in Y$, Bob's output $b$ gives the tuple $(x,y,b)$ which belongs to $\mathcal{R}$ and that Bob's output $b$ in different rounds of the prepare and measure scenario spans all valid $b$ for each input $(x,y)$. Same as CCR, Alice and Bob should know the relation $\mathcal{R}$ before the task commences.
\end{defi}

In other words, this new task aims to decipher or {\it reconstruct} the relation $\mathcal{R}$ from the observed statistics $\{(x_i,y_i,b_i)|i=runs\}$. Thus, we will refer to this variant of a distributed computation task as relation reconstruction. In the limit of $runs \rightarrow \infty$ the observed statistics can be used to get the conditional output probability distribution $\{P(b|x,y)\}_{x,y,b}$. Note that for relation reconstruction, the necessary condition to guess or reconstruct $\mathcal{R}$ correctly is given by the non-zero value of the observed conditional probabilities when $(x,y,b)\in \mathcal{R}$ (and zero otherwise) rather than the exact probabilities. However, we can define a natural (but not convex) payoff for relation reconstruction as follows:
\begin{equation}\label{eq:pay}
  \mathcal{P_R}=\min_{(x,y,b)\in \mathcal{R}} P(b|x,y).
  \end{equation}
When optimised over all protocols $\mathsf{P}$ for this task with or without public coins involving them, the best strategy yields the maximum achievable payoff for the given relation which we will refer to as algebraic upper bound $\mathcal{P^*_R}$. This is trivially achieved if Alice communicates her input to Bob and Bob in turn uses this message and his input to give a randomly chosen output from the set of all correct answers in each run.

\begin{equation}
  \mathcal{P^*_R}=\max_{\mathsf{P}} \mathcal{P_R}
  \end{equation}
  
One way to interpret the payoff $\mathcal{P_R}$ is to think of it as related to the probability of success of reconstructing the relation $\mathcal{R}$ (See Appendix \ref{app:rate}). Thus, for the given protocol, the higher the value of $\mathcal{P_R}$, the fewer runs one needs to reconstruct the relation. Note that for the success of reconstruction, we necessarily require $\mathcal{P_R}>0$. It is worth mentioning that we are interested in minimum communication that performs the relation reconstruction task. However, the optimal strategy using the minimum amount of communication may not yield $\mathcal{P^*_R}$. Further, two different sets of resources (communication and/or shared) of the same operational dimension, such as quantum and classical, that individually perform relation reconstruction may also yield different payoffs ($\mathcal{P}^Q_\mathcal{R}$ and $\mathcal{P}^{Cl}_\mathcal{R}$ respectively) when optimised over all the strategies given the type of the resource mentioned above.

In this work, we consider some specific relations induced by orthogonality graphs. These relations are specified by a distributed clique labelling problem. Before introducing the clique labelling problem, let us introduce clique labelling.

\subsection{Binary colouring to clique labelling}\label{subsec:cliquecolouring}
Consider an orthogonality graph $\mathcal{G}$ with the set of vertices $\mathcal{V}$ and maximum clique of size $\omega$. Now additionally consider some indexing of the vertices $\{1,\cdots, |\mathcal{V}|\}$ of the graph. Let us define the ordered (increasing indices) set of vertices belonging to a maximum clique $C_i$ as $\mathcal{V}_{C_i} \subseteq \mathcal{V}$. The binary colour, denoted by $f(.)$ is defined over each vertex (Def. \ref{def:binary}). Now consider the binary colouring of the vertices of the maximum clique $C_i$. Observe that any such binary colour assigns value $1$ to exactly one vertex of this maximum clique, that is

\begin{align}
    \forall~ v\in \mathcal{V}_{C_i},~ f(v)= \delta_{v,v'}~\text{for a $v' \in \mathcal{V}_{C_i} $} 
\end{align}

 There are $\omega$ different binary colourings possible for the maximum clique $C_i$. Now clique labelling can be defined as an invertible map that takes one from a binary colouring of the vertices of some maximum clique to an $\omega$-valued label of this clique.

\begin{defi}\label{def:cliqueclabel}
For some maximum clique $C_i$, clique labelling is a mapping $g_{C_i}$ from the set of binary colourings $\{f(v)\}$ of vertices $\mathcal{V}_{C_i}$ in the clique to a $\omega$-valued label in $ \Omega =\{0,...,\omega-1\}$. For a binary colouring $f(v)$ that assigns colour $1$ to the vertex at $a^{th}$ position index in the ordered set $\mathcal{V}_{C_i}$ the corresponding clique label is $ g_{C_i}= a\in\Omega $.
\end{defi}
 
 The clique label is assigned from $\{0,...,\omega-1\}$ in the following manner. The lowest clique label $0$ is assigned if the vertex with the lowest index in $\mathcal{V}_{C_i}$ is assigned $1$ by the binary colouring, the second lowest clique label $1$ is assigned if the vertex with the second lowest index in $\mathcal{V}_{C_i}$ is assigned $1$ by the binary colouring and so on.

\begin{figure}[h]
    \centering
\begin{DiagramV}[1]{0}{0}
\begin{move}{0,0}
\fill[black] (0 ,0 ) circle (0.25);
\draw (0-1,0) node {\footnotesize  $v_1$};
\fill[black] (2 ,3 ) circle (0.25);
\draw (2,3+1) node {\footnotesize  $v_2$};
\fill[black] (4 ,0 ) circle (0.25);
\draw (4+0.2,0+1) node {\footnotesize  $v_3$};
\fill[black] (7 ,-2.5 ) circle (0.25);
\draw (7,-2.5-1) node {\footnotesize  $v_7$};
\fill[black] (8 ,1 ) circle (0.25);
\draw (8-0.2,1+1) node {\footnotesize  $v_6$};
\fill[black] (10.5 ,1+3 ) circle (0.25);
\draw (10.5,4+1) node {\footnotesize  $v_4$};
\fill[black] (12 ,0.5 ) circle (0.25);
\draw (12+1,0.5) node {\footnotesize  $v_5$};
\draw (2,1.2) node {\footnotesize  $C_1$};
\draw (6.2,-0.6) node {\footnotesize  $C_2$};
\draw (10.2,2) node {\footnotesize  $C_3$};
\draw (0 ,0 ) -- (2 ,3 ) -- (4 ,0 ) -- (0 ,0 );
\draw (8 ,1 ) -- (10.5 ,4 ) -- (12 ,0.5 ) -- (8 ,1 );
\draw (8 ,1 ) -- (4 ,0 ) -- (7 ,-2.5 ) -- (8 ,1 );
\draw (7 ,-5.5 ) node { $\mathcal{G}^{(n=3,\omega=3})$};
\end{move}
\end{DiagramV}
\\
\vspace{0.2cm}
\begin{DiagramV}[1]{0}{0}
\begin{move}{0,0}
\draw (2.2,-0.6) node {\footnotesize  $C_2$};
\fill[black] (0,0) circle (0.25);
\draw (0-2.8,0+0.2) node {\footnotesize $f(v_3)=1$};
\fill[black] (3,-2.5) circle (0.25);
\draw (3,-2.5-1) node {\footnotesize  $f(v_7)=0$};
\fill[black] (4,1) circle (0.25);
\draw (4,1+1) node {\footnotesize  $f(v_6)=0$};
\draw (4,1) -- (0,0) -- (3,-2.5) -- (4,1);
\end{move} 
\end{DiagramV} $\implies g_{C_2}=0$
\\
\begin{DiagramV}[1]{0}{0}
\draw (2.2,-0.6) node {\footnotesize  $C_2$};
\begin{move}{0,0}
\fill[black] (0,0) circle (0.25);
\draw (0-2.8,0+0.2) node {\footnotesize $f(v_3)=0$};
\fill[black] (3,-2.5) circle (0.25);
\draw (3,-2.5-1) node {\footnotesize  $f(v_7)=0$};
\fill[black] (4,1) circle (0.25);
\draw (4,1+1) node {\footnotesize  $f(v_6)=1$};
\draw (4,1) -- (0,0) -- (3,-2.5) -- (4,1);
\end{move} 
\end{DiagramV} $\implies g_{C_2}=1$
\\
\begin{DiagramV}[1]{0}{0}
\draw (2.2,-0.6) node {\footnotesize  $C_2$};
\begin{move}{0,0}
\fill[black] (0,0) circle (0.25);
\draw (0-2.8,0+0.2) node {\footnotesize $f(v_3)=0$};
\fill[black] (3,-2.5) circle (0.25);
\draw (3,-2.5-1) node {\footnotesize  $f(v_7)=1$};
\fill[black] (4,1) circle (0.25);
\draw (4,1+1) node {\footnotesize  $f(v_6)=0$};
\draw (4,1) -- (0,0) -- (3,-2.5) -- (4,1);
\end{move} 
\end{DiagramV} $\implies g_{C_2}=2$
    \caption{
        For the graph $\mathcal{G}^{(n=3,\omega=3)}$ with $n=3$ maximum cliques of size $\omega=3$ given above, the clique labellings of $C_2$ which has vertices $\mathcal{V}_{C_2}=\{v_3,v_6,v_7\}$ for different binary colourings is provided in this figure.}
    \label{fig:cliquelabelling}
\end{figure}

For example, for $\omega=3$ if say maximum clique $C_i$ has vertices $\mathcal{V}_{C_i}=\{v_3,v_6,v_7\}$ and $f(v_3)=1$ then $g_{C_i}=0$, else if $f(v_6)=1$ then $g_{C_i}=1$ and if $f(v_7)=1$ then $g_{C_i}=2$, where $g_{C_i}$ is the $\omega$-valued clique label of maximum clique ${C_i}$. Note that given the index of vertices and a maximum clique, one can always map the clique label back to the binary colouring of the vertices of a maximum clique, {\it i.e.}, the map is invertible. This will be useful for Bob while decoding during the distributed computation of the clique labelling problem.

\subsection{Clique Labelling Problem ($CLP$)}\label{sec:CLP}
Now, we present the class of relations for which we study CCR and S-CCR in this work. These relations are based on the distributed {\it clique labelling problem} over certain graphs. Here we consider graphs along with some faithful orthogonal representation in minimum dimension and refer to them together as an orthogonality graph. Let us now consider an orthogonality graph $\mathcal{G}^{(n,\omega)}$ with $n$ maximum cliques each of size $\omega$ labelled as $C_i$ where $i\in\{1,...,n\}$. We also assume that each vertex belongs to some $\omega$-sized maximum clique. The set of maximum cliques of the graph $\mathcal{G}^{(n,\omega)}$ is denoted as $\mathcal{C}=\{C_1, C_2,\cdots, C_n\}$ and the set of clique labels for each of the maximum cliques is $\Omega=\{0,\cdots,\omega-1\}$. Note that the clique labels are related to the binary colouring of vertices through the definition \ref{def:cliqueclabel}. 

The setup (given in Fig. \ref{fig:setup1}) for the distributed Clique Labelling Problem (CLP) is a prepare-and-measure scenario involving a referee and two spatially separated parties, Alice and Bob. The referee shares the orthogonal graph $\mathcal{G}^{(n,\omega)}$ with some vertex indexing and a faithful orthogonal representation in minimum dimension with Alice and Bob at the beginning. The referee gives Alice the pair $(C_x, a)$ as input: a maximum clique  $C_x$ of size $\omega$ randomly chosen from $\mathcal{G}^{(n,\omega)}$ and a random possible labelling $a$ of the same maximum clique, i.e., $(C_x, a)\in X=\mathcal{C}\times\Omega$. The referee gives a maximum clique $C_y$ of size $\omega$ randomly chosen from $\mathcal{G}^{(n,\omega)}$ to Bob as input, $C_y\in Y=\mathcal{C}$. We will consider the inputs to be uniformly distributed in the sense that $C_x$ and $C_y$ are both randomly chosen from $\mathcal{C}$ and $a$ are uniformly chosen from $\Omega$.

\begin{figure}[h]
    \centering
\begin{DiagramV}[0.7]{0}{0}
\begin{move}{0,0}
\draw (2,1.2) node {\scriptsize  $C_i$};
\draw (6.2,-0.6) node {\scriptsize  $C_j$};
\draw [gray,ultra thin] (10.2,2) node {\scriptsize  $C_k$};
\draw (0 ,0 ) -- (2 ,3 ) -- (4 ,0 ) -- (0 ,0 );
\draw [gray,ultra thin] (8 ,1 ) -- (10.5 ,4 ) -- (12 ,0.5 ) -- (8 ,1 );
\draw  (8 ,1 ) -- (4 ,0 ) -- (7 ,-2.5 ) -- (8 ,1 );
\draw [gray,ultra thin] (0 ,0 ) -- (-3 ,-2) -- (-3.5 ,2 ) -- (0 ,0 );
\draw [gray,ultra thin] (-2,0) node {\scriptsize  $C_l$};
\draw [gray,ultra thin] (2 ,3 ) -- (0.1,5.5) -- (4.1 ,6 ) -- (2,3 );
\draw [gray,ultra thin] (2-0.1,4.7) node {\scriptsize  $C_m$};
\draw [gray,ultra thin] (-6,1) -- (-3.5,2) --(-5,3); 
\draw [gray,ultra thin]  (6,-4.5) -- (7 ,-2.5) --(8.5,-4.2); 
    \fill[gray] (-3 ,-2) circle (0.2);
    \fill[gray] (-3.5 ,2 ) circle (0.2);
\fill[black] (0 ,0 ) circle (0.25);
    \fill[gray] (0.1,5.5) circle (0.2);
    \fill[gray] (4.1 ,6 ) circle (0.2);
\fill[black] (2 ,3 ) circle (0.25);
\fill[black] (4 ,0 ) circle (0.25);
\fill[black] (7 ,-2.5 ) circle (0.25);
\fill[black] (8 ,1 ) circle (0.25);
\fill[gray] (10.5 ,1+3 ) circle (0.25);
\fill[gray] (12 ,0.5 ) circle (0.25);
\end{move} 
\end{DiagramV}
\\
\begin{DiagramV}[0.75]{0}{0}
\begin{move}{0,0}
\draw (3+7.5 , 11) node {$\mathcal{G}^{(n,\omega})$};
\draw [->] (3+7.5 -3, 11+0.2) to [out=180,in=90] (2,8);
\draw [->] (3+7.5 -3, 11-0.2) to [out=180,in=90] (4,8);
\draw [->] (3+7.5 +3, 11) to [out=0,in=90] (18,8);
\fill[blue!80!red!20!,rounded corners=0.1cm] (0,-3) rectangle (6,3);
\draw [rounded corners=0.1cm] (0,-3) rectangle (6,3);
\draw (3,0) node {\large $A$};
\fill[blue!80!red!20!,rounded corners=0.1cm] (0+15,-3) rectangle (6+15,3);
\draw [rounded corners=0.1cm] (0+15,-3) rectangle (6+15,3);
\draw (3+15,0) node {\large $B$};

\draw[->] (6,0) -- (15,0);
\draw[->] (2,6) -- (2,3);
\draw[->] (4,6) -- (4,3);
\draw (2,7) node { $C_x$};
\draw (4,7) node { $a$};
\draw[->] (3+15,6) -- (3+15,3);
\draw[->] (3+15,-3) -- (3+15,-6);
\draw (3+15,7) node { $C_y$};
\draw (3+15,-7) node { $b$};
\draw (3+7.5,+2) node {$d$};
\end{move}
\end{DiagramV}
\caption{In the prepare and measure scenario, Alice's input is a maximum clique from the orthogonal graph $\mathcal{G}^{(n,\omega)}$ and its clique label, {\it i.e.} $(C_x, a)$. Bob's input is some maximum clique $C_y$ in $\mathcal{G}^{(n,\omega)}$. Bob must output a valid clique labelling $b$ for his input maximum clique such that $(C_x,a,C_y,b)$ $\in \mathcal{R}(\mathcal{G}^{(n,\omega)})$. Alice can send a physical system of operational dimension $d$ to Bob.}
    \label{fig:setup1}
\end{figure}

Bob must output a valid labelling $b\in B=\Omega$ for $C_y$ which satisfies the constraints provided below following from the rules of the binary colouring of the orthogonality graph $\mathcal{G}^{(n,\omega)}$. This will subsequently specify the relation that we will consider. We call these constraints as {\it Consistent Labelling of Pairwise Cliques}:
\begin{enumerate}
    \item If Alice and Bob receive two maximum cliques sharing some vertices, the binary colouring of each shared vertex ($0$ or $1$) by Bob should be identical to Alice's colouring of the vertex.
    \item If Alice and Bob receive two different maximum cliques such that a vertex from Alice's input maximum clique is adjacent to some vertex in Bob's input maximum clique, then the vertices belonging to this edge should not both have binary colour $1$.
    \item In all other cases Bob can label the input maximum clique independently of Alice's input label.   
\end{enumerate}

Note that as a consequence of constraint $1$, if Alice and Bob receive the same maximum clique then Bob's clique labelling should be identical to Alice's input clique labelling, i.e. the binary colouring of the vertices of the maximum clique should be the same. Also, there could be graphs where vertices (not in common) from two maximum cliques are adjacent. The constraint $2$ states that whenever these two maximum cliques are inputs of Alice and Bob, each of these adjacent vertices should not be coloured $1$ simultaneously. The conditions for consistent labelling of pairwise cliques are defined w.r.t. binary colourings, which can be then translated to conditions on the input and output clique labelling in $\{0,\dots,\omega-1\}=\Omega$ (definition \ref{def:cliqueclabel}). Alice is allowed to send some communication (either classical or quantum depending on the scenario) to Bob, which we will optimise to find the communication complexity later. We will also consider situations with classical and quantum public coins later in subsection \ref{subsec:corr}.

We will now define the relation $\mathcal{R}(\mathcal{G}^{(n,\omega)})$ that we discuss throughout this work. $\mathcal{R}(\mathcal{G}^{(n,\omega)})\subseteq X\times Y\times B$ is specified by the distributed CLP for the graph $\mathcal{G}^{(n,\omega)}$. Here $X=\mathcal{C}\times\Omega$ and $Y=\mathcal{C}$ are the input sets for Alice and Bob respectively and $B=\Omega$ is the output set of Bob. Thus, $\mathcal{R}(\mathcal{G}^{(n,\omega)})\subseteq (\mathcal{C}\times\Omega)\times\mathcal{C}\times\Omega$ and the tuple $(x,y,b)\equiv((C_x, a), C_y,b) \in \mathcal{R}(\mathcal{G}^{(n,\omega)})$ if the input maximum clique of Alice and Bob, input clique label of Alice and output clique label of Bob satisfies the constraints of consistent labelling of pairwise clique given above for the graph $\mathcal{G}^{(n,\omega)}$. In the subsequent discussions, we will mildly abuse the notion by using $(C_x, a, C_y,b)$ instead of $((C_x, a), C_y,b)$ to denote a tuple belonging to the relation $\mathcal{R}(\mathcal{G}^{(n,\omega)})$.

In the distributed computation of the $\mathcal{R}(\mathcal{G}^{(n,\omega)})$ relation, Bob must output clique label $b$ for his input maximum clique such that the tuple $(x,y,b)\equiv(C_x, a, C_y,b) \in \mathcal{R}(\mathcal{G}^{(n,\omega)})$. Note that having the relation is equivalent to having the graph itself. Additionally, the CCR and S-CCR when considering the relation $\mathcal{R}(\mathcal{G}^{(n,\omega)})$ will be denoted as CC$\mathcal{R}(\mathcal{G}^{(n,\omega)})$ and S-CC$\mathcal{R}(\mathcal{G}^{(n,\omega)})$ respectively.
 
In Sec. \ref{sec:results}, we show that for distributed computation of $\mathcal{R}(\mathcal{G}^{(n,\omega)})$ there exists a protocol such that $\log_2 \omega$ quantum, as well as classical one-way communication from Alice to Bob, accomplish this task. Thus, we do not have any quantum advantage in CCR in this case. However, it is possible to realise unbounded quantum advantage when we look at the classical and quantum S-CCR when considering $\mathcal{R}(\mathcal{G}^{(n,\omega)})$. 

We add one observation here that will become relevant for some of the results in Sec. \ref{sec:results}. For a graph $\mathcal{G}$ to have a faithful orthogonal representation in dimension $\omega$, any two distinct maximum cliques for this graph can have at most $\omega-2$ points in common. Equivalently, every vertex $v$ in $C_i$ that is not in a maximum clique $C_j$ can be orthogonal to at most $\omega-2$ vertices in $C_j$. 

\subsection{Reconstruction of the Relation $\mathcal{R}(\mathcal{G}^{(n,\omega)})$}\label{sec:sccr}

For the distributed computation of $\mathcal{R}(\mathcal{G}^{(n,\omega)})$, Bob must output some label for his input maximum clique such that it follows the constraints enlisted above for the distributed CLP. Let us now consider the stronger version of the distributed computation task -- relation reconstruction, where Bob must span all correct answers. This can be formulated as a task where the inputs and outputs of Alice and Bob are listed at the end of every round. After sufficient iterations, this list is shared with a randomly chosen Reconstructor (Fig. \ref{fig:reconstruction}), who at the beginning does not have any information about the graph and the induced relation thereof. The Reconstructor becomes aware of the cardinality of the input and/or output sets of Alice and Bob from the list. The Reconstructor must reconstruct the relation $\mathcal{R}(\mathcal{G}^{(n,\omega)})$ and thus the graph $\mathcal{G}^{(n,\omega)}$.

\begin{figure}[h]
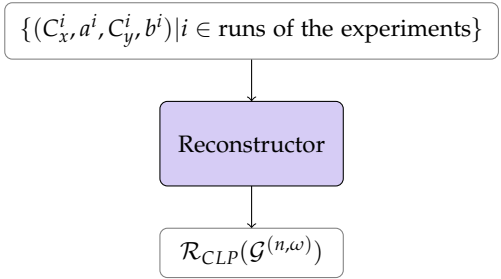

    \centering
\begin{DiagramV}[0.75]{0}{0}
\begin{move}{0,0}
\fill[blue!80!red!20!,rounded corners=0.1cm] (0-1,-3) rectangle (11+1,3);
\draw [rounded corners=0.1cm] (0-1,-3) rectangle (11+1,3);
\draw (5.5,0) node { $\text{Reconstructor}$};
\draw[->] (5.5,6) -- (5.5,3);
\draw[->] (5.5,-3) -- (5.5,-6);
\draw [gray,rounded corners=0.1cm] (-11-1,6) rectangle (11+11+1,10);
\draw (5.5,8) node {$\{(C_x^i,a^i,C_y^i,b^i)|i\in \text{runs of the experiments}\}$};
\draw [gray,rounded corners=0.1cm] (0-1,-6) rectangle (11+1,-9.5);
\draw (5.5,-7.6) node {$\mathcal{R}(\mathcal{G}^{(n,\omega)})$};
\end{move}
\end{DiagramV}
    \caption{\textbf{Reconstruction of Relation} After many runs of the task, the statistics $\{(C^i_x,a^i,C^i_y,b^i)\}$ are sent to the Reconstructor, who attempts to recover $\mathcal{R}(\mathcal{G}^{(n,\omega)})$.}
    \label{fig:reconstruction}
\end{figure}

For reconstruction to be possible, the outcomes of Bob $b$ should be such that after many runs, the set of tuples $\{(C_x,a,C_y,b)\}$ can be used to deduce all the (non-)orthogonality relations in the graph $\mathcal{G}^{(n,\omega)}$ by the Reconstructor, without any prior information about the relation $\mathcal{R}(\mathcal{G}^{(n,\omega)})$ or the graph $\mathcal{G}^{(n,\omega)}$.

After many iterations, the following payoff is calculated for the relation reconstruction task (as defined in \ref{eq:pay}):
\begin{align}
    \mathcal{P}_{\mathcal{R}(\mathcal{G}^{(n,\omega)})}=\min_{(C_x,a,C_y,b)\in \mathcal{R}(\mathcal{G}^{(n,\omega)})} P(b|C_x,C_y,a)\label{eqnpayoff}
\end{align}   
Here the minimisation is over the set of events in $\mathcal{R}(\mathcal{G}^{(n,\omega)})$. The payoff $\mathcal{P}_{\mathcal{R}(\mathcal{G}^{(n,\omega)})}$ is necessarily non-zero if reconstruction is possible. The payoff can be interpreted as a measure of the efficiency or success probability of relation reconstruction over some number of runs. Same as before, when optimised over all protocols with or without public coins, the best strategy yields the maximum achievable payoff for the given relation which we will refer to as algebraic upper bound $\mathcal{P}_{\mathcal{R}(\mathcal{G}^{(n,\omega)})}^*$. 

\subsection{Probability Table for CC$\mathcal{R}(\mathcal{G}^{(n,\omega)})$ and S-CC$\mathcal{R}(\mathcal{G}^{(n,\omega)})$}     

One can analyse the task of distributed computation for the relation $\mathcal{R}(\mathcal{G}^{(n,\omega)})$ as well as the task of relation reconstruction equivalently through a table of conditional probabilities $P(b|C_x,C_y,a)$. This table corresponds to the strategy Alice and Bob can decide on before the game begins while following some protocol. The rows of the table are given by Alice's possible inputs $(C_x,a)$, and the columns are denoted by the tuple of inputs-outputs of Bob $(C_y,b)$. This way of analysis will be important to understand some of the proofs we will present later. The favourable conditions for zero-error distributed computation of $\mathcal{R}(\mathcal{G}^{(n,\omega)})$ (which is {\bfseries (T0)}) and reconstruction of relation $\mathcal{R}(\mathcal{G}^{(n,\omega)})$ (which are {\bfseries (T0)-(T1)}) are provided in terms of the probability table below: 
\begin{enumerate}[start=0,label={(\bfseries T\arabic*):}]
   \item \textbf{Consistent labelling} If a tuple does not belong to the relation then the corresponding conditional probability entry should be zero.
   \begin{align}
       & \forall~ (C_x,a,C_y,b)\notin \mathcal{R}(\mathcal{G}^{(n,\omega)}) \nonumber \\ \implies &P(b|C_x,a,C_y)=0
   \end{align}
   \item \textbf{Relation Reconstruction} If a tuple belongs to the relation, the corresponding conditional probability entry should not be zero.
   \begin{align}
   &\forall ~(C_x,a,C_y,b)\in \mathcal{R}(\mathcal{G}^{(n,\omega)})\nonumber\\ \implies &P(b|C_x,a,C_y)>0
   \end{align}
\end{enumerate}
Further, one can provide an algebraic upper bound $\mathcal{P}_{\mathcal{R}(\mathcal{G}^{(n,\omega)})}^*$ for a given graph $\mathcal{G}^{(n,\omega)}$ from the probability table in the following way. First, fix an input $(\Tilde{C_x},\Tilde{a})$ for Alice and $\Tilde{C_y}$ for Bob. Now count the number of events $(\Tilde{C_x},\Tilde{a},\Tilde{C_y},b)\in \mathcal{R}(\mathcal{G}^{(n,\omega)})$. Lets call this number $\eta (\Tilde{C_x},\Tilde{a},\Tilde{C_y})$. Maximise $\eta (\Tilde{C_x},\Tilde{a},\Tilde{C_y})$ over Alice's and Bob's input sets and call this number $\eta$. Given that the conditions mentioned in {\bfseries (T0)-(T1)} hold, then one has a non-zero payoff for the relation reconstruction task. The payoff satisfies the following inequality
   \begin{align}
       0< \mathcal{P}_{\mathcal{R}(\mathcal{G}^{(n,\omega)}))}\le \frac{1}{\eta}=\mathcal{P}_{\mathcal{R}(\mathcal{G}^{(n,\omega)}))}^*
   \end{align}
     For example, in the case of a graph which has maximum cliques of size $\omega$ that are all disconnected the upper bound on the payoff for the reconstruction of relation  $\mathcal{R}(\mathcal{G}^{(n,\omega)}))$ becomes  $\mathcal{P}_{\mathcal{R}(\mathcal{G}^{(n,\omega)})}\le\frac{1}{\omega}=\mathcal{P}_{\mathcal{R}(\mathcal{G}^{(n,\omega)})}^*$.\\
    
We can now provide the final condition: 

\begin{enumerate}[start=2,label={(\bfseries T\arabic*):}]
\item \textbf{Optimal Payoff} When the payoff $\mathcal{P}_{\mathcal{R}(\mathcal{G}^{(n,\omega)})}$ achieves the algebraic upper bound, we say the payoff is optimal.
   \begin{align}
       0< \mathcal{P}_{\mathcal{R}(\mathcal{G}^{(n,\omega)})}= \frac{1}{\eta}=\mathcal{P}_{\mathcal{R}(\mathcal{G}^{(n,\omega)})}^*
   \end{align}
\end{enumerate}
 
 It is worth highlighting that the payoff $\mathcal{P}_{\mathcal{R}(\mathcal{G}^{(n,\omega)})}$ is a {\it faithful} quantifier of the distributed relation reconstruction problem, that is $\mathcal{P}_{\mathcal{R}(\mathcal{G}^{(n,\omega)})}>0$ whenever relation reconstruction is possible and $\mathcal{P}_{\mathcal{R}(\mathcal{G}^{(n,\omega)})}=0$ implies reconstruction is impossible. Moreover, a protocol that achieves a non-zero payoff for the reconstruction of relation $\mathcal{R}(\mathcal{G}^{(n,\omega)})$ task can trivially perform the distributed computation task for the same relation as well. In some instances, which we will state clearly, our objective will be to additionally maximise the payoff $\mathcal{P}_{\mathcal{R}(\mathcal{G}^{(n,\omega)})}$ for the relation reconstruction task using only specified amount of direct communication resources such as classical bits or qubits and also using shared resources such as classical public coin or quantum public coin (entanglement).

 Appendix \ref{app:example} presents a detailed example to illustrate the clique labelling problem and the reconstruction or relation $\mathcal{R}(\mathcal{G}^{(n,\omega)})$. Additionally, we examine the constraints on the conditional probability table specified earlier for this particular example. In the next section, we present the bulk of our key results while first considering the scenario with only direct communication resources (section \ref{CLP:classical} - \ref{SCLP:other}) and later considering the scenario where the bounded amount of direct communication is assisted by shared resources (i.e. public coins).

\section{One-way Zero-error Classical and Quantum CCR and S-CCR}\label{sec:sccind_results}\label{sec:results}
In the setup described in Section \ref{sec:CLP} (Fig. \ref{fig:setup1}), Alice and Bob have access to a noiseless one-way communication channel of limited capacity  (which is a resource) along with arbitrary local sources of randomness (i.e. private coins) that are considered to be free. Another variation may have them, in addition, sharing some correlations, i.e. public coin.  We will consider the communication scenario when a public coin is not allowed between Alice and Bob for the results discussed in subsections \ref{CLP:classical}-\ref{SCLP:other}. In the subsection \ref{CLP:classical}, we calculate the necessary and sufficient classical and quantum communication (CCR) required to perfectly compute distributed  CLP for any graph $\mathcal{G}^{(n,\omega)}$, and show that there is no advantage offered by quantum theory for this task. Next, we consider the task where Bob's output must span all the correct answers (i.e. relation reconstruction from observed input-output statistics) and calculate the necessary and sufficient classical resource required to accomplish this task (S-CCR) in subsection \ref{subsec:direct}. Subsequently, we calculate the sufficient quantum resource required to accomplish the same task when considering a specific class of orthogonality graphs in subsection \ref{subsubsec:directq}. We show that there is an unbounded separation between quantum and classical resources required to accomplish the task of relation reconstruction for this class of graphs. In subsection \ref{SCLP:other}, we show that there still exists an advantage in using quantum communication resources compared to classical resources for an even larger class of graphs where the orthogonal range is less than the order of the graph, such as the Paley graphs, for which we explicitly show the advantage.  Lastly, in subsection \ref{subsec:corr}, we consider the scenario when public coins are allowed between parties and the dimension of the communication channel is bounded. For certain classes of orthogonality graphs, we show that the necessary amount of classical public coins assistance to a bounded classical communication channel which is required for the relation reconstruction task increases linearly with the number of maximum cliques in the graph. Secondly, we also compare the resourcefulness of quantum public coins with classical public coins when bounded {\it classical} communication is allowed between Alice and Bob to achieve non-zero payoff $\mathcal{P}_{\mathcal{R}(\mathcal{G}^{(n,\omega)})}$.

\subsection{Classical and Quantum Communication Complexity of Relation $\mathcal{R}(\mathcal{G}^{(n,\omega)})$} \label{CLP:classical}

 In this subsection, we calculate the 1) classical and 2) quantum CC$\mathcal{R}(\mathcal{G}^{(n,\omega)})$ some graph $\mathcal{G}^{(n,\omega)}$. The setup is described in section \ref{sec:CLP} and assistance from public coins are forbidden. We will show that quantum and classical CC$\mathcal{R}(\mathcal{G}^{(n,\omega)})$ are the same and thus quantum theory offers no advantage in this communication task.

 We start by observing that both the classical and quantum one-way communication complexity for $\mathcal{R}(\mathcal{G}^{(n,\omega)})$ is bounded from below by the maximum clique size $\omega$ of the given graph. In other words, Alice has to send an $\omega$-level classical (or quantum) system using which Bob can choose a deterministic (or some random correct) output $b$ conditioned on his input $C_y$ and Alice's message. It follows from considering the scenario where both Alice and Bob are given the same maximum clique $C_x=C_y$, Bob must know the input label of Alice (which has the same size as the maximum clique $\omega$) to produce consistent labelling. Further, the quantum protocol can emulate any classical protocol through its coherent version.  Therefore, the objective now reduces to showing that an $\omega$-level classical communication is sufficient for the task.    

\begin{theo}\label{theo:D0classicalcomm}
Given a graph $\mathcal{G}^{(n,\omega)}$, the classical deterministic one way zero error communication complexity of $\mathcal{R}(\mathcal{G}^{(n,\omega)})$ is $\log_2 \omega$ bit.
\end{theo}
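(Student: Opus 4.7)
The lower bound is immediate from the $C_x=C_y$ case sketched just before the theorem: constraint~1 of the consistent pairwise labelling forces Bob's binary colouring of the shared maximum clique to coincide with Alice's on all $\omega$ of its vertices, so Bob must output $b=a$. Consequently Alice's message must distinguish $\omega$ possibilities, requiring at least $\log_2\omega$ bits.

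For the upper bound, the plan is to exhibit a deterministic protocol that uses an $\omega$-valued message, by reducing the construction to the combinatorial problem of producing a single labelling $\mu:\mathcal{V}\to\Omega$ with two properties: (P1) $\mu$ restricts to a bijection on every maximum clique, and (P2) $\mu$ is a proper vertex colouring of $\mathcal{G}^{(n,\omega)}$. Given such a $\mu$, Alice encodes $(C_x,a)$ by sending $m=\mu(u)$, where $u\in\mathcal{V}_{C_x}$ is the vertex that label $a$ binary-colours $1$. Bob, on receiving $m$ together with his clique $C_y$, identifies the unique $u'\in\mathcal{V}_{C_y}$ with $\mu(u')=m$ (well-defined by (P1)) and outputs the clique label $b$ whose binary colouring assigns $1$ to $u'$. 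Zero-error correctness is then mechanical in three cases: for $C_x=C_y$ one has $u'=u$ and hence $b=a$; for $C_x\neq C_y$ with $u\in\mathcal{V}_{C_y}$, the single-valuedness of $\mu$ on the shared vertex together with (P1) forces $u'=u$, satisfying constraint~1 of the CLP; for $u\notin\mathcal{V}_{C_y}$, (P1) places $u'$ in $\mathcal{V}_{C_y}\setminus\mathcal{V}_{C_x}$ while (P2) guarantees that the same-$\mu$ pair $u,u'$ is non-adjacent, satisfying constraint~2.

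The main obstacle is exhibiting $\mu$ for every admissible $\mathcal{G}^{(n,\omega)}$. I would leverage the shared faithful orthogonal representation $\phi:\mathcal{V}\to\mathbb{R}^\omega$ together with the observation in Section~\ref{sec:CLP} that $|\mathcal{V}_{C_i}\cap\mathcal{V}_{C_j}|\leq \omega-2$ for distinct maximum cliques. Geometrically, each maximum clique corresponds to an orthonormal basis of $\mathbb{R}^\omega$, which suggests aligning all such bases to a generic reference orthonormal basis $\{e_0,\ldots,e_{\omega-1}\}$ of $\mathbb{R}^\omega$: define $\mu(v)$ canonically from the coefficients $\{\langle\phi(v),e_i\rangle\}_i$ (for instance by the index achieving the maximum magnitude, with a fixed lexicographic tie-breaking rule) and argue that genericity of the reference basis plus the rigidity of $\phi$ yields both (P1) and (P2). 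A combinatorial fallback, should the geometric approach prove delicate, is to build $\mu$ inductively clique by clique, using the $\omega-2$ overlap bound from the observation as the Hall-type slack needed to extend the partial colouring without colour collisions. Once $\mu$ is in hand, the protocol and the verification above go through unchanged, yielding a classical deterministic protocol that matches the lower bound exactly.
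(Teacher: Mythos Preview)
Your reduction of the upper bound to a labelling $\mu:\mathcal{V}\to\Omega$ with (P1) and (P2) is correct and is essentially the paper's argument in different clothing: the paper's $\omega$ row-partitions $\tau(i)$ of the probability table are exactly the fibres of your $\mu$ (each row $(C_x,a)$ being identified with the vertex that label $a$ colours~$1$), and the permutation-matrix block structure the appendix derives is your (P1). Your phrasing isolates the combinatorial content more cleanly than the table manipulation.

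The genuine gap is in \emph{constructing} $\mu$. Condition (P2) is literally $\chi(\mathcal{G}^{(n,\omega)})\le\omega$, and this is not automatic for graphs with clique number $\omega$: the $5$-cycle has $\omega=2$, $\chi=3$, meets the surface hypotheses of Section~\ref{sec:CLP}, and one checks that no one-bit zero-error protocol for $\mathcal{R}(C_5)$ exists (your own reduction shows any such protocol would properly $2$-colour an odd cycle). Neither of your sketches supplies what is missing. The geometric route already assumes a faithful orthogonal representation in $\mathbb{R}^\omega$, which the theorem statement does not give (only one in some $d_{\mathbb{C}}\ge\omega$, cf.\ Eq.~\eqref{eq:for}); and even granted that, the ``argmax against a generic reference basis'' rule needs a real argument to yield a bijection on every clique simultaneously, let alone (P2) across all edges. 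The inductive route leans on the $\omega-2$ overlap bound, but that constrains \emph{shared} vertices between two cliques, not the cross-edges between their non-shared vertices, which is precisely what (P2) must handle; there is no evident Hall-type slack from overlap alone. The paper's appendix is likewise not explicit about constructing a table with the required block structure, so the issue is shared; for the concrete families used downstream (disjoint cliques, nearest-neighbour chains) one has $\chi=\omega$ by inspection and your $\mu$ exists trivially.
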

The essential idea of the proof is to show that there is a strategy or analogously a table of conditional probabilities $P(b|C_x,C_y,a)$ satisfying {\bfseries (T0)} such that there are $\omega$ distinct rows. Thus, the aforementioned table of conditional probabilities can be compressed into another table with $\omega$ rows only. Alice, upon communicating the row corresponding to her input, enables Bob to output clique labelling consistently depending on his input maximum clique. This strategy (table of conditional probability) involving the communication of $\log_2(\omega)$ bit is necessarily of the following form. For every input of Alice $(C_x,a)$ and Bob $C_y$, there is a deterministic $b$ that Bob chooses to output. This specification is necessary for the probability table to satisfy {\bfseries (T0)}.
\begin{proof} 
For the complete proof see appendix \ref{app:prooftheo1}. 
\end{proof}

Thus, there is no advantage in using quantum resources over its classical analogue when considering the communication complexity of $\mathcal{R}(\mathcal{G}^{(n,\omega)})$.  Note that, the orthogonal representation of the graph which is not relevant to the protocol here will be pertinent in the next two subsections (\ref{subsec:direct}, \ref{subsubsec:directq}), where we will consider the relation reconstruction task and calculate the classical and quantum S-CCR to show an unbounded quantum advantage. 

\subsection{Classical communication cost of {\it relation reconstruction}}\label{subsec:direct}

 In this subsection, we calculate the amount of classical communication necessary and sufficient for the reconstruction of relation $\mathcal{R}(\mathcal{G}^{(n,\omega)})$ from observed input-output statistics when considering the class of orthogonality graphs $\mathcal{G}^{(n,\omega)}$ that satisfy the following conditions:
\begin{enumerate}[start=0,label={(\bfseries G\arabic*):}]
\item each vertex of the graph is part of at least one maximum clique of the graph,
\item $\forall v, v'\in \mathcal{V}$ belonging to two different maximum cliques $\exists~ u\in \mathcal{V}$ such that $u$ is either adjacent to $v$ or $v'$ but not both.
\begin{obs}\label{obs:for}
Given a graph $\mathcal{G}^{(n,\omega)}$ with maximum clique size $\omega$, satisfying conditions ({\bfseries G0})-({\bfseries G1}), there exist induced subgraphs consisting of at least two maximum cliques of size $\omega$, say $C_i$ and $C_j$, such that there is at least one label of $ C_i$ for which there are at least two different consistent choices of labelling for the other maximum clique $C_j$.
\end{obs}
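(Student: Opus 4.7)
The plan is to exhibit two distinct maximum cliques $C_i,C_j$ of $\mathcal{G}^{(n,\omega)}$ together with a single label $a$ of $C_i$ under which $C_j$ admits at least two consistent labels. I start by fixing any two distinct maximum cliques $C_i\neq C_j$ of size $\omega$ (at least two exist since $n\geq 2$). Because $|C_i|=|C_j|=\omega$ and they are distinct, $C_i\setminus C_j$ is non-empty; pick any $v\in C_i\setminus C_j$ and let $a\in\Omega$ be the label of $C_i$ that corresponds (via $g_{C_i}$) to the binary colouring assigning $1$ to $v$ and $0$ to every other vertex of $C_i$.

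The main mechanism is the orthogonal-representation observation stated at the end of Section~\ref{sec:CLP}: since the graph carries a faithful orthogonal representation in dimension $\omega$, every vertex $v\in C_i\setminus C_j$ can be orthogonal (adjacent) to at most $\omega-2$ vertices of $C_j$, hence non-adjacent to at least $\omega-(\omega-2)=2$ vertices of $C_j$. Call two such vertices $w_1,w_2\in C_j$. For $k\in\{1,2\}$, let $b_k\in\Omega$ be the label of $C_j$ corresponding (via $g_{C_j}$) to the binary colouring assigning $1$ to $w_k$ and $0$ to every other vertex of $C_j$; since $w_1\neq w_2$, we have $b_1\neq b_2$.

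What remains is to check that both labels $b_1,b_2$ are consistent with $a$ under the pairwise clique rules of Section~\ref{sec:CLP}. Rule~2 (no edge with both endpoints coloured $1$) is immediate, since the only vertices assigned colour $1$ are $v\in C_i$ and $w_k\in C_j$, and they are non-adjacent by construction. For Rule~1 (shared vertices must receive the same colour from both sides), I argue $w_k\notin C_i$: otherwise, since $w_k\neq v$ (because $v\notin C_j\ni w_k$), the pair $\{w_k,v\}$ would lie in the clique $C_i$ and therefore be adjacent, contradicting the chosen non-adjacency. Hence every vertex of $C_i\cap C_j$ is coloured $0$ by both sides, and Rule~1 holds. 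Consequently $(C_i,a,C_j,b_1)$ and $(C_i,a,C_j,b_2)$ both belong to $\mathcal{R}(\mathcal{G}^{(n,\omega)})$, establishing the observation. I do not anticipate a serious obstacle; the heart of the argument is the one-line dimension count delivered by the orthogonal-representation observation, and the only mildly subtle point is the verification $w_k\notin C_i$, which is forced automatically by the clique structure.
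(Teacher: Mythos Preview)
The paper states this as an observation without proof, so there is no explicit argument to compare against. Your approach is natural and the consistency checks for Rules~1 and~2 are clean, but the central inequality---that $v\in C_i\setminus C_j$ has at most $\omega-2$ neighbours in $C_j$---rests on the remark at the end of Section~\ref{sec:CLP}, whose hypothesis is that the graph admits a faithful orthogonal representation \emph{in dimension $\omega$}. Conditions (\textbf{G0})--(\textbf{G1}) do not guarantee this: by Eq.~\eqref{eq:for} one only has $\omega\le d_{\mathbb C}$, and the assumption $d_{\mathbb C}=\omega$ is imposed only later (via (\textbf{G2}) in Section~\ref{subsubsec:directq} and the explicit hypotheses of Theorems~\ref{theo:unbounded} and~\ref{theo:dntod2nlowerbound}).

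Concretely, your claim that \emph{any} two distinct maximum cliques work already fails for $\mathcal G_{\mathrm{Paley}(5)}$ (the $5$-cycle, $\omega=2$, $d_{\mathbb C}=3$), which satisfies (\textbf{G0})--(\textbf{G1}) by Observation~\ref{obs:paleyg1}: taking $C_i=\{1,2\}$, $C_j=\{2,3\}$ and $v=1$, the only non-neighbour of $v$ in $C_j$ is $3$, so exactly one consistent label exists for that pair. The statement is still true for this graph (choose the disjoint edges $\{1,2\}$ and $\{3,4\}$ instead), but under (\textbf{G0})--(\textbf{G1}) alone you need an existence argument that \emph{selects} a suitable pair $C_i,C_j$, not the uniform $\omega-2$ bound. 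To repair the argument you should either add the standing assumption $d_{\mathbb C}=\omega$ (under which your proof is correct for every pair of cliques), or give a separate argument extracting the required pair directly from (\textbf{G0})--(\textbf{G1}).
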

\end{enumerate}

 Given an orthogonality graph $\mathcal{G}^{(n,\omega)}$ satisfying the properties listed above, for the relation $\mathcal{R}(\mathcal{G}^{(n,\omega)})$ we prove a tight lower bound for classical S-CC$\mathcal{R}(\mathcal{G}^{(n,\omega)})$. This bound is calculated for the \textit{zero-error} scenario in which Bob never outputs an outcome $b$ such that the tuple consisting of Alice's and Bob's input, $(C_x,a)$ and $C_y$ respectively, and Bob's output does not belong to the relation $\mathcal{R}(\mathcal{G}^{(n,\omega)})$, i.e., the case $(C_x,a,C_y,b)\notin \mathcal{R}(\mathcal{G}^{(n,\omega)})$ does not occur. 

\begin{lemma}\label{theo:D1classicalcomm}
Given a graph $\mathcal{G}^{(n,\omega)}$ satisfying {(\bfseries G0)}-{(\bfseries G1)}, the zero-error classical strong communication complexity of the relation $\mathcal{R}(\mathcal{G}^{(n,\omega)})$ is 
$\log_2 |\mathcal{V}|$ bit, where $|\mathcal{V}|$ is the order of the graph.
\end{lemma}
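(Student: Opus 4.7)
The plan is to prove matching upper and lower bounds of $\log_2|\mathcal{V}|$ bits.

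For the upper bound I describe an explicit protocol. On input $(C_x,a)$, Alice uses the bijection in Definition~\ref{def:cliqueclabel} to identify the unique vertex $v^\star\in C_x$ that her clique labelling assigns colour~$1$ and transmits the index of $v^\star$, using $\log_2|\mathcal{V}|$ bits. Receiving $v^\star$ and his own input $C_y$, Bob outputs the position of $v^\star$ in $C_y$ deterministically when $v^\star\in C_y$, and otherwise samples uniformly over the positions in $C_y$ of vertices non-adjacent to $v^\star$ (non-empty by the observation at the end of Section~\ref{sec:CLP}). A direct check against constraints~1 and~2 confirms that this output set equals $\{b : (C_x,a,C_y,b)\in\mathcal{R}(\mathcal{G}^{(n,\omega)})\}$, so (T0) and (T1) are both satisfied.

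For the lower bound the starting observation is that the valid output set $\{b:(C_x,a,C_y,b)\in\mathcal{R}(\mathcal{G}^{(n,\omega)})\}$ depends on $(C_x,a)$ only through $v^\star(C_x,a)$, and in particular is the singleton $\{b_{v^\star}\}$ whenever $v^\star\in C_y$. My goal is to show that any zero-error private-coin protocol satisfying (T1) must employ at least $|\mathcal{V}|$ distinct messages. Associate to each $v\in\mathcal{V}$ a maximum clique $C_v\ni v$ (guaranteed by (G0)) and the \emph{critical query} $(C_v,b_v)$; by (T1) some message $m_v$ in Alice's emission support on a $v^\star=v$ input must satisfy $b_v\in B(m_v,C_v)$, and by (T0) combined with the singletons $\mathrm{Valid}(v',C_v)=\{b_{v'}\}$ for $v'\in C_v$, no vertex of $C_v\setminus\{v\}$ can belong to the set $V(m_v)$ of vertices whose inputs also emit $m_v$.

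To extend this clique-local disjointness into the global bound $|M|\geq|\mathcal{V}|$, I use (G1) to construct distinguishing witnesses between distinct $v,v'$ in different maximum cliques. By (G1) there is $u\in\mathcal{V}$ adjacent to exactly one of them; say $u\sim v$ and $u\not\sim v'$. When $u\neq v'$, a maximum clique $C_y\ni u$ with $v'\notin C_y$ is guaranteed to exist, since otherwise every maximum clique through $u$ would contain $v'$, forcing $u\sim v'$; for this $C_y$ the position of $u$ is valid for $v'$-type inputs but violates constraint~2 via the edge $uv$ for $v$-type inputs. In the remaining case $u=v'$ one has $v\sim v'$; choosing $C_y\ni v'$, the position of $v'$ is forced for $v'$-inputs by constraint~1 yet forbidden for $v$-inputs by constraint~1 (when $v\in C_y$) or by the edge $vv'$ via constraint~2 (when $v\notin C_y$). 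In either case, sharing the critical message $m_v=m_{v'}$ would force Bob to produce, via (T1) on one side, an output that (T0) on the other side excludes.

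The chief technical obstacle is combining this distinguishing argument with the randomised nature of (T1), which requires only \emph{some} message in the support to carry each valid label rather than all of them; one therefore needs to apply the distinguishing witness specifically against the message producing $b_v$ at the critical query, rather than using a bulk support-intersection bound, and the degenerate sub-case $u=v'$ must be treated separately. Once the critical messages $\{m_v\}_{v\in\mathcal{V}}$ are shown to be pairwise distinct, the count $|M|\geq|\mathcal{V}|$ gives a cost of at least $\log_2|\mathcal{V}|$ bits, which combined with the upper-bound protocol yields the claimed equality.
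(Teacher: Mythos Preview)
Your upper bound is correct and essentially identical to the paper's: Alice transmits the index of the vertex her label marks, and Bob samples over the consistent responses.

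The lower-bound argument, however, has a genuine gap. Your ``critical message'' condition $b_v\in B(m_v,C_v)$ is automatically satisfied by \emph{every} message in $S(v)$, because for any $m\in S(v)$ condition (T0) forces $B(m,C_v)\subseteq\mathrm{Valid}(v,C_v)=\{b_v\}$ and Bob must output something. So $m_v$ is just an arbitrary element of $S(v)$, and you are really claiming that arbitrary representatives of the $S(v)$'s are pairwise distinct, i.e.\ that the supports $S(v)$ are pairwise \emph{disjoint}. Your (G1) witness does not establish this. The witness $(C_y,b_u)$ with $b_u\in\mathrm{Valid}(v',C_y)\setminus\mathrm{Valid}(v,C_y)$, combined with (T1) for $v'$ and (T0) for $v$, yields only that \emph{some} message $m^\ast\in S(v')$ with $b_u\in B(m^\ast,C_y)$ lies outside $S(v)$ --- that is, $S(v')\not\subseteq S(v)$. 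It says nothing about the particular representatives $m_v,m_{v'}$ you fixed via the unrelated queries $(C_v,b_v)$ and $(C_{v'},b_{v'})$.

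Concretely, take $\mathcal{G}^{(2,2)}_{disc.}$ with $C_1=\{v_1,v_2\}$, $C_2=\{v_3,v_4\}$. The four-message protocol with $W(m_1,C_1)=\{v_1\},\,W(m_1,C_2)=\{v_3\}$; $W(m_2,C_1)=\{v_1\},\,W(m_2,C_2)=\{v_4\}$; $W(m_3,C_1)=\{v_2\},\,W(m_3,C_2)=\{v_3\}$; $W(m_4,C_1)=\{v_2\},\,W(m_4,C_2)=\{v_4\}$ and supports $S(v_1)=\{m_1,m_2\}$, $S(v_3)=\{m_1,m_3\}$ satisfies (T0)--(T1). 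Here $m_1\in S(v_1)\cap S(v_3)$, so the choice $m_{v_1}=m_{v_3}=m_1$ is permitted by your definition. Your witness $u=v_2$ correctly identifies $m_3\in S(v_3)\setminus S(v_1)$, but this does not preclude $m_{v_1}=m_{v_3}$.

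To close the gap you would need to prove that the family $\{S(v)\}_{v\in\mathcal{V}}$ admits a system of distinct representatives (verify Hall's condition using the full strength of (T0)--(T1) rather than a single pairwise witness), or abandon the per-vertex critical-message map and argue the cardinality bound by a different route. The paper's own treatment of the private-coin case is brief and proceeds by reducing to the deterministic argument, observing that Bob's decoding $B(m,C_y)$ is fixed independently of Alice's randomness; you may find it cleaner to argue the deterministic case first (where $S(v)$ is a singleton and your witness argument \emph{does} force the encodings to differ) and then carefully lift to private coins.
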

\begin{proof}
    See appendix \ref{app:prooflemma1} for the proof.
\end{proof}

 Thus, the zero-error classical S-CC$\mathcal{R}(\mathcal{G}^{(n,\omega)})$ scales linearly with the number of vertices in the graph. In the next section, we calculate sufficient quantum communication that accomplishes the same task when no public coin is allowed between Alice and Bob. We also show, that there exists an unbounded gap between quantum and classical resources when no public coin is pre-shared between the two parties. This separation is observed for a sub-class of graphs considered in this section.

\subsection{Unbounded quantum advantage in {\it relation reconstruction}}\label{subsubsec:directq}
In this subsection, we first calculate the amount of quantum communication sufficient for accomplishing reconstruction of relation $\mathcal{R}(\mathcal{G}^{(n,\omega)})$ when considering the class of orthogonality graphs $\mathcal{G}^{(n,\omega)}$ that also satisfy {(\bfseries G0)}-{(\bfseries G1)}.

\begin{lemma}
 \label{theo:quantumcommunication}
 Given a graph $\mathcal{G}^{(n,\omega)}$ satisfying {(\bfseries G0)}-{(\bfseries G1)} with faithful orthogonal range $d_{\C}$, 
the zero-error quantum strong communication complexity of relation $\mathcal{R}(\mathcal{G}^{(n,\omega)})$ is upper bounded by $\log_2 d_{\C}$ qubits.
\end{lemma}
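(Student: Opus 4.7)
The plan is to exhibit a one-way quantum protocol that uses a message system of operational dimension $d_{\C}$ and achieves zero-error relation reconstruction. The central ingredient will be a faithful orthonormal representation $\phi:\mathcal{V}\to\mathbb{C}^{d_{\C}}$ of $\mathcal{G}^{(n,\omega)}$ in its minimum dimension, which Alice and Bob agree upon beforehand. On input $(C_x,a)$ Alice identifies the unique vertex $v^{*}\in\mathcal{V}_{C_x}$ which the clique labelling $a$ assigns colour $1$ (via definition~\ref{def:cliqueclabel}) and transmits the pure state $|\phi(v^{*})\rangle\in\mathbb{C}^{d_{\C}}$. On input $C_y$ Bob performs a projective measurement in an orthonormal basis of $\mathbb{C}^{d_{\C}}$ obtained by extending the $\omega$-element orthonormal set $\{|\phi(u)\rangle:u\in\mathcal{V}_{C_y}\}$ with $d_{\C}-\omega$ additional unit vectors $\{|e_j\rangle\}$; whenever his outcome is $|\phi(u)\rangle$ for some $u\in\mathcal{V}_{C_y}$, he outputs the clique label of $C_y$ that assigns colour $1$ to $u$.

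The first step will be to verify conditions~({\bfseries T0}) and~({\bfseries T1}) for these $\omega$ clique outcomes. By faithfulness of $\phi$, the overlap $\langle\phi(u)|\phi(v^{*})\rangle$ vanishes exactly when $u\ne v^{*}$ and $u\sim v^{*}$, so outcome $|\phi(u)\rangle$ can occur with positive probability only when either $u=v^{*}$ (forcing $v^{*}\in\mathcal{V}_{C_y}$, which matches rule~1 of the Consistent Labelling of Pairwise Cliques) or $u\ne v^{*}$ with $u\not\sim v^{*}$ (so rule~2 is not violated and rules~1 and~3 are automatic); in either case the tuple $(C_x,a,C_y,b)$ emitted lies in $\mathcal{R}(\mathcal{G}^{(n,\omega)})$, giving~({\bfseries T0}). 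For~({\bfseries T1}) the same computation runs in reverse: each valid $b$ corresponds to a vertex $u_b\in\mathcal{V}_{C_y}$ with $u_b=v^{*}$ or $u_b\not\sim v^{*}$, so $\langle\phi(u_b)|\phi(v^{*})\rangle\ne 0$ and outcome $|\phi(u_b)\rangle$ has strictly positive probability.

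The harder step will be to specify Bob's output on the remaining extension basis vectors $|e_j\rangle$ without violating~({\bfseries T0}). When $v^{*}\in\mathcal{V}_{C_y}$ the state $|\phi(v^{*})\rangle$ already lies in the span of the clique vectors and every extension outcome has probability zero; but when $v^{*}\notin\mathcal{V}_{C_y}$ the projection $Q|\phi(v^{*})\rangle$ onto the extension subspace $\mathcal{H}_Q:=\mathrm{span}\{|\phi(u)\rangle:u\in\mathcal{V}_{C_y}\}^{\perp}$ may be non-zero. My plan is to choose the basis $\{|e_j\rangle\}$ and a label $l_j$ for each $|e_j\rangle$ so that $|e_j\rangle$ lies, inside $\mathcal{H}_Q$, in the orthogonal complement of every $Q|\phi(v)\rangle$ with $v\in\mathcal{V}\setminus\mathcal{V}_{C_y}$ and $v\sim u_{l_j}$. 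The observation after Sec.~\ref{sec:CLP}, that no vertex outside a maximum clique of size $\omega$ is adjacent to all of its members, guarantees that at least one label is valid for every $v^{*}\notin\mathcal{V}_{C_y}$ and so each candidate orthogonal complement is non-trivial.

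The main obstacle is therefore the simultaneous existence of such an orthonormal extension and label assignment. I expect to reduce it to showing that the intersection, over $l\in\{0,\dots,\omega-1\}$, of the subspaces $W_l:=\mathrm{span}\{Q|\phi(v)\rangle:v\in\mathcal{V}\setminus\mathcal{V}_{C_y},\,v\sim u_l\}$ inside $\mathcal{H}_Q$ is trivial, for then $\sum_l W_l^{\perp}=\mathcal{H}_Q$ and the extension basis can be assembled one label at a time. Conditions~({\bfseries G0}) and~({\bfseries G1}) will be used here: any vector in $\bigcap_l W_l$ corresponds, via the representation, to a direction that is inseparable from every neighbourhood-defined subspace, which~({\bfseries G1}) rules out by producing, for any two vertices belonging to different maximum cliques, a vertex adjacent to exactly one of them and therefore distinguishing the projections $Q|\phi(v)\rangle$ appearing in different $W_l$'s; condition~({\bfseries G0}) ensures every vertex is anchored in some maximum clique and thus contributes a representation vector that may be invoked in the argument. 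Once the trivial-intersection claim is in place, Bob's $d_{\C}$-dimensional projective measurement satisfies both~({\bfseries T0}) and~({\bfseries T1}) and the quantum message uses exactly $\log_2 d_{\C}$ qubits, yielding the stated upper bound.
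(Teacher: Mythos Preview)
Your encoding and the verification of ({\bfseries T0})--({\bfseries T1}) on the $\omega$ clique outcomes coincide exactly with the paper's protocol; the paper's proof simply has Bob measure with the projectors $\{|\phi(u)\rangle\langle\phi(u)|:u\in\mathcal{V}_{C_y}\}$ and does not discuss what happens on the residual subspace at all. You are right to flag that when $d_{\C}>\omega$ these projectors do not sum to the identity and some extension outcome must be labelled; this is a point the paper's terse argument leaves unaddressed.

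Your proposed resolution, however, has two concrete gaps. First, even if $\bigcap_l W_l=\{0\}$ held, the consequence $\sum_l W_l^\perp=\mathcal{H}_Q$ does \emph{not} furnish an orthonormal basis of $\mathcal{H}_Q$ with each vector in some $W_l^\perp$: already in $\mathbb{C}^2$, two distinct one-dimensional non-orthogonal lines sum to the whole space but admit no such basis, so ``assembling one label at a time'' can get stuck. Second, the intersection claim itself is not proved; invoking ({\bfseries G1}) gives a combinatorial separation of neighbourhoods but says nothing about linear dependencies among the projected vectors $Q|\phi(v)\rangle$. A clean witness is the $5$-Paley graph $C_5$, which satisfies ({\bfseries G0})--({\bfseries G1}) with $\omega=2$ and $d_{\C}=3$: for $C_y=\{v_1,v_2\}$ the space $\mathcal{H}_Q$ is one-dimensional, while $W_0=\mathrm{span}\{Q|\phi(v_5)\rangle\}$ and $W_1=\mathrm{span}\{Q|\phi(v_3)\rangle\}$ are both nonzero (faithfulness forbids $|\phi(v_3)\rangle$ or $|\phi(v_5)\rangle$ from lying in $\mathrm{span}\{|\phi(v_1)\rangle,|\phi(v_2)\rangle\}$), so $W_0=W_1=\mathcal{H}_Q$ and no label for the single extension vector $|e\rangle$ is consistent with ({\bfseries T0}). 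Thus the argument cannot be completed along these lines; the bound is unproblematic precisely when $d_{\C}=\omega$, where the clique projectors already form a complete PVM and the issue does not arise --- which is also the regime used in the paper's main separation (Theorem~\ref{theo:unbounded}).
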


\begin{proof}
 Alice and Bob are aware of the graph $\mathcal{G}^{(n,\omega)}$ and a faithful orthogonal representation of the graph in dimension $d_{\C}$ before the task. When Alice has access to her input $(C_x,a)$, then she finds the vertex in the maximum clique $C_x$ that is assigned value $1$ by the input clique label $a$. Alice prepares a qudit in the state associated with the orthogonal representation of this vertex and sends the qudit to Bob. Bob then performs a measurement associated with his maximum clique $C_y$. The projectors of the measurement correspond to the orthogonal representation of the vertices in the maximum clique $C_y$. Based on the measurement outcome, which corresponds to some vertex in the maximum clique $C_y$, Bob outputs as his label $b$ that assigns this vertex binary colour $1$. The quantum strategy guarantees consistent labelling of maximum cliques stated equivalently as $\mathcal{R}(\mathcal{G}^{(n,\omega)})$ due to the orthogonal representation. This concludes the proof.
\end{proof}

Now we are in a position to show that there are classes of graphs that give rise to quantum advantage. Lemma \ref{theo:D1classicalcomm}
and Lemma \ref{theo:quantumcommunication} lead us to the following theorem, where we show the condition which guarantees quantum advantage in relation reconstruction task.

\begin{theo}
 \label{theo:qadvantage}
  For any graph $\mathcal{G}^{(n,\omega)}$ satisfying {(\bfseries G0)}-{(\bfseries G1)} with faithful orthogonal range $d_{\C}$, there exists quantum advantage in relation reconstruction while considering $\mathcal{R} (\mathcal{G}^{(n,\omega)})$ whenever $d_{\C}<|\mathcal{V}|$.
\end{theo}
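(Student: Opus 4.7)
The plan is to obtain the theorem as an almost immediate consequence of the two preceding lemmas, once the classical and quantum resources are compared on the common footing of operational dimension. Given a graph $\mathcal{G}^{(n,\omega)}$ satisfying conditions (\textbf{G0})--(\textbf{G1}), Lemma \ref{theo:D1classicalcomm} already provides the sharp lower bound: any zero-error classical protocol for the relation-reconstruction task on $\mathcal{R}(\mathcal{G}^{(n,\omega)})$ must communicate a classical system of operational dimension at least $|\mathcal{V}|$, i.e. the classical S-CCR equals $\log_2|\mathcal{V}|$ bits. Dually, Lemma \ref{theo:quantumcommunication} supplies the matching quantum upper bound by exhibiting the explicit prepare-and-measure protocol based on the faithful orthogonal representation in $\mathbb{C}^{d_{\mathbb{C}}}$, so that a message of operational dimension $d_{\mathbb{C}}$ suffices quantumly.

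The theorem then follows by directly comparing these two operational dimensions. First I would recall the notion of operational dimension introduced in Sec.~\ref{sec:not}, which is the appropriate currency for comparing a classical $|\mathcal{V}|$-level channel against a qudit channel of Hilbert-space dimension $d_{\mathbb{C}}$; both lemmas are stated in exactly this language, so no further translation is needed. Next, under the hypothesis $d_{\mathbb{C}} < |\mathcal{V}|$, the quantum protocol from Lemma \ref{theo:quantumcommunication} succeeds in relation reconstruction using an operational dimension strictly smaller than the classical optimum guaranteed by Lemma \ref{theo:D1classicalcomm}, which by definition constitutes a quantum advantage in S-CCR for $\mathcal{R}(\mathcal{G}^{(n,\omega)})$.

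To make the separation quantitative and align it with the abstract's ``unbounded'' claim, I would emphasise that the gap is $\log_2(|\mathcal{V}|/d_{\mathbb{C}})$ and hence grows without bound along any family of graphs in which $|\mathcal{V}|\to\infty$ while $d_{\mathbb{C}}$ stays fixed (or grows much more slowly), e.g.\ the families constructed via Proposition \ref{prop:lovasz} where the complement stays highly connected. A brief sentence should also confirm that such graphs actually exist within the (\textbf{G0})--(\textbf{G1}) class, so the theorem is non-vacuous; this non-vacuity is the only substantive thing to check, and it is pinned down by the Lovász general-position construction already cited.

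Since the heavy lifting is entirely inside Lemmas~\ref{theo:D1classicalcomm} and \ref{theo:quantumcommunication}, I do not anticipate a genuine obstacle in the proof itself; the only point that requires minor care is ensuring that the comparison is indeed between operational dimensions rather than, say, raw message cardinalities or Hilbert-space embedding dimensions, and that ``no public coin'' is maintained throughout (since the classical lower bound of Lemma \ref{theo:D1classicalcomm} was established precisely in that setting). Once those caveats are noted, the proof reduces to a one-line combination of the two bounds under the stated hypothesis.
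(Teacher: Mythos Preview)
Your proposal is correct and matches the paper's own proof essentially verbatim: the paper simply states that the theorem follows directly from comparing Lemma~\ref{theo:D1classicalcomm} and Lemma~\ref{theo:quantumcommunication}. The additional remarks you make about operational dimension, the quantitative gap, and non-vacuity are reasonable elaborations but go beyond what the paper itself records for this theorem (the unbounded separation and explicit families are deferred to Theorem~\ref{theo:unbounded}).
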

\begin{proof}
The proof of this theorem follows directly from comparing Lemma \ref{theo:D1classicalcomm} and Lemma \ref{theo:quantumcommunication}. 
\end{proof}

The problem of finding the smallest dimension in which a given graph $\mathcal{G}^{(n,\omega)}$ has a (faithful) orthogonal representation is known to be quite difficult \cite{lovasz1989orthogonal, maehara1990dimension}. Since the existence of quantum advantage relies on the faithful orthogonal range being smaller than the order of the graph (theorem \ref{theo:qadvantage}), it follows that the problem of defining the set of graphs that entail quantum advantage is at least as complex as providing a non-trivial upper bound to the faithful orthogonal range for any arbitrary graph. Despite this difficulty, one can identify some families of graphs that are useful for demonstrating an unbounded separation between classical and quantum S-CC$\mathcal{R}(\mathcal{G}^{(n,\omega)})$. Let us consider graphs $\mathcal{G}^{(n,\omega)}$ that satisfy the following condition along with {\bfseries (G0) - (G1)}:
\begin{enumerate}
\item[(\bfseries G2):] At least $(V-k)$ vertices are required to be removed from the graph, where $k\in \mathbb{N}$ and $\omega\leq k< |\mathcal{V}|$, such that the complementary graph is fully disconnected.
\end{enumerate}
The following are the example of such graphs $\mathcal{G}^{(n,\omega)}$ satisfying the conditions {\bfseries (G0) - (G2)}:
\begin{itemize}
    \item[(1)] {\bf Disconnected graphs $\mathcal{G}_{disc.}^{(n,\omega)}$:} Graphs with $n$ maximum cliques of size $\omega$ all of which are disconnected from one another. Thus, we have $|\mathcal{V}|=n\omega$. See Fig. \ref{fig:examples} (a) for an example.
    \item[(2)] {\bf Nearest Neighbour Connected Cliques $\mathcal{G}_{NNCC(r)}^{(n,\omega)}$:} Graph with a chain of $n$ maximum cliques of size $\omega$ such that only maximum clique $C_i$ and $C_{i+1}$ share $r$ ($1\leq r<\frac{\omega}{2}$) vertices where $i\in\{1,2,\cdots,n-1\}$. The rest of the maximum cliques do not share any additional vertices and edges. See Fig. \ref{fig:examples} (b) for an example.
    \item[(3)] {\bf Paley graphs $\mathcal{G}_{Paley(q)}$:} The class of Paley graphs. (See subsection \ref{SCLP:other})
\end{itemize}

\begin{figure}[h]
        \centering
    \begin{DiagramV}[0.8]{0}{0}
    \begin{move}{0,0} 
    \fill[black] (0,0) circle (0.25);
    \fill[black] (6,0) circle (0.25);
    \fill[black] (3,5.196) circle (0.25);
    \draw (0,0) -- (6,0) --(3,5.196)--(0,0);
    \draw (3,5.196/2 -0.5) node{\tiny $C_1$};
    \fill[black] (3+7,5.196) circle (0.25);
    \fill[black] (6+7,0) circle (0.25);
    \fill[black] (+7,0) circle (0.25);
    \draw (0+7,0) -- (6+7,0) --(3+7,5.196) --(0+7,0);
    \draw (3+7,5.196/2 -0.5) node {\tiny $C_2$};
    %
    \fill[gray] (6+14,0) circle (0.25);
    \fill[gray] (6+8,0) circle (0.25);
    \fill[gray] (3+14,5.196) circle (0.25);
    \draw [gray,ultra thin] (0+14,0) -- (6+14,0) --(3+14,5.196) --(0+14,0);
    \draw [gray,ultra thin] (3+14,5.196/2 -0.5) node { \tiny$C_3$};
    \fill[gray] (6+16,0) circle (0.25);
    \fill[gray] (6+18,0) circle (0.25);
    \fill[gray] (6+14,0) circle (0.25);
    \fill[gray] (6+24,0) circle (0.25);
    \fill[gray] (3+24,5.196) circle (0.25);
    \draw [gray,ultra thin] (0+24,0) -- (6+24,0) --(3+24,5.196) --(0+24,0);
    \draw [gray,ultra thin] (3+24,5.196/2 -0.5) node {\tiny $C_{n-1}$};
    \fill[black] (0+31,0) circle (0.25);
    \fill[black] (6+31,0) circle (0.25);
    \fill[black] (3+31,5.196) circle (0.25);
    \draw (0+31,0) -- (6+31,0) --(3+31,5.196) --(0+31,0);
    \draw (3+31,5.196/2 -0.5) node {\tiny $C_n$};
    \end{move}
    \end{DiagramV}    \\
    \text{(a) Disconnected Graphs for $\omega=3$}\\
     \begin{DiagramV}[0.8]{0}{0}
    \begin{move}{0,0} 
    \fill[black] (0,0) circle (0.25);
    \fill[black] (6,0) circle (0.25);
    \fill[black] (3,5.196) circle (0.25);
    \draw (0,0) -- (6,0) --(3,5.196)--(0,0);
    \draw (3,5.196/2 -0.5) node{\tiny $C_1$};
    \fill[black] (3+6,5.196) circle (0.25);
    \fill[black] (6+6,0) circle (0.25);
    \draw (0+6,0) -- (6+6,0) --(3+6,5.196) --(0+6,0);
    \draw (3+6,5.196/2 -0.5) node {\tiny $C_2$};
    %
    \fill[gray] (6+12,0) circle (0.25);
    \fill[gray] (3+12,5.196) circle (0.25);
    \draw [gray,ultra thin] (0+12,0) -- (6+12,0) --(3+12,5.196) --(0+12,0);
    \draw [gray,ultra thin] (3+12,5.196/2 -0.5) node { \tiny$C_3$};
    \fill[gray] (6+16,0) circle (0.25);
    \fill[gray] (6+18,0) circle (0.25);
    \fill[gray] (6+14,0) circle (0.25);
    \fill[gray] (6+24,0) circle (0.25);
    \fill[gray] (3+24,5.196) circle (0.25);
    \draw [gray,ultra thin] (0+24,0) -- (6+24,0) --(3+24,5.196) --(0+24,0);
    \draw [gray,ultra thin] (3+24,5.196/2 -0.5) node {\tiny $C_{n-1}$};
    \fill[black] (0+30,0) circle (0.25);
    \fill[black] (6+30,0) circle (0.25);
    \fill[black] (3+30,5.196) circle (0.25);
    \draw (0+30,0) -- (6+30,0) --(3+30,5.196) --(0+30,0);
    \draw (3+30,5.196/2 -0.5) node {\tiny $C_n$};
    \end{move}
    \end{DiagramV}  
    \text{(b) Graphs with Nearest Neighbour Connected Cliques}\\
    \text{for $\omega=3$ and $r=1$}
    \caption{An example for $\mathcal{G}^{(n,\omega=3)}$ of disconnected graphs (top (a) ) and graphs with nearest neighbour connected cliques (bottom (b) ).}
    \label{fig:examples}
    \end{figure}

 Having shown that for the relation $\mathcal{R}(\mathcal{G}^{(n,\omega)})$ there is a class of graphs showing a quantum advantage in relation reconstruction, we now address the question regarding the extent to which the separation between these resources can be extended.

\begin{theo}
\label{theo:unbounded}
 For the class of graphs $\mathcal{G}^{(n,\omega)}$ satisfying conditions {(\bfseries G0)}-{(\bfseries G2)} with faithful orthogonal range $\omega$, the separation between one-way zero-error classical and quantum S-CC$\mathcal{R}(\mathcal{G}^{(n,\omega)})$ is unbounded. 
\end{theo}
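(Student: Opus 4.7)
The plan is to combine the two preceding lemmas and then exhibit an infinite sub-family of graphs inside the hypothesis class for which the ratio $|\mathcal{V}|/\omega$ grows without bound. From Lemma~\ref{theo:D1classicalcomm}, any graph satisfying {(\bfseries G0)}-{(\bfseries G1)} has classical one-way zero-error S-CC$\mathcal{R}(\mathcal{G}^{(n,\omega)})$ equal to $\log_2 |\mathcal{V}|$ bits; from Lemma~\ref{theo:quantumcommunication}, with the assumed faithful orthogonal range $d_{\C}=\omega$, its quantum S-CC$\mathcal{R}(\mathcal{G}^{(n,\omega)})$ is at most $\log_2\omega$ qubits. The separation is therefore at least $\log_2(|\mathcal{V}|/\omega)$, and the theorem reduces to producing a sequence of admissible graphs in which $|\mathcal{V}|/\omega$ diverges.

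For the witnesses I would use the disconnected-clique family $\mathcal{G}_{disc.}^{(n,\omega)}$ from example (1) above, fix any $\omega\geq 3$, and let $n\to\infty$. Here $|\mathcal{V}|=n\omega$, so the lower bound on the separation reads $\log_2 n$, which indeed diverges. Verifying {(\bfseries G0)} is immediate since every vertex sits in exactly one maximum clique by construction, and {(\bfseries G1)} is almost as easy: for $v\in C_i$ and $v'\in C_j$ with $i\neq j$, any third vertex in $C_i$ is adjacent to $v$ but, by disconnectedness of the cliques, not to $v'$.

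It remains to certify that the faithful orthogonal range of this family is indeed $\omega$. The lower bound $d_{\C}\geq\omega$ follows from the universal bound in Eq.~\eqref{eq:for}. For the matching upper bound I would invoke Proposition~\ref{prop:lovasz}: the complement $\bar{\mathcal{G}}_{disc.}^{(n,\omega)}$ is the complete $n$-partite graph $K_{\omega,\ldots,\omega}$; since a complete multipartite graph remains connected whenever at least two of its parts are non-empty, disconnecting it forces all but one part to be removed entirely, i.e.\ at least $(n-1)\omega=|\mathcal{V}|-\omega$ vertex deletions, which by Proposition~\ref{prop:lovasz} yields a general-position faithful orthogonal representation in $\R^{\omega}\subseteq\C^{\omega}$. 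This simultaneously shows that the family satisfies {(\bfseries G2)} with $k=\omega$.

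I do not anticipate a genuine obstacle beyond the vertex-cut count for the complement: once that combinatorial fact is in place the theorem is essentially immediate from the two lemmas, and the conclusion reads $\log_2 |\mathcal{V}|-\log_2\omega=\log_2 n\to\infty$. The same template applies to the NNCC graphs of example (2) and to the Paley graphs discussed in the following subsection, where the vertex-cut analysis in the complement is more delicate but remains the only non-routine input.
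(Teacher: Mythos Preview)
Your proposal is correct and follows essentially the same approach as the paper: combine Lemmas~\ref{theo:D1classicalcomm} and~\ref{theo:quantumcommunication}, exhibit a family with $|\mathcal{V}|$ growing in $n$ while $\omega$ is fixed, and certify $d_{\C}=\omega$ via Proposition~\ref{prop:lovasz}. The paper's proof uses the slightly more general NNCC family with $0\le r<\omega/2$ (of which your disconnected case is $r=0$), but the logic is identical; your verification of the vertex-cut count in the complement is in fact more explicit than the paper's, which simply asserts that $(n\omega-\omega)$ deletions are needed.
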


\begin{proof}
Consider the class of graphs $\mathcal{G}^{(n,\omega)}$ satisfying conditions {(\bfseries G0)}-{(\bfseries G2)} with $k=\omega$. 
For instance, a graph $\mathcal{G}^{(n,\omega)}$ with a chain of $n$ maximum cliques of size $\omega$ such that only maximum clique $C_i$ and $C_{i+1}$ share $r$ ($0\leq r<\frac{\omega}{2}$) vertices where $i\in\{1,2,\cdots,n-1\}$. The rest of the maximum cliques do not share any additional vertices or edges than defined above. Thus, the number of vertices of the graph is $|\mathcal{V}|=n(\omega - r) + r$.  

 From lemma \ref{theo:D1classicalcomm}, the zero-error classical S-CC$\mathcal{R}(\mathcal{G}^{(n,\omega)})$ is $\lceil \log_2 \{n(\omega - r) + r\}\rceil$ bit. On the other hand, Lemma \ref{theo:quantumcommunication} implies that protocols using quantum resources can achieve the same by communicating $\lceil \log_2 \omega\rceil$ qubits, provided the graph $\mathcal{G}^{(n,\omega)}$ has a faithful orthogonal range $d_{\C}=\omega$. 

According to Lov\'{a}sz's theorem (see Section \ref{sec:not}, Proposition \ref{prop:lovasz}) \cite{lovasz1989orthogonal} a faithful orthogonal representation of the graph $\mathcal{G}^{(n,\omega)}$ exists in dimension $d_{\R}=\omega$, since it is necessary to remove at least $(n\omega-\omega)$ vertices from the complementary graph $\Bar{\mathcal{G}}^{(n,\omega)}$ to make it completely disconnected. It also follows from Eq. (\ref{eq:for}) that for the graph $\mathcal{G}^{(n,\omega)}$, the faithful orthogonal range over complex field $d_{\C}=\omega$. As one can obtain such a faithful orthogonal representation of the graph $\mathcal{G}^{(n,\omega)}$ in dimension $d_{\C}=\omega$ and therefore the separation between classical ($\lceil \log_2 \{n(\omega - r) + r\}\rceil$ bit) and quantum ($\lceil \log_2 \omega\rceil$ qubits) communication can be made unbounded by considering large $n$.   
\end{proof}

Given any graph $\mathcal{G}^{(n,\omega)}$, having an orthogonal range $d_{\C}=\omega$ and satisfying conditions ({\bfseries G0})-({\bfseries G1}), the maximum payoff $\mathcal{P}_{\mathcal{R}(\mathcal{G}^{(n,\omega)})}$  achievable for relation reconstruction by direct quantum communication resource of operational dimension $\omega$ is connected to the optimal faithful orthogonal representation of the graph within dimension $d_{\C}$. To see this, notice that the maximum payoff for the quantum strategy is given by the maximisation of the minimum overlap of the vectors corresponding to any two disconnected vertices of the graph (following the same protocol as in Lemma \ref{theo:quantumcommunication}). 
So, keeping in mind the correspondence between quantum strategy and faithful orthogonal representation of the graph $\mathcal{G}^{(n,\omega)}$, one can rephrase the payoff (Eqn \ref{eqnpayoff}) with communication of $d=\omega$-dimensional quantum system, as an optimisation over the faithful orthogonal representations of the graph $\mathcal{G}^{(n,\omega)}$ in dimension $\omega$ on the complex field,
{\it i.e.}
\begin{align}\label{Eq:theta}
 \mathcal{P}_{\mathcal{R}(\mathcal{G}^{(n,\omega)})}^{\C^{\omega}_{max}}&=\min_{(C_x,a,C_y,b)\in \mathcal{R}(\mathcal{G}^{(n,\omega)})} P(b|C_x,C_y,a)\\
 &=\max_{FOR(\C^{\omega})}\Bigl\{\min_{(C_x,a,C_y,b)\in \mathcal{R}(\mathcal{G}^{(n,\omega)})} Tr[\Pi^{C_x}_a\Pi^{C_y}_b]\Bigr\}\\
 &=\max_{FOR(\C^{\omega})}\min_{(i,j)\notin \mathcal{E}} |\langle v(i), v(j)\rangle|^{2}  
\end{align}
where, $FOR(\C^{\omega})$ denotes the set of all faithful orthogonal representations  in dimension $\omega$ over complex field. This relation connects a property of the graph $\mathcal{G}^{(n,\omega)}$ (on the right) to an operational quantity (on the left).

\subsection{Quantum advantage in {\it relation reconstruction} for other graphs} \label{SCLP:other}
In this section, we will consider a particular class of orthogonality graphs $(\mathcal{G}^{(n,\omega)},\mathcal{V},\mathcal{E})$ called Paley graphs. This class of graphs have been well studied in graph theory \cite{elsawy2012paley} and has found applications in quantum information \cite{Naghipour2015,Gravier2013}. They satisfy the properties {\bfseries (G0)-(G1)} (see observation \ref{obs:paleyg1}).  Note that we already know that for graphs satisfying ({\bfseries G0})-({\bfseries G1}), the classical strong communication complexity increases with the order of the graph, {\it i.e.} $\log_2 |\mathcal{V}|$ bit (Lemma \ref{theo:D1classicalcomm}). Thus, graphs with orthogonal range strictly less than its order entails a quantum advantage in communication (following the same protocol described in the proof of Lemma \ref{theo:quantumcommunication}) when considering the one-way strong communication complexity of relation $\mathcal{R}(\mathcal{G}^{(n,\omega)})$. For the class of well-known Paley graphs, we will show that it has a faithful orthogonal representation in a dimension slightly more than half of the order of the graph (see Theorem \ref{theo:paleyg2}).

\subsubsection{Paley graphs}
 Paley graphs $\mathcal{G}_{Paley(q)}$ are simple undirected graphs whose vertices denote the elements of a finite field $\mathbb{F}_q$ (of order prime power $q  = 4k+1$ for positive integer $k$), and whose edges denote that the corresponding elements differ by a quadratic residue. Paley Graphs have the interesting property that they are vertex-transitive, self-complementary graphs which means that by Lov\'{a}sz's original result, the value of $\theta(\mathcal{G}_{Paley(q)})$ can be computed exactly to be $\theta(\mathcal{G}_{Paley(q)}) = |V(\mathcal{G}_{Paley(q)})|^{1/2} = \sqrt{q}$. 
 Some simple Paley graphs are shown in figure \ref{fig:paleygraph}. Next, we will show that the class of Paley graphs satisfy the condition {\bfseries (G1)}.

\begin{figure}[h]
    \centering
\begin{DiagramV}[1.5]{0}{0}
\begin{move}{0,0}
\fill[black] (0,1.5*2.01+0.6) circle (0.25);
\fill[black] (1.5*1.91,1.5*0.62+0.6) circle (0.25);
\fill[black] (1.5*1.18,1.5*-1.63+0.6) circle (0.25);
\fill[black] (1.5*.82-1.5*2,1.5*-1.63+0.6) circle (0.25);
\fill[black] (1.5*0.09-1.5*2,1.5*0.62+0.6) circle (0.25);
\draw (0,1.5*2.01+0.6) -- (1.5*1.91,1.5*0.62+0.6)--(1.5*1.18,1.5*-1.63+0.6) -- (1.5*.82-1.5*2,1.5*-1.63+0.6) -- (1.5*0.09-1.5*2,1.5*0.62+0.6) --(0,1.5*2.01+0.6) ;
\draw (0,-4) node {\large $\mathcal{G}_{Paley(\mathbb{F}_5)}$};
\end{move}
\end{DiagramV}
~~~
\begin{DiagramV}[1.5]{0}{0}
\begin{move}{0,0}
\fill[black] (-3,-1) circle (0.25);
\fill[black] (3,-1) circle (0.25);
\fill[black] (0,3*1.73-1) circle (0.25);
\fill[black] (-2.5,1.9*1.73-1) circle (0.25);
\fill[black] (2.5,1.9*1.73-1) circle (0.25);
\fill[black] (0,-0.7-1) circle (0.25);
\fill[black] (-1,1.4*1.73-1) circle (0.25);
\fill[black] (1,1.4*1.73-1) circle (0.25);
\fill[black] (0,+0.7-1) circle (0.25);
\draw (-3,-1) -- (0,0.7-1) -- (3,-1) -- (1,1.4*1.73-1)-- (0,3*1.73-1) --(-1,1.4*1.73-1) -- (-3,-1);
\draw (-3,-1) -- (0,-0.7-1) -- (3,-1) -- (2.5,1.9*1.73-1)-- (0,3*1.73-1) --(-2.5,1.9*1.73-1) -- (-3,-1);
\draw (1,1.4*1.73-1) -- (0,-0.7-1) -- (-1,1.4*1.73-1) -- (2.5,1.9*1.73-1) -- (0,0.7-1) --(-2.5,1.9*1.73-1) --(1,1.4*1.73-1);
\draw (0,-4) node {\large $\mathcal{G}_{Paley(\mathbb{F}_9)}$};
\end{move}
\end{DiagramV}
\caption{Example of the 5-Paley graph $\mathcal{G}_{Paley(\mathbb{F}_5)}$ (left) and the 9-Paley graph $\mathcal{G}_{Paley(\mathbb{F}_9)}$ (right).}
    \label{fig:paleygraph}
\end{figure}
\begin{obs}
\label{obs:paleyg1}
    In the class of Paley graphs, any two vertices in the graph have the same degree, i.e. in a graph with $q$ vertices, each vertex has $\frac{q-1}{2}$ neighbours. Every two adjacent vertices have $\frac{q-5}{4}$ common neighbours and every two non-adjacent vertices have $\frac{q-1}{4}$ common neighbours \cite{elsawy2012paley}. Thus, for every pair of different vertices $v,v'$ there exists a third vertex $u$ that is adjacent to exactly one of the vertices $v$ or $v'$. This implies that condition ({\bfseries G1}) is satisfied by Paley graphs.
\end{obs}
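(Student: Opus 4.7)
The first two sentences of the observation simply recall the standard strongly-regular parameters of Paley graphs, for which I would just cite \cite{elsawy2012paley} rather than reprove them. The real content that requires argument is the final assertion: that the parameters just recalled imply condition (\textbf{G1}). The plan is to establish the stronger statement that for \emph{every} pair of distinct vertices $v,v'\in\mathcal{V}$ there is a third vertex $u$ adjacent to exactly one of them, and then observe that (\textbf{G1}), which only asks this for pairs belonging to two different maximum cliques, is an immediate consequence.

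To verify the stronger statement I would fix distinct $v,v'\in \mathcal{V}$ and split into two cases according to whether $v\sim v'$ in $\mathcal{G}_{Paley(q)}$ or not. If $v\sim v'$, then among the $\frac{q-1}{2}$ neighbours of $v$ exactly one is $v'$ and exactly $\frac{q-5}{4}$ are common neighbours with $v'$; hence the number of neighbours of $v$ that are neither $v'$ nor neighbours of $v'$ is
\begin{equation*}
\frac{q-1}{2} - 1 - \frac{q-5}{4} \;=\; \frac{q-1}{4}.
\end{equation*}
If instead $v\not\sim v'$, then none of the $\frac{q-1}{2}$ neighbours of $v$ equals $v'$, and $\frac{q-1}{4}$ of them are common neighbours with $v'$, leaving
\begin{equation*}
\frac{q-1}{2} - \frac{q-1}{4} \;=\; \frac{q-1}{4}
\end{equation*}
neighbours of $v$ that are not neighbours of $v'$. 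In both cases this count equals $\frac{q-1}{4}=k$, which is at least $1$ whenever $q=4k+1\geq 5$, so a vertex $u$ adjacent to exactly one of $v,v'$ always exists.

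Restricting attention to pairs $v,v'$ that lie in two different maximum cliques then yields condition (\textbf{G1}) verbatim. I do not anticipate any real obstacle here: the argument is purely a counting calculation, and the only place where care is needed is remembering to exclude $v'$ itself from the neighbourhood of $v$ in the adjacent case, a subtlety that disappears in the non-adjacent case and which conveniently gives the same final count of $\frac{q-1}{4}$ distinguishing vertices in both regimes.
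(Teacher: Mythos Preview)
Your proposal is correct and follows the same route the paper takes: the observation in the paper simply records the strongly-regular parameters of $\mathcal{G}_{Paley(q)}$ and asserts the conclusion with a ``Thus,'' while you supply the explicit inclusion--exclusion count that makes that ``Thus'' rigorous. The case split and the arithmetic are both fine, and your remark that the stronger all-pairs statement specialises to (\textbf{G1}) is exactly the intended logic.
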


\subsubsection{Quantum advantage in 
S-CCR for Paley graphs}
 We will show that there exists a $FOR$ for Paley graph $\mathcal{G}_{Paley(q)}$ in dimension $\frac{q+1}{2}$ where $q$ is the order of the graph. Further, we show that
the quantum protocol achieves the maximum payoff $\frac{2}{\sqrt{q}+1}$ when following the protocol mentioned in Lemma \ref{theo:quantumcommunication}.\\

We note that $\theta(\mathcal{G}_{Paley(q)})$ can be computed using the semi-definite programming formulation given as
\begin{eqnarray*}
\label{eq:theta-Paley}
\theta(\mathcal{G}_{Paley(q)}) = \max_{M = (M_{i,j})_{i,j=1}^q} \sum_{i,j = 1}^q M_{i,j} \; 
\end{eqnarray*}
\begin{equation}
    \text{s.t.} \; M \succeq 0, \; \sum_{i} M_{i,i} = 1.
\end{equation}

Let $\Gamma_{Paley(q)}$ denote the automorphism group of $\mathcal{G}_{Paley(q)}$, i.e., the set of all permutations $\sigma$ that preserve the adjacency structure of the graph.
Suppose $M$ is an optimal solution point for the optimisation in \eqref{eq:theta-Paley}, then $M^* = \frac{1}{|\Gamma_{Paley(q)}|} \sum_{\sigma \in \Gamma_{Paley(q)}} \sigma^T M \sigma$ also satisfies the constraints of positive semi-definiteness, trace one and the sum over entries being equal to $\theta(\mathcal{G}_{Paley(q)})$. Since $\mathcal{G}_{Paley(q)}$ is vertex-transitive, the sum over permutations in $\Gamma_{Paley(q)}$ goes over transpositions between every pair of vertices so that $M^*_{i,i} = 1/q$ for all $i \in [q]$. $M^*$ is the Gram Matrix of a set of vectors (each of norm $1/\sqrt{q}$) forming an orthogonal representation of $\mathcal{G}_{Paley(q)}$. Let us denote by $S_{opt} = \big\{|u_1 \rangle, \ldots, |u_q \rangle \big\}$ the corresponding set of normalised vectors forming the optimal solution to the Lov\'{a}sz-theta optimisation, and by $M_{opt} = q M^*$ the corresponding Gram Matrix. We see that
\begin{equation}
\theta(\mathcal{G}_{Paley(q)}) = \sum_{i,j = 1}^q \frac{1}{q} \langle u_i | u_j \rangle.
\end{equation}
In other words, we have $\sum_{i,j=1}^q \langle u_i | u_j \rangle = q^{3/2}$. By symmetry and the fact that every vertex in $\mathcal{G}_{Paley(q)}$ has degree $(q-1)/2$ it also follows that $\langle u_i| u_j \rangle = (q^{3/2} - q)/(q(q-1)/2) =  2/(q^{1/2}+1)$ for $i \nsim j$.

Let us now compute the dimensionality of the vectors $|u_i \rangle$ in $S_{opt}$ that form the optimal representation giving rise to $\theta(\mathcal{G}_{Paley(q)})$. This quantity is the dimension of the vectors giving rise to the faithful representation $S_{opt}$ that is traditionally denoted as $\xi^*(\mathcal{G}_{Paley(q)})$. 
\begin{theo}
\label{theo:paleyg2}
The dimension of the optimal representation of $\mathcal{G}_{Paley(q)}$ that gives rise to $\theta(\mathcal{G}_{Paley(q)})$ is $(q+1)/2$.
\end{theo}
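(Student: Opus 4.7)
The plan is to identify the dimension of the optimal representation with the rank of the Gram matrix $M_{opt}$ already constructed in the paragraph preceding the theorem, and then compute that rank using the well-known spectrum of the Paley graph as a strongly regular graph. Since each $|u_i\rangle \in S_{opt}$ is a unit vector with $\langle u_i|u_j\rangle = 0$ for $i \sim j$ and $\langle u_i|u_j\rangle = 2/(\sqrt{q}+1)$ for $i \nsim j$, the Gram matrix admits the closed form
\begin{equation}
M_{opt} \;=\; I + \frac{2}{\sqrt{q}+1}\,(J - I - A),
\end{equation}
where $A$ is the adjacency matrix of $\mathcal{G}_{Paley(q)}$ and $J$ is the all-ones matrix. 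I would first justify this form by noting that symmetrising an arbitrary optimiser over the automorphism group $\Gamma_{Paley(q)}$ leaves the entries constant on each of the three orbits (diagonal, edges, non-edges); the coefficients of $I$, $J$ and $A$ are then pinned down by the unit diagonal and the value of $\langle u_i|u_j\rangle$ on non-edges computed just above the theorem.

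Next, I would invoke the fact that the Paley graph is strongly regular with parameters $(q,(q-1)/2,(q-5)/4,(q-1)/4)$, so that $A$ has three eigenvalues: $(q-1)/2$ on the all-ones vector $\mathbf{1}$ and $(-1 \pm \sqrt{q})/2$ each with multiplicity $(q-1)/2$ on $\mathbf{1}^{\perp}$. Since $I$, $J$, and $A$ share an eigenbasis (with $J$ acting as $q$ on $\mathbf{1}$ and as $0$ on $\mathbf{1}^{\perp}$), substituting these into the explicit formula for $M_{opt}$ yields its spectrum directly.

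The key calculation is to verify the three eigenvalues of $M_{opt}$. On $\mathbf{1}$ it gives $1+\frac{2}{\sqrt{q}+1}\cdot\frac{q-1}{2}=\sqrt{q}$, consistent with the Lov\'{a}sz value. For $\lambda_A=(-1+\sqrt{q})/2$ on $\mathbf{1}^\perp$ one gets $1-\frac{1+\sqrt{q}}{\sqrt{q}+1}=0$ exactly, while for $\lambda_A=(-1-\sqrt{q})/2$ one gets $1-\frac{1-\sqrt{q}}{\sqrt{q}+1}=\frac{2\sqrt{q}}{\sqrt{q}+1}>0$. Hence $M_{opt}$ has $1+(q-1)/2=(q+1)/2$ nonzero eigenvalues and a zero eigenvalue of multiplicity $(q-1)/2$, so $\mathrm{rank}(M_{opt})=(q+1)/2$. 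Because the rank of a Gram matrix equals the dimension of the span of the vectors it represents, this is precisely the dimension of the optimal faithful representation, establishing $\xi^*(\mathcal{G}_{Paley(q)})=(q+1)/2$.

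The main obstacle, as I see it, is the symmetrisation step: one must be explicit that after averaging an arbitrary optimiser $M$ over $\Gamma_{Paley(q)}$, the off-diagonal entries at every edge (resp.\ at every non-edge) genuinely coincide. For Paley graphs this is classical, since the automorphisms $x\mapsto a\,x^\sigma+b$ with $a$ a nonzero square and $\sigma$ a Frobenius power act transitively on ordered edges and on ordered non-edges, but it should be invoked rather than taken for granted. Once the three-parameter structure of $M_{opt}$ is in hand, the remaining work is a direct eigenvalue computation on a strongly regular graph and the algebraic identity $(1+\sqrt{q})/(\sqrt{q}+1)=1$, which makes the multiplicity of the zero eigenvalue of $M_{opt}$ come out to exactly $(q-1)/2$.
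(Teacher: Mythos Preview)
Your proposal is correct and follows the same overall strategy as the paper: identify the dimension with $\operatorname{rank}(M_{opt})$, write $M_{opt}$ as an affine function of the adjacency matrix (your $J-I-A$ is exactly the paper's $A(\overline{\mathcal G}_{Paley(q)})$), diagonalise, and count the nonzero eigenvalues. The eigenvalues you obtain, $\sqrt{q}$, $2\sqrt{q}/(\sqrt{q}+1)$, and $0$ with multiplicities $1$, $(q-1)/2$, $(q-1)/2$, match the paper's computation exactly.

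The only substantive difference is how the spectrum of $A$ is obtained. The paper rederives it from scratch via the quadratic-character matrix $K$, proving the identity $K^2=qI-J$ and then the quadratic relation $A^2=\tfrac{q-1}{4}(J+I)-A$. You instead invoke the standard fact that $\mathcal G_{Paley(q)}$ is strongly regular with parameters $(q,(q-1)/2,(q-5)/4,(q-1)/4)$, from which the eigenvalues $(-1\pm\sqrt{q})/2$ and their multiplicities are textbook. Your route is shorter and cleaner; the paper's is more self-contained. Your additional care about the symmetrisation step---pointing out that one needs transitivity on ordered edges and ordered non-edges, not just vertex-transitivity, for the off-diagonal entries of $M^*$ to be constant on each orbit---is a welcome refinement over the paper's somewhat elliptical ``by symmetry''.
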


\begin{proof}
    See appendix \ref{app:prooftheo4} for the proof.
\end{proof}

From equation \ref{eqnpayoff}, when using the protocol mentioned in Lemma \ref{theo:quantumcommunication}, the payoff function defined in for a graph $\mathcal{G}$ assumes the form shown below:
\begin{equation}
\label{eq:payoff}
\mathcal{P}_{\mathcal{R}(\mathcal{G})}^{{\C}^d}
= \max_{FOR_d(\mathcal{G})} \min_{(i,j) \notin E(\mathcal{G})} |\langle v_i | v_j \rangle|^2,
\end{equation}
where $FOR_d(\mathcal{G})$ denotes the set of faithful orthogonal representations in dimension $d$ for $\mathcal{G}$. Let us compute this function for the class of Paley graphs. 
Firstly, we consider 
\begin{equation}
\label{eq:payoff-2}
\mathcal{P}_{\mathcal{R}(\mathcal{G}_{Paley(q)})}^{{\C}^d}
\leq \max_{FOR(\mathcal{G}_{Paley(q)})} \min_{(i,j) \notin E(\mathcal{G}_{Paley(q)})} |\langle v(i) | v(j) \rangle|^2,
\end{equation}
where $FOR(\mathcal{G}_{Paley(q)})$ denotes the set of faithful orthogonal representations of $\mathcal{G}_{Paley(q)}$ in any dimension.

For $(k,l) \notin E(\mathcal{G}_{Paley(q)})$, let $S^{(k,l)}$ denote a point in $FOR(\mathcal{G}_{Paley(q)})$ that achieves the maximum for the optimisation problem in \eqref{eq:payoff-2} with the minimum being realised at $(k,l) \notin E(\mathcal{G})$. That is, $S^{(k,l)} = \big\{|v_1^{(k,l)} \rangle, \ldots, v_q^{(k,l)} \rangle \big\}$ with $\langle v_i^{(k,l)} | v_j^{(k,l)} \rangle = 0$ for $(i,j) \in E(\mathcal{G}_{Paley(q)})$ and $|\langle v_k^{(k,l)} | v_l^{(k,l)} \rangle|^2 \leq |\langle v_{k'}^{(k,l)} | v_{l'}^{(k,l)} \rangle|^2$ for any $(k',l') \in E(\mathcal{\overline{G}}_{Paley(q)})$, $(k',l') \neq (k,l)$. We claim that $S^{(k,l)} = S_{opt}$, that is, the set of vectors realising the optimal value in the Lov\'{a}sz-theta optimisation. To this end, we claim that 
\begin{equation}
|\langle v_k^{(k,l)} | v_l^{(k,l)} \rangle| \leq \frac{2}{\sqrt{q} + 1}.
\end{equation}
For suppose that $|\langle v_k^{(k,l)} | v_l^{(k,l)} \rangle| > \frac{2}{\sqrt{q} + 1}$. Then consider the Gram Matrix $M^{(k,l)}$ formed by the set of normalised vectors in $S^{(k,l)}$. We see that $(1/q) M^{(k,l)}$ also satisfies the constraints of positive semi-definiteness and trace one for the Lov\'{a}sz-theta optimisation in Eq.(\ref{eq:theta-Paley}). But if the minimum non-zero off-diagonal entry of $(1/q)M^{(k,l)}$ is larger than the minimum non-zero off-diagonal entry of the optimal matrix $M^*$ (with both matrices having diagonal entries all equal to $(1/q)$) then we obtain that $\sum_{i,j=1}^q (1/q) \left(M^{(k,l)} \right)_{i,j} > \sum_{i,j=1}^q \left(M^* \right)_{i,j} = \theta(\mathcal{G}_{Paley(q)})$ which is a contradiction. Therefore, we must have that the quantum maximum value of the payoff function is at most
\begin{equation}
|\langle v_k^{(k,l)} | v_l^{(k,l)} \rangle|^2 = \left( \frac{2}{\sqrt{q} + 1} \right)^2,
\end{equation}
with the maximum achieved by the set of vectors $S_{opt}$ in $\mathbb{R}^{(q+1)/2}$ that also incidentally achieve the optimum value of Lov\'{a}sz-theta for the graph $\mathcal{G}_{Paley(q)}$. 

\subsection{{\it Relation reconstruction} with Public coins}\label{subsec:corr}
In the previous subsections, we considered the strong communication complexity of relation when public coins between Alice and Bob were not allowed. Here, we consider that the parties have access to public coins along with one-way direct communication resources. In public coin-assisted communication complexity problems, usually, the amount of communication necessary and/or sufficient is studied. For this purpose, an unbounded amount of public coin is allowed to be shared between the parties.  However, here we allow for restricted direct communication, either quantum or classical, and compare the amount of classical public coin / shared randomness assistance required for the relation reconstruction from the observed input-output statistics when considering $\mathcal{R}(\mathcal{G}^{(n,\omega)})$. We show that there exist graphs for which a non-zero payoff while using restricted classical communication implies the presence of a public coin.\\

For a class of graphs $\mathcal{G}^{(n,\omega)}$ satisfying {\bfseries (G0)-(G2)} and with faithful orthogonal range $\omega$, we provide a lower bound on the amount of classical public coin required for accomplishing the relation reconstruction when communicating $\log_2 \omega$ bit. We show that this lower bound grows as $\log_2 n$ with the number of maximum cliques $n$. Later on, we also show the lower bound on the amount of public coin which is necessary to achieve optimal payoff $\mathcal{P}_{\mathcal{R}(\mathcal{G}^{(n,\omega}))}^{*}$  for relation reconstruction is connected to the existence of Orthogonal Arrays (OA).  On another note, we then show that there are graphs for which both quantum and classical communication using a $\omega$-dimensional system require the assistance of public coins to achieve optimal payoff for the reconstruction of relation $\mathcal{R}(\mathcal{G}^{(n,\omega)})$. In the end, we also compare the amount of quantum and classical public coin that is required when only a restricted amount of one-way classical communication is allowed to perform relation reconstruction for some specific graphs. In these cases, we show there is an unbounded gap between the amount of quantum and classical public coin.

\subsubsection{Classical communication assisted by classical public coin}\label{subsubsec:corrc}

In Theorem \ref{theo:D0classicalcomm}, we showed that the communication complexity of $\mathcal{R}(\mathcal{G}^{(n,\omega)})$ is $\log_2\omega$ bit. It is the minimum communication required for satisfying ({\bfseries T0}). Then in Lemma \ref{theo:D1classicalcomm} we showed that classical S-CC$\mathcal{R}(\mathcal{G}^{(n,\omega)})$ is $\log_2 |\mathcal{V}|$ bit when the graph $\mathcal{G}^{(n,\omega)}$ satisfies {\bfseries (G0)-(G1)}. It is the minimum communication required for simultaneously satisfying ({\bfseries T0})-({\bfseries T1}) in this case. Here we consider the class of graphs $\mathcal{G}^{(n,\omega)}$ which satisfies the constraint ({\bfseries G0})-({\bfseries G2}) and has faithful orthogonal representation in minimum dimension $\omega$.  We first show that if we bound classical communication to $\log_2 \omega$ bit and allow classical public coin then one can satisfy ({\bfseries T0})-({\bfseries T1}) and achieve optimal payoff $\mathcal{P}_{\mathcal{R}(\mathcal{G}^{(n,\omega}))}^{*}$ for relation reconstruction (See Obs. \ref{obs:classical with public coin}). We then calculate the minimum amount of classical public coin assistance required to satisfy ({\bfseries T0})-({\bfseries T1}) and achieve the optimal payoff for the reconstruction of relation $\mathcal{R}(\mathcal{G}^{(n,\omega)})$ ({\bfseries T2}).
\begin{obs}
\label{obs:classical with public coin}
Given a graph $\mathcal{G}^{(n,\omega)}$, the strategy with only $\log_2 \omega$ bit classical communication for satisfying ({\bfseries T0}) is based on Alice and Bob finding a suitable deterministic strategy,  i.e an $n\omega \times n\omega$ table of conditional probabilities $p(b| C_x,C_y,a)$ given as $M$ at the beginning, which can be expressed as a $\omega \times n\omega$ table after compression. In the public coin-assisted scenario, Alice and Bob prepare all such deterministic strategies (or tables) each of which satisfies consistent labelling of cliques ({\bfseries T0}) before the game begins and index these tables. Using public coins, they alternated between these tables in different runs. Over multiple runs, they can satisfy ({\bfseries T1}). Trivially, they could use a classical public coin of the order of the total number of such deterministic strategies where each satisfies consistent labelling of the cliques ({\bfseries T0}).
\end{obs}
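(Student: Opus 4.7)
The plan is to derive the observation in two movements: first establish that a single $\log_2\omega$-bit deterministic protocol can already realise any chosen legal row of the probability table, and then paste together enough such protocols via the public coin to make every cell permitted by $\mathcal{R}(\mathcal{G}^{(n,\omega)})$ simultaneously nonzero. The starting point is Theorem \ref{theo:D0classicalcomm}, which guarantees that Alice and Bob possess a $\log_2\omega$-bit deterministic encoding/decoding pair $(\mathbb{E},\mathbb{D})$ whose induced table $M$ satisfies (\textbf{T0}). Compression to $\omega\times n\omega$ is immediate: the row index is the encoded message $\mathbb{E}(C_x,a)\in\{0,\dots,\omega-1\}$, so any two distinct inputs mapped to the same message share a row.

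Next I would argue that the set $\mathcal{S}$ of all deterministic $\log_2\omega$-bit protocols satisfying (\textbf{T0}) is rich enough to cover the relation. Concretely, for a fixed tuple $(C_x,a,C_y,b)\in\mathcal{R}(\mathcal{G}^{(n,\omega)})$ one needs to exhibit $(\mathbb{E},\mathbb{D})\in\mathcal{S}$ with $\mathbb{D}(\mathbb{E}(C_x,a),C_y)=b$. Because any label $b$ permitted by the consistent-labelling-of-pairwise-cliques rules at $(C_x,a,C_y)$ corresponds, via Definition \ref{def:cliqueclabel}, to a binary colouring of $\mathcal{V}_{C_y}$ compatible with the one induced by $(C_x,a)$, the same construction as in the proof of Theorem \ref{theo:D0classicalcomm} (identifying the communicated message with the colouring of the relevant orthogonality-representative vector) yields a deterministic protocol that realises exactly this output. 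Iterating over every legal tuple shows that for each entry $(C_x,a,C_y,b)\in\mathcal{R}(\mathcal{G}^{(n,\omega)})$ there is at least one element of $\mathcal{S}$ that populates it.

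With $\mathcal{S}$ in hand, the public-coin protocol is immediate. Alice and Bob index the elements of $\mathcal{S}$ before the game and use $\lceil\log_2|\mathcal{S}|\rceil$ shared bits to sample some distribution $\pi$ over $\mathcal{S}$; in each run they execute the sampled deterministic protocol at the fixed cost of $\log_2\omega$ bit of direct communication. The induced conditional distribution is
\begin{equation}
P(b|C_x,a,C_y)=\sum_{(\mathbb{E},\mathbb{D})\in\mathcal{S}}\pi(\mathbb{E},\mathbb{D})\,\mathbbm{1}\bigl[\mathbb{D}(\mathbb{E}(C_x,a),C_y)=b\bigr].
\end{equation}
Since every protocol in $\mathcal{S}$ respects (\textbf{T0}) individually, (\textbf{T0}) holds for the mixture. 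Choosing $\pi$ with full support on $\mathcal{S}$ (for instance the uniform one) makes the indicator sum strictly positive for every tuple in $\mathcal{R}(\mathcal{G}^{(n,\omega)})$, securing (\textbf{T1}). Tuning $\pi$ so that, for every input pair $(C_x,a,C_y)$, the $\eta$ legal outputs appear with equal weight yields the algebraic bound $\mathcal{P}^{*}_{\mathcal{R}(\mathcal{G}^{(n,\omega)})}=1/\eta$, hence (\textbf{T2}). The trivial upper bound on public-coin size is $\lceil\log_2|\mathcal{S}|\rceil$, as stated.

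The main obstacle, and the place where care is needed, is the covering claim in the second step: one must verify that restricting to $\log_2\omega$-bit deterministic encodings does not accidentally exclude some legal tuple $(C_x,a,C_y,b)\in\mathcal{R}(\mathcal{G}^{(n,\omega)})$. I expect to handle this by exhibiting, for any prescribed legal output at a single cell, an explicit deterministic protocol of the Theorem \ref{theo:D0classicalcomm} form that agrees with it, using the freedom in Bob's decoding on message values that are not forced by the consistency constraints; the proof of Theorem \ref{theo:D0classicalcomm} essentially provides the template, and the rest of the observation follows from the convexity argument above.
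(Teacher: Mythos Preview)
Your proposal is correct and follows the same approach the paper takes: the paper does not give a formal proof of this observation but simply states the construction (enumerate all $\log_2\omega$-bit deterministic (\textbf{T0})-tables, index them, and use the public coin to mix) and then illustrates it with the explicit two-table example for $\mathcal{G}^{(2,3)}$ in Tables~\ref{table:d3n2_T0}--\ref{table:d3n2_T0_3}. Your write-up is in fact more careful than the paper's own treatment, since you isolate the covering claim---that every legal tuple $(C_x,a,C_y,b)\in\mathcal{R}(\mathcal{G}^{(n,\omega)})$ is realised by \emph{some} $\log_2\omega$-bit deterministic protocol in $\mathcal{S}$---as the only nontrivial step; the paper leaves this implicit and lets the example carry the point.
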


For example, consider the graph shown in Fig. \ref{fig:consistentcolouring} or the left graph of Fig. \ref{fig:SRdto2}, we have shown one classical deterministic strategy as represented through Table \ref{table:d3n2_T0} in Appendix \ref{app:prooftheo1}. Similarly, Alice and Bob could use another strategy represented by another Table \ref{table:d3n2_T0_2}. If Alice and Bob use $1$ bit of unbiased classical public coin to choose between Table \ref{table:d3n2_T0} and Table \ref{table:d3n2_T0_2}, they effectively are using the strategy given in Table \ref{table:d3n2_T0_3} which satisfies ({\bfseries T0}) as well as ({\bfseries T1}) and obtain the optimal payoff for this graph $\mathcal{P}_{\mathcal{R}(\mathcal{G}^{(2,3)})}^*=0.5$ since they fill all the entries $*$ with $0.5$. Now, we provide a lower bound on the amount of classical public coin required by Alice and Bob, when they are allowed to communicate $\log_2 \omega$ bit, to accomplish relation reconstruction.

\begin{table}[h]
\begin{center}
\begin{tabular}{|c c|ccc|ccc| } 
\hline
& & &$C_1$& & &$C_2$& \\
 & & $b=0$&$b=1$&$b=2$ & $b=0$&$b=1$&$b=2$\\
 \hline
     & $a=0$ & $1$ &$0$ & $0$ & $0$& $1$ & $0$ \\
$C_1$&$a=1$  &$0$ & $1$ &$0$ & $0$& $0$ & $1$ \\
     &$a=2$  &$0$ & $0$ &$1$ & $1$& $0$ & $0$ \\
 \hline
     &$a=0$ & $0$& $0$ & $1$ & $1$& $0$ &$0$  \\
$C_2$&$a=1$ & $1$& $0$ & $0$ &$0$ & $1$ &$0$  \\
     &$a=2$ & $0$& $1$ & $0$ &$0$ & $0$ &$1$\\
 \hline
\end{tabular}
    \caption{Another classical deterministic strategy for graph in Fig. \ref{fig:consistentcolouring}.}
\label{table:d3n2_T0_2}
\end{center}
\end{table}
\begin{table}[h]
\begin{center}
\begin{tabular}{|c c|ccc|ccc| } 
\hline
& & &$C_1$& & &$C_2$& \\
 & & $b=0$&$b=1$&$b=2$ & $b=0$&$b=1$&$b=2$\\
 \hline
     & $a=0$ & $1$ &$0$ & $0$ & $0$& $0.5$ & $0.5$ \\
$C_1$&$a=1$  &$0$ & $1$ &$0$ & $0$& $0.5$ & $0.5$ \\
     &$a=2$  &$0$ & $0$ &$1$ & $1$& $0$ & $0$ \\
 \hline
     &$a=0$ & $0$& $0$ & $1$ & $1$& $0$ &$0$  \\
$C_2$&$a=1$ & $0.5$& $0.5$ & $0$ &$0$ & $1$ &$0$  \\
     &$a=2$ & $0.5$& $0.5$ & $0$ &$0$ & $0$ &$1$\\
 \hline
\end{tabular}
\caption{Effective classical strategy with classical public coin for the graph in Fig. \ref{fig:consistentcolouring}.}
\label{table:d3n2_T0_3}
\end{center}
\end{table}

\begin{theo}\label{theo:dntod2nlowerbound}
  Given a graph $\mathcal{G}^{(n,\omega)}$ satisfying conditions {(\bfseries G0)}-{(\bfseries G2)} with faithful orthogonal range $\omega$, the lower bound on the amount of classical public coin assistance to $\log_2 \omega$ bit communication required for reconstruction of relation $\mathcal{R}(\mathcal{G}^{(n,\omega)})$ (and obtain optimal payoff) is equal to the minimum amount of classical public coin required for the same task when one bit communication is allowed and we consider another graph $\mathcal{G}^{(n,\omega=2)}$ with $n$ disconnected maximum cliques.
\end{theo}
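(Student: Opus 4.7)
The plan is to prove the claimed equality via a reduction: starting from any classical public-coin protocol solving $\mathcal{R}(\mathcal{G}^{(n,\omega)})$ with $\log_2\omega$-bit communication and $R$ bits of public coin, I would construct a protocol solving $\mathcal{R}(\mathcal{G}^{(n,\omega=2)})$ with $1$-bit communication and at most $R$ bits of public coin. The minimum $R$ for the latter then lower-bounds the former, and a matching lifting argument gives the reverse inequality so that the two quantities coincide.

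First I would characterise deterministic $\log_2\omega$-bit strategies for $\mathcal{R}(\mathcal{G}^{(n,\omega)})$. The consistency condition ({\bfseries T0}) applied to inputs with $C_x = C_y$ forces Alice's encoding restricted to each maximum clique to be a bijection from $\Omega$ to the $\omega$-symbol message alphabet, so every deterministic strategy is equivalent to a tuple of permutations $(\pi_{C_1},\dots,\pi_{C_n}) \in S_\omega^n$, with Bob's effective output being $b = \pi_{C_y}^{-1}\!\bigl(\pi_{C_x}(a)\bigr)$. A public-coin-assisted protocol is then a probability distribution on $S_\omega^n$, and the combined reconstruction and optimal-payoff conditions ({\bfseries T1})-({\bfseries T2}) translate to the requirement that for every pair $(C_i, C_j)$ of vertex- and edge-disjoint maximum cliques and every $(a,b) \in \Omega^2$, the support contains some tuple satisfying $\pi_{C_j}^{-1}\!\circ \pi_{C_i}(a) = b$ with the appropriate uniform weight. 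Conditions ({\bfseries G0})-({\bfseries G2}) with $k = \omega$ together with Prop.~\ref{prop:lovasz} guarantee the existence of such disjoint pairs.

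Next I would implement the projection onto $\mathcal{G}^{(n,\omega=2)}$: fix any two symbols $p, q \in \Omega$ and associate to each permutation $\pi_{C_i}$ a single bit $s_{C_i} \in \{0,1\}$ via a canonical rule distinguishing the two orderings of $\pi_{C_i}(p), \pi_{C_i}(q)$. Each deterministic tuple $(\pi_{C_1},\dots,\pi_{C_n})$ then maps to an $n$-bit string $(s_{C_1},\dots,s_{C_n})$ which defines a deterministic 1-bit strategy $b' = s_{C_x} \oplus s_{C_y} \oplus a'$ for $\mathcal{G}^{(n,\omega=2)}$. The induced distribution on $\{0,1\}^n$ has support no larger than the original distribution on $S_\omega^n$, and the reconstruction-plus-optimal-payoff condition for the simpler graph -- that for every pair $i \neq j$ both values of $s_{C_i}\!\oplus s_{C_j}$ appear with equal frequency -- is implied by the same condition for the harder problem restricted to the label pair $(p,q)$. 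This yields $\mathrm{LB}(\mathcal{G}^{(n,\omega)}) \geq \min\!\big(\mathcal{G}^{(n,\omega=2)}\big)$.

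The main obstacle I anticipate is condition ({\bfseries G1}), which introduces cross-clique adjacencies in $\mathcal{G}^{(n,\omega)}$ absent from the disconnected $\mathcal{G}^{(n,\omega=2)}$, producing extra zeros in the harder-problem's conditional-probability table that can constrain the deterministic strategies in $S_\omega^n$ in ways that potentially break the projection's surjectivity onto $\{0,1\}^n$. I expect to handle this by choosing the projection's label pair $(p,q)$ so that the associated vertices avoid the problematic cross-clique adjacencies -- possible because ({\bfseries G2}) with $k=\omega$ combined with the faithful orthogonal representation in dimension $\omega$ provides sufficient flexibility -- or, failing a single uniform choice, by averaging the reduction over several label pairs. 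The matching upper bound, lifting a solution from $\mathcal{G}^{(n,\omega=2)}$ back to $\mathcal{G}^{(n,\omega)}$, then follows by extending each bit $s_{C_i}$ to a permutation acting as identity or the transposition $(p\,q)$ on $\{p,q\}$ and as a fixed permutation on the complement, chosen compatibly with the graph's orthogonal representation so as to satisfy any ({\bfseries G1})-induced zero entries without enlarging the public-coin cardinality.
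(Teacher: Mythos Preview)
Your approach differs fundamentally from the paper's. The paper does not construct a reduction from a protocol for $\mathcal{G}^{(n,\omega)}$ to one for $\mathcal{G}^{(n,2)}$; instead it argues extremally. Among all graphs in the class, the one requiring the \emph{least} public coin is the graph in which every pair of maximum cliques shares $\omega-2$ vertices, because such a graph has the fewest free entries $*$ in its conditional-probability table. For that extremal graph, the $2\times 2$ pattern of free entries in each off-diagonal block $(C_i,C_j)$ is structurally identical to the full off-diagonal block for $\mathcal{G}^{(n,\omega=2)}$ with disconnected cliques, so the two problems have the same set of deterministic strategies and hence the same public-coin requirement. The ``equality'' in the theorem is the statement that this extremal graph saturates the lower bound that holds across the class.

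Your projection has a concrete gap. The dominant constraint is not the cross-clique adjacencies of ({\bfseries G1}) but \emph{shared vertices}: when $C_i$ and $C_j$ share a vertex at positions $a$ and $a'$, every admissible strategy satisfies $\pi_{C_i}(a)=\pi_{C_j}(a')$, and this can force $s_{C_i}\oplus s_{C_j}$ to be constant across the entire support of an optimal-payoff distribution. For the paper's Fig.~\ref{fig:consistentcolouring} example ($\omega=3$, one shared vertex at position $2$ in $C_1$ and $0$ in $C_2$), the optimal-payoff protocol that fixes $\pi_{C_1}=(0,1,2)$ and flips a fair coin between $\pi_{C_2}=(2,0,1)$ and $(2,1,0)$ projects, under $(p,q)=(0,1)$, to $s_{C_1}\oplus s_{C_2}=1$ in \emph{both} branches, so the projected $1$-bit protocol fails reconstruction for $\mathcal{G}^{(2,2)}$. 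A different $(p,q)$ rescues this particular pair, but once $n>2$ the shared vertices sit at different label positions in different clique pairs, and there is no reason a single uniform $(p,q)$---or an average over several---works simultaneously for all pairs. Your lifting direction is also misdirected: the theorem does not assert that the given $\mathcal{G}^{(n,\omega)}$ requires \emph{exactly} the same public coin as $\mathcal{G}^{(n,2)}$---the fully disconnected graph $\mathcal{G}_{disc.}^{(n,\omega)}$ with $\omega>2$ genuinely needs more---so you cannot lift protocols back to an arbitrary graph in the class without enlarging the public-coin support. The lift only needs to target the maximally-shared extremal graph, which is precisely what the paper's table-equivalence argument does.
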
 

\begin{proof}
    See appendix \ref{app:prooftheo5} for the proof.
\end{proof}

We now provide the explicit lower bounds on classical public coin required for the reconstruction of relation $\mathcal{R}(\mathcal{G}^{(n,\omega)})$ as a function of the number of maximum cliques in the graph.

\begin{cor}
\label{corollary:necessary SR}
 Given a graph $\mathcal{G}^{(n,\omega)}$ satisfying {(\bfseries G0)}-{(\bfseries G2)} with faithful orthogonal range $\omega$, it is necessary (but may not be sufficient) to share classical public coin with $n$-inputs (i.e. $\frac{1}{n}\sum_{i=1}^{n}\left(\ket{ii}\bra{ii}\right)$) while communicating an $\omega$-level classical system for the reconstruction of relation $\mathcal{R}(\mathcal{G}^{(n,\omega)})$.
\end{cor}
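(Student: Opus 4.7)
The plan is to combine Theorem~\ref{theo:dntod2nlowerbound} with a short linear-algebraic rank argument. By Theorem~\ref{theo:dntod2nlowerbound}, the lower bound on the amount of classical public coin for the relation $\mathcal{R}(\mathcal{G}^{(n,\omega)})$ with $\log_2\omega$ bit communication equals the minimum public coin required for $\mathcal{R}(\mathcal{G}^{(n,\omega=2)})$ with one bit of communication, where $\mathcal{G}^{(n,2)}$ is the disjoint union of $n$ edges. So it suffices to show that at least $n$ outcomes of the public coin are necessary in this reduced setting to achieve the optimal payoff $\mathcal{P}^{*}=1/2$.

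First I would classify every deterministic one-bit strategy satisfying (\textbf{T0}). The consistency constraint when $C_x=C_y$ forces $D(E(C_x,a),C_x)=a$, hence for every fixed $C_x$ the map $a\mapsto E(C_x,a)$ is a bijection of $\{0,1\}$. Setting $\alpha(C_x)=E(C_x,0)$ gives the parametric form $E(C_x,a)=\alpha(C_x)\oplus a$, and the consistent decoding is then forced to be $D(m,C_y)=\alpha(C_y)\oplus m$. A public-coin protocol is therefore a probability distribution $p$ over functions $\alpha:\{1,\dots,n\}\to\{0,1\}$, which gives
\begin{equation}
P(b\mid C_x,a,C_y)=\sum_{\alpha}p(\alpha)\,\mathbbm{1}\!\left[\alpha(C_x)\oplus\alpha(C_y)=a\oplus b\right].
\end{equation}
Since the maximum cliques are pairwise disconnected, the algebraic upper bound $1/2$ is attained for every $(C_x,a,C_y,b)\in\mathcal{R}(\mathcal{G}^{(n,2)})$ with $C_x\neq C_y$ if and only if $\Pr_{\alpha}[\alpha(i)=\alpha(j)]=1/2$ for every pair $i\neq j$.

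Next I would convert this pairwise balance into a rank bound. Encode each $\alpha$ in the support of $p$ by the sign vector $y^{(\alpha)}\in\{-1,+1\}^{n}$ with $y^{(\alpha)}_{i}=(-1)^{\alpha(i)}$. The balance condition then reads $\sum_{\alpha}p(\alpha)\,y^{(\alpha)}_{i}\,y^{(\alpha)}_{j}=0$ for all $i\neq j$. Let $N=|\mathrm{supp}(p)|$ be the number of public-coin outcomes, let $Y$ be the $N\times n$ matrix whose rows are the vectors $y^{(\alpha)}$, and let $P=\mathrm{diag}(p)$. The balance condition is exactly the statement that $Y^{T}P\,Y$ is diagonal, i.e.\ the $n$ columns of $\sqrt{P}\,Y$ are mutually orthogonal vectors in $\mathbb{R}^{N}$. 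Each column is nonzero since its entries are $\pm\sqrt{p_{\alpha}}$ with $p_{\alpha}>0$ on the support. Hence $\mathbb{R}^{N}$ contains $n$ nonzero, mutually orthogonal vectors, forcing $N\geq n$.

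The main obstacle I anticipate is justifying cleanly that every (\textbf{T0})-valid one-bit deterministic strategy is captured by the parametrisation $E(C_x,a)=\alpha(C_x)\oplus a$ (up to absorbing Bob's decoder into $\alpha$); once this classification is established, the orthogonality-of-columns argument yields $N\geq n$ immediately. A natural follow-up, beyond the scope of this corollary, is to determine when the bound $N=n$ is attainable, which the paper's subsequent discussion relates to the existence of Orthogonal Arrays / Hadamard-type matrices of order $n$.
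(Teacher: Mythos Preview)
Your argument is clean and correct, but it proves a different statement from the one the corollary asserts. The corollary concerns \emph{reconstruction} of $\mathcal{R}(\mathcal{G}^{(n,\omega)})$, i.e.\ conditions (\textbf{T0})--(\textbf{T1}); you instead impose the \emph{optimal payoff} condition (\textbf{T2}) when you require $\Pr_{\alpha}[\alpha(i)=\alpha(j)]=\tfrac12$ exactly for every $i\neq j$. For mere reconstruction one only needs, for every pair $i\neq j$, that some strategy in the support has $\alpha(i)=\alpha(j)$ and some other has $\alpha(i)\neq\alpha(j)$; equivalently, the column vectors $s_i=(\alpha_1(i),\ldots,\alpha_N(i))\in\{0,1\}^N$ must be pairwise distinct and pairwise non-complementary. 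That condition is already met with $N=\lceil\log_2 n\rceil+1$ outcomes (for instance for $n=4$ take $s_1=000,\,s_2=001,\,s_3=010,\,s_4=011$), so the inequality $N\ge n$ fails under the corollary's literal hypothesis. Your rank argument therefore does not establish the corollary as stated.

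That said, your approach is genuinely different from the paper's and is valuable in its own right. The paper classifies the one-bit deterministic strategies exactly as you do (off-diagonal blocks $\mathbf{I}_2$ or $\sigma_x$), observes that both values must occur in every off-diagonal block, and then \emph{exhibits} an explicit family of $n$ strategies achieving reconstruction with payoff $1/n$; this is a sufficiency construction rather than a lower-bound argument. Your orthogonality-of-columns computation, by contrast, gives a bona fide lower bound $N\ge n$ once one strengthens the target to (\textbf{T2}); indeed the balance condition $Y^{T}PY$ diagonal is essentially the weighted $OA(N,n,2,2)$ condition, and your bound recovers the Plackett--Burman inequality that underlies Corollary~\ref{cor:SR2}. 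So your proof is really an elegant alternative derivation of (a slightly weaker form of) Corollary~\ref{cor:SR2}, not of Corollary~\ref{corollary:necessary SR}.
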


\begin{proof}
    See appendix \ref{app:proofcorollaary1} for the proof.
\end{proof}

 Now, we will show that the lower bound on the amount of classical public coin required for achieving optimal payoff for relation ($\mathcal{R}(\mathcal{G}^{(n,\omega)})$) reconstruction while communicating $\omega$-level classical system is related to the existence of some specific kinds of Orthogonal Arrays. Before moving forward, we first introduce Orthogonal Arrays.
\begin{defi}\label{def:OA}
An $N\times k$ array $A$ with entries from set $S$ is called an orthogonal array $OA(N,k,s,t)$ with $s$ levels, strength $t (\in\{0,1,\cdots, k\})$ and index $\lambda$ if every $n\times t$ sub-array of $A$ contains each $t$-tuples based on $S$ appearing exactly $\lambda$ times as a row \cite{Hedayat1999}.
\end{defi}
Orthogonal Arrays have found interesting connections with absolutely maximally entangled states \cite{AME2015}, multipartite entanglement \cite{Goyeneche2016,Goyeneche2014}, quantum error-correcting codes \cite{Pang2022} etc. Here, we will consider orthogonal arrays $OA(N,k,s,t)$ where $t=2$ and $s=2$ and $S=\{0,1\}$. Let $T_k$ be the minimum $N$ for a fixed $k$ such that $OA(N=T_k,k,s=2,t=2)$ is an orthogonal array with $S=\{0,1\}$. Thus, in $OA(N=T_k,k,s=2,t=2)$ every $T_k\times 2$ sub-array has the tuples $\{(0,0),(0,1),(1,0),(1,1)\}$ appearing equal number of times as rows. \\

$T_n$ is related to the amount of classical public coin necessary and sufficient for the reconstruction of the relation $\mathcal{R}(\mathcal{G}^{(n,\omega=2)})$ with optimal payoff $\mathcal{P}_{\mathcal{R}(\mathcal{G}^{(n,2}))}^{*}$ when $\log_2 \omega$ bit classical communication is allowed from Alice to Bob.

\begin{cor}\label{cor:SR2}
Given a graph $\mathcal{G}^{(n,\omega)}$ satisfying {(\bfseries G0)}-{(\bfseries G2)} with faithful orthogonal range $\omega$, it is necessary (but may not be sufficient) to share classical public coin with $2$-inputs (for $n=2$) and $\log_2 T_{n-1}$-inputs (for $n>2$) while communicating an $\omega$- level classical system for relation $\mathcal{R}(\mathcal{G}^{(n,\omega)})$ reconstruction with optimal payoff $\mathcal{P}_{\mathcal{R}(\mathcal{G}^{(n,\omega}))}^{*}$.
\end{cor}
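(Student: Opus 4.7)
The plan is to reduce the claim, via Theorem~\ref{theo:dntod2nlowerbound}, to the simpler graph $\mathcal{G}^{(n,\omega=2)}$ consisting of $n$ disconnected edges, where Alice is restricted to communicating a single bit and the optimal payoff is $\mathcal{P}^{*}_{\mathcal{R}(\mathcal{G}^{(n,2)})}=\tfrac{1}{2}$. On this reduced graph I would first classify every deterministic, zero-error one-bit strategy and then translate the optimal-payoff condition into a combinatorial requirement on a binary array whose rows are indexed by the public-coin outcomes.

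First I would note that consistency on each diagonal block $(C_x,C_x)$ forces Alice's one-bit encoding $m(C_x,\cdot)$ to be a bijection on $\{0,1\}$. Consequently every deterministic strategy can be parametrised by a bit-vector $\mathbf{b}=(b_1,\dots,b_n)\in\{0,1\}^{n}$ via $m(C_x,a)=a\oplus b_x$, with Bob decoding $m\mapsto m\oplus b_y$. A one-line calculation then shows that the off-diagonal block $(C_x,C_y)$ with $x\neq y$ equals $\mathbf{I}_2$ when $b_x=b_y$ and $\sigma_x$ when $b_x\neq b_y$. Hence a public-coin mixture with outcomes $\{\mathbf{b}^{(i)}\}_{i=1}^{N}$ and weights $\{p_i\}$ yields the all-$\tfrac{1}{2}$ matrix on every off-diagonal block --- which is precisely the optimal-payoff requirement --- if and only if
\begin{equation}
\sum_{i:\,b_x^{(i)}=b_y^{(i)}} p_i \;=\; \tfrac{1}{2} \qquad \forall\,x\neq y .
\end{equation}

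Next I would exploit the gauge freedom $\mathbf{b}^{(i)}\mapsto\mathbf{b}^{(i)}\oplus\mathbf{c}$ (a relabelling of Alice's two messages that leaves every conditional probability unchanged) to fix $b_1^{(i)}\equiv 0$. The constraints then reduce to: for each of the $n-1$ remaining columns the total weight of $0$s equals $\tfrac{1}{2}$; and for every pair of these columns the total weight of agreement also equals $\tfrac{1}{2}$. These are precisely the defining balance conditions of a two-level, strength-two orthogonal array with $n-1$ columns (Definition~\ref{def:OA}). For $n=2$ there is a single column and the only constraint is that $b_2^{(i)}$ takes the values $0$ and $1$ with equal weight, so at least two distinct strategies must be mixed and the public coin needs $\geq 2$ inputs. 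For $n>2$ the conditions are non-trivial and, in the uniform-weight case, force the support to contain at least $T_{n-1}$ rows by the very definition of $T_{n-1}$, giving the claimed $\log_{2}T_{n-1}$ bits of public coin.

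The main obstacle will be removing the uniform-weight restriction, since the definition of $T_{k}$ tacitly assumes equal row-weights whereas Alice and Bob are free to employ any probability distribution over their deterministic strategies. I plan to handle this by passing to the Walsh--Hadamard basis on $\{0,1\}^{n-1}$: the two families of balance constraints translate into the vanishing of every Fourier coefficient $\hat{p}(T)$ of weight $|T|\in\{1,2\}$, while $\hat{p}(\emptyset)=1$. Because the polytope of non-negative vectors satisfying these linear conditions has its minimum-support vertices realised precisely by uniform mixtures over orthogonal arrays, the support of any feasible $p$ must contain at least $T_{n-1}$ points, reproducing the OA bound for arbitrary (not necessarily uniform) classical public coins and closing the argument.
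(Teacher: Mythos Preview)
Your reduction via Theorem~\ref{theo:dntod2nlowerbound} to the disconnected graph $\mathcal{G}^{(n,2)}$, the bit-vector parametrisation $m(C_x,a)=a\oplus b_x$ of the one-bit zero-error strategies, the gauge-fixing of the first coordinate, and the identification of the optimal-payoff constraints with the balance conditions of an $OA(N,n-1,2,2)$ are exactly the route taken in the paper's proof (Appendix~\ref{app:proofcorollaary2}); your exposition is simply more explicit. For $n=2$ and for the uniform-weight case when $n>2$ the two arguments coincide.

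Where you go beyond the paper is in flagging the non-uniform-weight issue: the paper tacitly works with a uniform convex mixture of deterministic tables (``an equal number of tables where \dots''), so the link to $T_{n-1}$ is immediate there, whereas you correctly observe that Alice and Bob may in principle use arbitrary weights $\{p_i\}$. However, your proposed closure of this gap is not sound as written. The Fourier reformulation---vanishing of $\hat p(T)$ for $|T|\in\{1,2\}$---is correct and useful, but the assertion that ``the polytope \dots has its minimum-support vertices realised precisely by uniform mixtures over orthogonal arrays'' is neither proved nor standard. Vertex structure of a feasible polytope gives \emph{upper} bounds on support size (a basic feasible solution has at most as many nonzeros as there are equality constraints), not the lower bound you need; and even granting your claim about the vertices, you have not argued why \emph{every} feasible point must then have support at least $T_{n-1}$ (this does follow once one notes that the minimum support over a polytope is attained at a vertex, but you would still owe a proof that no vertex has support smaller than $T_{n-1}$). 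A rigorous treatment of the non-uniform case would require a genuine lower-bound argument on the support of a pairwise-independent distribution on $\{0,1\}^{n-1}$---for instance via a rank argument on the restricted Walsh matrix---rather than the polytope observation you sketch. Since the paper itself leaves this point implicit, your argument is already at parity with it; but the final paragraph does not, as stated, close the gap you have (commendably) identified.
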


\begin{proof}
    See appendix \ref{app:proofcorollaary2} for the proof.
\end{proof}

Now we show that there exist some graphs $\mathcal{G}^{(n,\omega)}$ for which Alice and Bob need classical public coin while communicating $\omega$ level quantum or classical system for relation $\mathcal{R}(\mathcal{G}^{(n,\omega)})$ reconstruction with optimal payoff. As a consequence of this result, there are graphs for which $1$ bit classical communication when assisted by a finite amount of classical public coin can be powerful compared to $1$ qubit quantum direct communication resources when considering this particular task and payoff.

\begin{theo}\label{theo:qudit}
There exist graphs $\mathcal{G}^{(n,\omega)}$ satisfying {(\bfseries G0)}-{(\bfseries G2)} and faithful orthogonal range $\omega$, such that while using $\omega$ dimensional classical or quantum channel, the assistance of public coins is necessary for relation $\mathcal{R}(\mathcal{G}^{(n,\omega)})$ reconstruction and obtaining optimal payoff $\mathcal{P}_{\mathcal{R}(\mathcal{G}^{(n,\omega)})}^{*}$.
\end{theo}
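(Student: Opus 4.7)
My plan is to exhibit an explicit family of graphs where the obstruction to achieving $\mathcal{P}^*_{\mathcal{R}(\mathcal{G}^{(n,\omega)})}$ by a bare $\omega$-dimensional quantum channel reduces to a classical impossibility about mutually unbiased bases (MUBs). The candidate graph will be $\mathcal{G}_{disc.}^{(n,\omega)}$: $n$ fully disconnected maximum cliques of size $\omega$. This graph satisfies (G0)--(G2) with faithful orthogonal range exactly $\omega$ (each clique is realised by an arbitrary orthonormal basis of $\mathbb{C}^{\omega}$, and Proposition \ref{prop:lovasz} pins the range at $\omega$). For any two distinct cliques $C_x\neq C_y$, every pair $(C_x,a,C_y,b)$ lies in $\mathcal{R}(\mathcal{G}^{(n,\omega)})$, so the count $\eta$ equals $\omega$, giving the algebraic bound $\mathcal{P}^{*}_{\mathcal{R}(\mathcal{G}^{(n,\omega)})}=1/\omega$.

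Next I will characterise all $\omega$-dimensional quantum strategies (with arbitrary private coins) that are consistent, i.e.\ satisfy (\textbf{T0}). Imposing consistency for $C_x=C_y=C$ forces $\mathrm{Tr}[\rho_{(C,a)}E^{C}_{b}]=\delta_{a,b}$, so the $\omega$ states $\{\rho_{(C,a)}\}_{a}$ must be perfectly distinguishable by the measurement $\{E^{C}_{b}\}_{b}$ inside $\mathbb{C}^{\omega}$. A standard argument rules out private-coin mixing: perfect distinguishability of $\omega$ states in a $\omega$-dimensional Hilbert space forces them to be pure and mutually orthogonal, with the POVM elements being the corresponding rank-one projectors. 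Thus every consistent quantum strategy is determined by an assignment of an orthonormal basis $\mathcal{B}_{C}=\{|v^{(C)}_{a}\rangle\}_{a=0}^{\omega-1}$ of $\mathbb{C}^{\omega}$ to each clique $C$, with $\rho_{(C,a)}=|v^{(C)}_{a}\rangle\langle v^{(C)}_{a}|$ and $E^{C}_{b}=|v^{(C)}_{b}\rangle\langle v^{(C)}_{b}|$.

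I will then compute the payoff for this family. For any two disconnected cliques $C_x,C_y$,
\begin{equation}
P(b|C_x,a,C_y)=|\langle v^{(C_x)}_{a}|v^{(C_y)}_{b}\rangle|^{2},
\end{equation}
and $\sum_{b}|\langle v^{(C_x)}_{a}|v^{(C_y)}_{b}\rangle|^{2}=1$, so $\min_{b}|\langle v^{(C_x)}_{a}|v^{(C_y)}_{b}\rangle|^{2}\le 1/\omega$, with equality iff $\mathcal{B}_{C_x}$ and $\mathcal{B}_{C_y}$ are MUBs. Consequently, saturating $\mathcal{P}^{*}=1/\omega$ by an unassisted $\omega$-dimensional quantum channel is equivalent to exhibiting $n$ pairwise mutually unbiased bases in $\mathbb{C}^{\omega}$. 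Taking any prime power $\omega$ and $n>\omega+1$ (the simplest concrete witness being $\omega=2,\,n=4$), the classical bound $N_{\mathrm{MUB}}(\mathbb{C}^{\omega})\le\omega+1$ makes this impossible, so no $\omega$-dimensional quantum protocol attains $\mathcal{P}^{*}$; public-coin assistance is therefore necessary in the quantum case.

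For the classical $\omega$-dimensional channel, the conclusion will follow immediately from Corollaries \ref{corollary:necessary SR} and \ref{cor:SR2}, which already show that public-coin assistance (of $n$ inputs, and $\log_{2}T_{n-1}$ inputs for optimality) is necessary for the reconstruction of $\mathcal{R}(\mathcal{G}^{(n,\omega)})$ with $\omega$-level classical communication on precisely this class of graphs. Finally, I will close by recalling that such a payoff \emph{is} attainable once public coins are allowed: Observation \ref{obs:classical with public coin} together with Corollary \ref{cor:SR2} gives an explicit uniform mixture over deterministic $\omega$-level strategies achieving $\mathcal{P}^{*}$, confirming that the obstruction is exactly the absence of shared randomness. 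The main obstacle I anticipate is the rigidity step in paragraph two: cleanly ruling out that private coins (or a sub-unit-rank encoding) could bypass the MUB reduction; invoking the zero-error distinguishability constraint in a $\omega$-dimensional Hilbert space, however, tightly forces pure orthonormal encodings and renders the reduction to MUBs unavoidable.
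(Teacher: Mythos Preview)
Your proposal is correct and follows essentially the same route as the paper: both take the disconnected-cliques graph $\mathcal{G}_{disc.}^{(n,\omega)}$ with $n>\omega+1$ (the paper fixes $n=\omega+2$), use the $\omega+1$ upper bound on the number of MUBs in $\mathbb{C}^{\omega}$ to obstruct the bare quantum channel from reaching $\mathcal{P}^{*}=1/\omega$, and invoke the earlier classical results (Lemma~\ref{theo:D1classicalcomm}/Corollaries~\ref{corollary:necessary SR}--\ref{cor:SR2}) together with Observation~\ref{obs:classical with public coin} for the classical side. Your write-up is in fact more explicit than the paper's on the rigidity step that forces pure orthonormal encodings, which the paper leaves implicit.
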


\begin{figure}[h]
    \centering
    \begin{DiagramV}{0}{0}
\begin{move}{0,0} 
\fill[black] (0,0) circle (0.25);
\fill[black] (6,0) circle (0.25);
\draw (0,0) -- (6,0);
\draw (0,0-1) node {$v_1$};
\draw (6,0-1) node {$v_2$};
\draw (3,+1.5) node { $C_1$};
%
\fill[black] (0+8,0) circle (0.25);
\fill[black] (6+8,0) circle (0.25);
\draw (0+8,0) -- (6+8,0);
\draw (0+8,0-1) node {$v_3$};
\draw (6+8,0-1) node {$v_4$};
\draw (3+8,+1.5) node {$C_2$};
%
\fill[black] (15.5,0) circle (0.1);
\fill[black] (17,0) circle (0.1);
\fill[black] (18.5,0) circle (0.1);
\fill[black] (0+20,0) circle (0.25);
\fill[black] (6+20,0) circle (0.25);
\draw (0+20,0) -- (6+20,0);
\draw (0+20,0-1) node {$v_{2n-1}$};
\draw (6+20,0-1) node {$v_{2n}$};
\draw (3+20,+1.5) node {$C_n$};
\end{move}
\end{DiagramV}
    \caption{Example for a graph $\mathcal{G}^{(n,\omega)}$ satisfying Theorem \ref{theo:qudit} with $n$ disconnected maximum cliques of size $\omega=2$. }
    \label{fig:cbitSRvsqubit}
\end{figure}

\begin{proof}
Assume that Alice is allowed to communicate an $\omega$-dimensional system to Bob. We prove the above-mentioned theorem by showing the existence of a graph that satisfies the claim. Let us consider the graph $\mathcal{G}^{(n=\omega+2,\omega)}$ satisfying {(\bfseries G0)}-{(\bfseries G2)} and having faithful orthogonal representation in minimum dimension $\omega$ where any two the maximum size cliques are disconnected. For an example, see Fig. \ref{fig:cbitSRvsqubit} where $\omega=2$. 

Note that for such a graph, the maximum payoff achievable by communicating $\log \omega$ qubit, $\mathcal{P}_{\mathcal{R}(\mathcal{G}^{(\omega+2,\omega)})}$, is always less than the optimal payoff $\mathcal{P}_{\mathcal{R}(\mathcal{G}^{(n,\omega}))}^{*}=\frac{1}{\omega}$. This is because, only $\omega+1$ mutually unbiased bases (MUBs) are possible in $\C^{\omega}$, which can be used to encode and decode in an unbiased way, a maximum of $\omega+1$ maximum cliques in the considered graph.
If Alice is allowed to send $\log_2 \omega$ bit without having access to public coins, then the payoff obtained is zero (see Lemma \ref{theo:D1classicalcomm}).
On the other hand, by using finite classical public coins, all the deterministic strategies using $\log_2 \omega$ bit which satisfy ({\bfseries T0}) (which are finite in number) can be mixed to obtain the optimal payoff $\mathcal{P}_{\mathcal{R}(\mathcal{G}^{(n,\omega}))}^{*}=\frac{1}{\omega}$. 

\end{proof}
For the graph in Fig. \ref{fig:cbitSRvsqubit}, the necessary and sufficient amount of classical public coins to achieve $\mathcal{P}_{\mathcal{R}(\mathcal{G}^{(n,2)})}=\frac{1}{2}$ while communicating $1$ bit is given in Corollary \ref{cor:SR2}. Also, the maximum payoff achieved when $1$ qubit is communicated from Alice to Bob is upper bounded by $\frac{1}{2}$ (the optimal payoff $\mathcal{P}_{\mathcal{R}(\mathcal{G}^{(n,2}))}^{*}$ can be achieved only for $n\leq 3$). Thus, $1$ bit classical communication when assisted by a finite amount of classical public coin can outperform $1$ qubit quantum direct communication resources when considering this task.

\subsubsection{Classical communication assisted by quantum public coin}\label{subsubsec:corrq}
At this point, a natural question is whether quantum correlations (quantum public coin) can enhance classical communication more than classical public coin. In the following theorem, we mention an instance where this is the case.

\begin{theo}\label{theo:cshared}
 For public coin assisted classical communication, there exist graphs $\mathcal{G}^{(n,\omega)}$ satisfying conditions ({\bfseries G0})-({\bfseries G2}), such that the separation between classical and quantum public coins required for reconstruction of relation $\mathcal{R}(\mathcal{G}^{(n,\omega)})$ is unbounded.
\end{theo}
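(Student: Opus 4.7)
The plan is to exhibit a family of graphs for which a single e-bit plus a bounded classical channel suffices for relation reconstruction, while Corollary~\ref{corollary:necessary SR} forces any classical public coin assistance to grow without bound in the number of maximum cliques. The concrete family I would take is the disjoint-edges graph $\mathcal{G}_{\text{disc.}}^{(n,\omega=2)}$ consisting of $n$ disconnected maximum cliques of size $\omega=2$, which has order $|\mathcal{V}|=2n$ and is easily verified to satisfy \textbf{(G0)}--\textbf{(G2)}: every vertex lies in its edge (G0); for vertices in distinct edges, the other endpoint of one edge is adjacent to exactly one of them (G1); and the complement, a complete multipartite graph with $n$ parts of size $2$, requires removing $|\mathcal{V}|-2$ vertices to disconnect, giving $k=\omega=2$ (G2). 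Its minimal faithful orthogonal range is therefore $\omega=2$.

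By Corollary~\ref{corollary:necessary SR}, relation reconstruction over a single classical bit channel already demands a classical public coin whose support is at least $n$, so any classical-public-coin-assisted strategy needs an amount of shared randomness that diverges as $n\to\infty$. The remaining task is to construct a quantum-public-coin-assisted strategy whose cost is independent of $n$.

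I would choose $n$ distinct angles $\theta_1,\ldots,\theta_n\in[0,\pi/2)$ such that no pairwise difference is a multiple of $\pi/2$, and associate to clique $C_k$ the real qubit basis $B_k=\{|e_0^{(k)}\rangle,|e_1^{(k)}\rangle\}$ obtained by rotating the computational basis by $\theta_k$. This yields a faithful orthogonal representation of $\mathcal{G}_{\text{disc.}}^{(n,2)}$ in $\mathbb{R}^2$, with $|\langle e_j^{(k)}|e_{j'}^{(k')}\rangle|^2\in(0,1)$ for every $k\neq k'$ and every $j,j'\in\{0,1\}$. Alice and Bob share the Bell state $|\Phi^+\rangle=(|00\rangle+|11\rangle)/\sqrt{2}$. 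On input $(C_x=k,a)$ Alice measures her qubit in $B_k$, obtaining outcome $r\in\{0,1\}$, and transmits the single classical bit $m=a\oplus r$. On input $C_y=k'$ Bob measures his qubit in $B_{k'}$, obtaining outcome $s$, and outputs $b=m\oplus s$.

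Correctness uses the real-basis identity $|\Phi^+\rangle=\tfrac{1}{\sqrt{2}}\sum_i|e_i^{(k)}\rangle|e_i^{(k)}\rangle$ valid for every $k$ (since each $B_k$ is real). When $k=k'$ both measurements are performed in the same basis, so $s=r$ deterministically and Bob outputs $b=a$, realising \textbf{(T0)}. When $k\neq k'$ every transition probability $|\langle e_s^{(k')}|e_r^{(k)}\rangle|^2$ lies strictly in $(0,1)$, so both values of $b$ occur with strictly positive probability, realising \textbf{(T1)}. The protocol consumes exactly one e-bit and one classical bit for every $n$, whereas the classical analogue requires a public coin whose support grows at least linearly with $n$; hence the separation between the two kinds of public coin is unbounded, proving the theorem. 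The only delicate step is the simultaneous demand that the chosen bases furnish a faithful orthogonal representation (needed for \textbf{(T0)}) \emph{and} have strictly non-orthogonal cross-clique overlaps (needed for \textbf{(T1)}); the angle condition above handles both at once.
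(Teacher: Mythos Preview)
Your proof is correct and essentially parallels the paper's: both take the disconnected-edges family $\mathcal{G}_{disc.}^{(n,2)}$, invoke Corollary~\ref{corollary:necessary SR} for the diverging classical-public-coin lower bound, and exhibit a fixed $1$~e-bit\,$+$\,$1$~bit quantum-public-coin protocol. The only difference is presentational: the paper appeals to remote state preparation of equatorial Bloch-sphere states as a black box, whereas you unroll an equivalent protocol directly (both halves of $|\Phi^+\rangle$ measured in real rotated bases with the label XORed into the transmitted bit), which is arguably more self-contained.
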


\begin{proof}
Let us consider the graph $\mathcal{G}_{disc.}^{(n,\omega)}$ given by $n$ disjoint maximum cliques of size $\omega= 2$. $1$ bit classical communication assisted by $n-1$ input classical public coin gives payoff $0$ (see Corollary \ref{corollary:necessary SR}). On the other hand, when assisted by $1$ e-bit of entanglement (a two-qubit maximally entangled state), Alice chooses $n$ distinct orthogonal pairs of states from the equatorial circle of the Bloch sphere corresponding to the $n$ possible input maximum cliques. Now Alice and Bob perform the protocol the same as remote state preparation \cite{pati2000minimum,bennett2001remote}, which allows perfect transmission of the states from an equatorial circle of the Bloch sphere with  1 e-bit of shared entanglement and 1 bit of classical communication. After successful transmission of the state, Bob performs qubit projective measurement based on his input $C_y$ along one of the bases chosen by Alice. This makes the payoff $\mathcal{P}_{\mathcal{R}(\mathcal{G}_{disc.}^{(n,2)})}>0$. Thus increasing $n$ will require an increasing amount of classical public coin, while  1 e-bit of entanglement (quantum public coin) ensures a quantum protocol to achieve a non-zero payoff. 
\end{proof}
For example, the symmetric choice of $n=4$ directions on the Bloch sphere implies that this protocol can achieve
 $\mathcal{P}_{\mathcal{R}(\mathcal{G}_{disc.}^{(4,2)})}=\sin^2(\frac{\pi}{8})\approx 0.1464.$  

\subsection{Summary of results}
In this Section \ref{sec:sccind_results} we have presented several results. Here, we highlight the main results, summarised in the form of the following two tables \ref{table:results 1} and \ref{table:result 2}.  First, in Table \ref{table:results 1} we have summarised the classical and quantum CCR and S-CCR without public coin assistance for different graphs when considering the relation $\mathcal{R}(\mathcal{G}^{(n,\omega)})$. This table summarises the main results of this work where we quantify an unbounded quantum advantage in S-CC$\mathcal{R}(\mathcal{G}^{(n,\omega)})$ for some class of graphs. Next, in Table \ref{table:result 2}, we summarise our result on the amount of public coin assistance required for restricted quantum and classical direct communication resources. Here, we have also mentioned our result on the unbounded advantage of sharing quantum public coins over sharing classical public coins when using only $1$ bit direct communication. 

\begin{widetext} 
\begin{center}
\begin{table} [H]
\begin{center}
\begin{tabular}{c c cc c}
\hline
 \multirow{2}{*}{ {\bf Communication Task} } & \multicolumn{2}{c}{{\bf Resource Comparison} } & {\bf Quantum} & \multirow{2}{*}{ {\bf Ref.} }   \\
  &    {\bf Classical}      & {\bf Quantum}  &  {\bf Advantage}             &           \\\hline
   Distributed computation of $\mathcal{R}(\mathcal{G}^{(n,\omega)})$                                & $\log_2\omega$ bit   & $\log_2\omega$ qubits     & $\Omega(1)$    & Section \ref{CLP:classical}, Theorem \ref{theo:D0classicalcomm}                 \\
  Reconstruction of $\mathcal{R}(\mathcal{G}^{(n,\omega)})$                             & $\log_2|\mathcal{V}|$ bit  & $\log_2d_{\C}$ qubits &  $\Omega(\log_2 |\mathcal{V}|)$          & Section \ref{subsubsec:directq}, Theorem \ref{theo:qadvantage}               \\\hline
        Reconstruction of $\mathcal{R}(\mathcal{G}_{disc.}^{(n,\omega)})$                             & $\log_2n\omega$ bit  & $\log_2\omega$ qubits    &   $\Omega(\log_2 n)$    & Section \ref{subsubsec:directq}, Theorem \ref{theo:unbounded}                \\
 Reconstruction of $\mathcal{R}(\mathcal{G}_{NNCC(r)}^{(n,\omega)})$                             & $\log_2(n(\omega-r)+r)$ bit  & $\log_2\omega$ qubits     &  $\Omega(\log_2 n)$    & Section \ref{subsubsec:directq}, Theorem \ref{theo:unbounded}                \\
       Reconstruction of $\mathcal{R}(\mathcal{G}_{Paley(q)})$                             & $\log_2q$ bit  & $\log_2\frac{q+1}{2}$ qubits     &   $\Omega(1)$   & Theorem \ref{theo:qadvantage} \& Section \ref{SCLP:other}                  \\\hline
\end{tabular}
\caption{Resource comparison for classical vs quantum one-way communication tasks, {\it i.e.} distributed computation and relation reconstruction while considering $\mathcal{R}(\mathcal{G}^{(n,\omega)})$, with some examples of quantum advantage in S-CC$\mathcal{R}(\mathcal{G}^{(n,\omega)})$ for certain families of graphs considered in Section \ref{subsubsec:directq}\label{table:results 1}.}
 \end{center}
\end{table}
\end{center}
\end{widetext}

\begin{widetext} 
\begin{center}
\begin{table} [H]
\begin{center}
\begin{tabular}{c c cc c}
\hline
\multirow{2}{*}{ {\bf Resource Constraint} }             & \multirow{2}{*}{ {\bf Communication Task} } & \multicolumn{2}{c}{{\bf Resource Comparison} } & \multirow{2}{*}{ {\bf Ref.} }   \\
  &   & ~{\bf Only Classical}~      & ~{\bf Quantum allowed}~                 &           \\\hline
  \hline 
One-way Communication  &  Reconstruction of $\mathcal{R}(\mathcal{G}^{(n,\omega)})$                                & $\log_2\omega$ bit & $\log_2\omega$ qubits          & Section \ref{subsec:corr},               \\
+ Classical Public Coin (SR)  &     & + $\log_2n$ bit SR*   & + No SR required          &    Corollary \ref{cor:SR2}         \\\hline
One-way Communication                 & Reconstruction of $\mathcal{R}(\mathcal{G}^{(n,\omega=2)})$                                & 1 bit   & 1 bit               & Section \ref{subsec:corr},                \\
+ Public coins &    &+ $\log_2 n$ bit SR   &+ $1$ EPR pair               &      Theorem \ref{theo:cshared}       \\
\hline
\end{tabular}
\caption{Resource Comparison for the communication task considered in Section \ref{subsec:corr} where we allow public coins and compare purely classical protocols with hybrid protocols allowing some quantum resource --- communication (first row) or entanglement (second row).\\
*Here the $\log_2n$ bit classical public coin allow relation $\mathcal{R}(\mathcal{G}^{(n,\omega)})$
reconstruction but does not always achieve the optimal payoff $\mathcal{P}_{\mathcal{R}(\mathcal{G}^{(n,\omega}))}^{*}$. The classical public coin necessary for achieving $\mathcal{P}_{\mathcal{R}(\mathcal{G}^{(n,\omega)})}^{*}$ is connected to the problem of  orthogonal arrays.}
\label{table:result 2}
\end{center}
\end{table}
\end{center}
\end{widetext}

\section{Applications}\label{sec:appl}
In this section, we discuss some useful applications of the relation reconstruction task. The first application, in Section \ref{appli:mub}, is the operational detection of MUBs from the observation of the statistics. We consider some specific type of graph $\mathcal{G}$ with both maximum clique size and faithful orthogonal representation in minimum dimension $\omega$. If a quantum strategy using a $\omega$ level quantum system can achieve the upper bound of the optimal payoff (that is $\mathcal{P}_{\mathcal{R}(\mathcal{G})}^Q=\mathcal{P}_{\mathcal{R}(\mathcal{G})}^*$) for such a graph $\mathcal{G}$, then Bob must have used measurements corresponding to MUBs for decoding. In the next application, in Section \ref{subsec:sdi}, we consider the problem of detecting the non-classical resources in both direct communication and in the shared correlation (black-box) scenario. Additionally, we discuss an application of the communication task as a dimension witness. In the following, we discuss each of the applications in greater detail.

\subsection{Detecting Mutually Unbiased Bases}\label{appli:mub}
We show the operational detection of MUBs from the observation of the statistics of our communication task, showing that quantumly achieving the payoff $\mathcal{P}_{\mathcal{R}(\mathcal{G}^{(n,\omega)})}^*$ ({\bfseries T2}) for some graphs implies the detection of MUBs.

A pair of projective measurements for a $d$-dimensional Hilbert space are mutually unbiased if the squared
length of the projection of any basis element from the first onto any basis element of the second is
exactly $1/d$. Mutually unbiased bases (MUBs) are found to be optimal in several information-theoretic tasks and also in quantum cryptography \cite{Ivonovic1981,Wootters1989,PhysRevA.75.022319,PhysRevLett.92.067902,Aravind2003,Englert2001}. 
\begin{obs}
Consider a graph consisting of $n$ maximum cliques of size $\omega$ that are completely disconnected from each other --- $\mathcal{G}_{disc.}^{(n,\omega)}$. This graph has faithful orthogonal representation in dimension $d_{\R}=d_{\C}=\omega$. If a quantum strategy with direct communication of an $\omega$-level system can achieve the optimal payoff i.e. $\mathcal{P}_{\mathcal{R}(\mathcal{G}_{disc.}^{(n,\omega)})}=(\frac{1}{\omega}=\mathcal{P}_{\mathcal{R}(\mathcal{G}_{disc.}^{(n,\omega)})}^*)$  for the relation reconstruction task, then the measurements performed by Bob must be those corresponding to MUBs.
\end{obs}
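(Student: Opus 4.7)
The plan is to exploit the rigid structure that the disconnected-clique graph imposes on any zero-error quantum strategy, and then show that saturating the payoff bound forces the unbiasedness relation between bases associated with different cliques.

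First I would describe the relation $\mathcal{R}(\mathcal{G}_{disc.}^{(n,\omega)})$ explicitly. Since all $n$ maximum cliques are pairwise disconnected (no shared vertices and no adjacent vertices across cliques), only constraints 1 and 3 of Consistent Labelling are active: for $C_x=C_y$ Bob must output $b=a$, and for $C_x\neq C_y$ every choice $b\in\Omega$ is valid. Consequently $\eta=\omega$ and the algebraic upper bound on the payoff is $\mathcal{P}^{*}_{\mathcal{R}(\mathcal{G}_{disc.}^{(n,\omega)})}=1/\omega$, with the minimum in \eqref{eqnpayoff} being attained on tuples with $C_x\neq C_y$.

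Next I would pin down Alice's encoding and Bob's decoding from consistency alone. Let Alice encode $(C_x,a)$ as $\rho_{C_x,a}\in \mathcal{D}(\C^{\omega})$ and let Bob implement a POVM $\{M_{C_y}^{b}\}_{b\in\Omega}$ on input $C_y$. When $C_x=C_y$, ({\bfseries T0}) forces
\begin{equation}
\Tr[\rho_{C_x,a}M_{C_x}^{b}]=\delta_{a,b}\qquad \forall a,b\in\Omega.
\end{equation}
Since $\sum_b M_{C_x}^{b}=\mathbbm{1}_{\omega}$, a standard argument shows that $\{M_{C_x}^{b}\}_b$ must be a rank-one projective measurement $M_{C_x}^{b}=\ketbra{\psi_{C_x,b}}{\psi_{C_x,b}}$ onto an orthonormal basis of $\C^{\omega}$, and $\rho_{C_x,a}=\ketbra{\psi_{C_x,a}}{\psi_{C_x,a}}$. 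Thus each clique $C_x$ is associated with an orthonormal basis $B_{C_x}=\{\ket{\psi_{C_x,a}}\}_{a=0}^{\omega-1}$.

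The core step is then the payoff analysis for $C_x\neq C_y$. In that case
\begin{equation}
P(b|C_x,C_y,a)=|\langle\psi_{C_x,a}|\psi_{C_y,b}\rangle|^{2},
\end{equation}
and achieving $\mathcal{P}^{Q}_{\mathcal{R}(\mathcal{G}_{disc.}^{(n,\omega)})}=1/\omega$ requires $|\langle\psi_{C_x,a}|\psi_{C_y,b}\rangle|^{2}\ge 1/\omega$ for every $a,b$ and every pair $C_x\neq C_y$. Combined with the completeness identity $\sum_{b=0}^{\omega-1}|\langle\psi_{C_x,a}|\psi_{C_y,b}\rangle|^{2}=1$, the inequality must be saturated for each $b$, giving $|\langle\psi_{C_x,a}|\psi_{C_y,b}\rangle|^{2}=1/\omega$ identically. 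This is precisely the defining condition of mutually unbiased bases, so the $n$ bases $\{B_{C_i}\}_{i=1}^{n}$ that Bob measures in must be pairwise MUBs, establishing the observation.

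The only subtle point, and what I expect to be the main obstacle if one wanted to make the argument fully rigorous, is the step that upgrades a POVM satisfying $\Tr[\rho_{C_x,a}M_{C_x}^{b}]=\delta_{a,b}$ to a rank-one projective measurement and simultaneously forces $\rho_{C_x,a}$ to be pure; this uses the fact that $\omega$ perfectly distinguishable states in an $\omega$-dimensional Hilbert space must form an orthonormal basis and that the discriminating POVM is uniquely determined (up to the support) as the projective measurement in that basis. Everything else is a clean normalisation argument.
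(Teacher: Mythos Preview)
Your argument is correct and follows exactly the line the paper intends: the paper's own justification is essentially a one-sentence sketch (``to achieve the optimal payoff it is required to produce the prepare and measure probabilities corresponding to the disconnected pairs of vertices of the graph completely unbiased''), and you have simply unpacked that sketch into the natural rigidity-plus-normalisation argument. There is no substantive difference in approach.
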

For example, let us consider one such graph, which allows for the detection of qubit-MUBs. The simplest graph consists of three maximum cliques of size $\omega=2$ that are disconnected from each other, $\mathcal{G}_{disc.}^{(n=3,\omega=2)}$. Upon receiving her input maximum clique and clique label, Alice prepares her state in one of the pairs of the eigenstates of three qubit-MUBs corresponding to the disjoint maximum cliques of this graph and sends the qubit to Bob. Bob performs his measurement corresponding to one of the above three MUBs based on his input maximum clique. Evidently in this case, the payoff turns out to be $\mathcal{P}_{\mathcal{R}(\mathcal{G}^{(3,2)})}=\frac{1}{2}$. Conversely, one can see that to achieve the optimal payoff it is required to produce the prepare and measure probabilities corresponding to the disconnected pairs of vertices of the graph completely unbiased.

\subsection{Semi-Device Independent Detection of Non-Classical Resources and Dimension Witness}\label{subsec:sdi}
In a {\it prepare and measure} setup, which underlies several information-theoretic tasks, two prime questions of practical interest are- (i) is the transmitted system (alternatively, are the prepare and measure devices) {\it non-classical}? and (ii) what is the operational dimension of the transmitted system? For quantum systems, the second question reduces to finding a lower bound on the Hilbert space dimension, {\it i.e.} to find a {\it dimension witness} \cite{Brunner2008,Brunner2013,Ahrens2014,Cai2016}. If these questions are answered based on the {\it input-output} probability distribution $\{P(b|x,y)\}$, where $x\in X$ and $y\in Y$ are inputs and $b\in B$ is the output, without referring to any information about the encoding and decoding devices, the protocol is {\it device independent}. If partial information about the devices is available, the scenario is called {\it semi-device independent}. In the following, we show that the proposed relation reconstruction task can be used as a semi-device independent witness of non-classicality and dimension.

While answering the first question, we will consider two scenarios, first, where no public coin is available. This scenario allows us to determine the non-classicality of the transmitted system. Second, where only a finite amount of public coins are available and a classical bit has been transmitted, allows us to answer whether the public coin is non-classical or not. For both cases, let us consider the two distant parties executing the relation reconstruction task with a class of graphs satisfying conditions condition ({\bfseries G0})-({\bfseries G1}). Now, in the first case let us also assume that it is known that the operational dimension of the transmitted system is strictly upper bounded by $|\mathcal{V}|$, the number of vertices of the graph. If the distant parties can achieve a non-zero payoff (calculated from $P(b|x,y)$ according to the definition in Eq.\ref{eq:pay}), it follows from Theorem \ref{theo:D1classicalcomm} that the transmitted system is non-classical. In the second case with a finite public coin and a classical bit communication, let us consider the graph $\mathcal{G}^{(n,\omega=2)}$ with all disconnected maximum cliques. This graph has a faithful orthogonal range $\omega=2$. If the local dimension of the public coin is strictly upper-bounded by $n$, the number of maximum cliques in the graph. Then (see the example in the proof of Theorem \ref{theo:cshared}) payoff $\mathcal{P}_{\mathcal{R}(\mathcal{G}^{(n,2)})}>0$ implies that the public coin is non-classical.          
To answer the question about dimension witness we first observe the following: it follows from Theorem \ref{theo:D0classicalcomm} that given a graph with $n$ number of maximum cliques of size $\omega$, performing the distributed computation of $\mathcal{R}(\mathcal{G}^{(n,\omega)})$ requires at least $\omega$-level system needs to be communicated from Alice to Bob. This fact applies to any arbitrary graph. 
Even in the presence of Public coins, if Alice's encoding and Bob's decoding can perform this task without any error, it will imply that the communicated system must have an operational dimension of at least $\omega$. 

\section{Summary \& Discussions}\label{sec:disc}
In a non-asymptotic {\it prepare and measure} scenario, the problem of efficient encoding of classical information in a quantum system has been a topic of interest in recent times \cite{Frenkel2015, Heinosaari2020, Pauwels2022, tavakoli2022, Patra2022, halder2022}. Communication complexity, a prototype of distributed computing, measures the efficiency of such an encoding by the separation between the operational dimension of the classical and quantum message. A large separation for some distributed computation tasks demonstrates the advantage of quantum communication resources over classical ones. The present work proposes one such task, called reconstruction of relations specified by the distributed clique labelling problem. The key significance of this work is that (i) it presents a novel case of unbounded quantum advantage in the communication complexity type task and (ii) the size of the input for the task is the same as the order of the quantum advantage. All previous results showing unbounded advantages in communication scenarios with well-defined input data at local stations \cite{Massar01,Perry15,Liu16} required input data to be exponentially larger than the advantage and as such are hard to practically demonstrate. 

In this relation reconstruction task, we show a class of graphs for which the separation between the dimension of quantum and classical systems necessary can be unbounded without public coins or pre-shared between the parties. In the presence of public coins, however, this separation disappears. While quantum communication does not require public coins, the amount of classical public coin assistance that is necessary (but may not be sufficient) for classical communication to accomplish the task scales linearly with the number of maximum cliques. Additionally, we also show that a $1$ e-bit assisted classical $1$ bit channel performs a task that would otherwise require the assistance of a $1$ bit channel and an unbounded amount of classical public coin.  

The present work can be seen as an addition to the earlier attempts to demonstrate the separation of classical and quantum communication complexity with relations \cite{Buhrman98,Raz,Bar2008,Gavinsky2007}. For example, Buhrman, Cleve, and Wigderson in \cite{Buhrman98} showed an exponential gap between classical and quantum zero-error communication complexity for a promise problem in the presence of public coins. Later Raz in \cite{Raz} showed that an exponential gap in communication exists for communication complexity of a relation while considering bounded-error public coin assisted interactive protocols. Bar-Yossef {\it et al.} in \cite{Bar2008} showed an exponential separation for one-way protocols as well as simultaneous protocols with public coins for a relational problem, called the Hidden Matching Problem. In a tripartite setup (two players and a referee), well known as {\it quantum fingerprinting}, Buhrman {\it et al.} reported an exponential advantage of quantum communication \cite{Buhrman01}. Based on the result of \cite{Buhrman98}, in a slightly different setup (where the sender also produces an output), an exponential gap has also been demonstrated in the task of simulating statistics of maximally entangled states \cite{Brassard99}. Moreover, when quantum inputs are considered, the generalized statistics simulation task becomes classically impossible \cite{Rosset13}.

The question of an unbounded separation has also been addressed in the literature. Perry {\it et al.} \cite{Perry15} showed an unbounded classical vs quantum separation in terms of {\it internal information cost}, defined as the amount of Alice’s input information revealed to Bob, while Bob’s task is to exclude certain combinations of bits that Alice might have. 
Although later Liu {\it et al.} \cite{Liu16} proved that the quantum communication complexity of those tasks scales at least logarithmically in the input string length, indicating that quantum advantage in this setting is specific to the internal information cost/complexity rather than communication complexity. 
Beyond the conventional setting of communication complexity, Galvao {\it et al.} \cite{Galvao03} considered a problem where a system interacts with a classical field while travelling between two points and a decoding measurement at the destination has to answer a {\it yes-no} question about the intermediate field. This problem considers a scenario which differs from communication complexity because the concept of classical input is of a different type since it is the continuous field that is capable of coupling to the qubit (eg, magnetic one). It does not have a character of binary data provided to Alice and Bob separately, which needs to be communicated. It is rather the binary classical characteristic of a continuously parameterised quantum/classical channel between them that is to be detected. They proved an unbounded gap between the sizes corresponding to classical and quantum systems required to answer correctly. But the most relevant unbounded separation example to the present work was provided by Massar {\it et al.} in \cite{Massar01}. They showed that the quantum communication complexity of \texttt{NOT-EQUAL} problem is a qubit and the classical communication complexity increases with the input size. Considering the similar spirit of this result and our results, a comparison is in order. While in our task the required classical communication grows as the logarithm of the cardinality of the input set, classical communication complexity for the task in \cite{Massar01} scales as the double logarithm of the cardinality of the input set. This implies a challenge for practical demonstration of the unbounded advantage in \cite{Massar01} as the size of the input set grows exponentially with the advantage. Whereas in our problem, this scaling is only linear. Second, our task shows an unbounded advantage for any $d$ dimensional quantum system, while the protocol in \cite{Massar01} works only for qubits 
and generalization for higher dimensions does not seem straightforward.

Another important aspect of the present work is that the relations considered here are given by orthogonality graphs. A similar approach while demonstrating the advantage of quantum communication over classical was taken by Saha {\it et al.} in \cite{Saha2019}. The authors in \cite{Saha2019} considered a graph colouring task, called vertex equality problem, executed by two spatially separated parties. They showed that quantum advantage in one-way communication appeared whenever a class of graphs, called state-independent contextuality graphs (SIC graphs) were considered. In contrast, the quantum advantage in the communication task proposed in this article can be observed independent of the usefulness of the graphs in demonstrating state-independent contextuality. Therefore, in our case, the quantum advantage in one-way communication can not be attributed to contextuality. Interestingly, in \cite{Saha2023} the authors showed that the quantum separation for computation of a partial function via communication task based on state-independent contextuality witnesses can be polynomially large, whereas our task, independent of contextuality, can obtain an unbounded separation for the computation of a relation via communication.

In a practical setting, one may not always have the same input sets for both parties. One straightforward direction for a generalisation of relation reconstruction would entail, one party (Alice) receiving input maximum cliques and their label over some graph $\mathcal{G_A} \subset \mathcal{\Tilde{G}}$ while the other party (Bob) receives input maximum clique that is to be labelled from some other graph $\mathcal{G_B}\subset \mathcal{\Tilde{G}}$ only to be consistent with the label of Alice if $\mathcal{G_A} \cap \mathcal{G_B}=\mathcal{G} \neq \emptyset$. As long as there is a quantum advantage in the relation reconstruction task for distributed clique labelling problem over this graph $\mathcal{G}$, one could still find the usefulness of the relation reconstruction task in the setting with different input sets.

This work leaves several questions open. For example, could there be a task such that the scaling of classical vs quantum communication with binary colourable graphs be exponential in the presence/absence of public coins (possibly for two-way communication complexity)? Could one obtain a linear scaling when the two parties compute a function instead of a relation? Besides these general questions, there are some particular points about the present study that remain unresolved. First, does the unbounded separation between classical and quantum communication persist when one departs from the zero-error scenario considered in this work and considers a degree of error in the computation? In Section \ref{subsubsec:corrc}, the connection between a lower bound to the amount of classical public coin in the bounded communication setting and orthogonal arrays, shows that given arbitrary graphs with a large number of maximum cliques, finding such a lower bound is a hard problem. In Section \ref{subsubsec:corrq} the advantage of using a quantum public coin(entanglement) instead of a classical public coin (shared randomness) to assist a bounded classical communication has been demonstrated by achieving a higher payoff. However, it remains unknown what is the optimal payoff for the entanglement-assisted case. A monotonically decreasing payoff with the increasing number of maximum cliques ($n$) might suggest a limit of this advantage. Finally, in the applications section (Section \ref{appli:mub}) the robustness of the scheme to detect MUBs is not known.

Finally, one can look at the present protocol from a foundational perspective. Namely, it can be seen as a qualitative simulation of the quantum statistics on demand. The relation-reconstruction task proposed in this article could bridge the gap between conventional communication complexity and sampling problems with communication \cite{Ambainis2003, Watson2020}. Precisely, in our protocol, the spatially separated parties are given some set of favourable events and it is required that the events be quantitatively simulated by classical communication so that all of them occur with nonzero probability like it is in the quantum case. Considering the class of graphs in Fig.\ref{fig:cbitSRvsqubit}, obtaining the non-zero value of the payoff function (Eq.\ref{eq:payoff}), reduces to the simulation of the set of correlations generated from pure qubit states and qubit projective measurements. In order to simulate the prepare-measure statistics of a qubit, the authors in \cite{tavakoli2022} show that it is necessary and sufficient to communicate two classical bits when the parties are assisted by pre-shared randomness. In the same spirit, Corollary \ref{corollary:necessary SR} says that it is necessary to share an unbounded amount of randomness between the sender and receiver besides one bit of classical communication to simulate the statistics of qubits for the class of graphs as in Fig.\ref{fig:cbitSRvsqubit}.  

Looking at the protocol from yet another angle, we can see it as a distribution of a (conditional) randomness with the help of a restricted communication channel. 
This raises the question of the possible relation of the present scheme to discrete analogues of bosonic sampling \cite{Bryan2015}. Quantum advantage in the latter case relies on  
the hypotheses of the computational hardness of some classical tasks. It would be interesting to see whether additional graph structure and modification of the present protocol could imply exponential separation in sampling that would not rely on hypotheses of this type.

\section{Acknowledgements}
S.R. acknowledges Markus Grassl for discussion on Orthogonal Arrays. The ’International Centre for Theory of Quantum Technologies’ project (contract no. MAB/2018/5) is carried out within the International Research Agendas Programme of the Foundation for Polish Science co-financed by the European Union from the funds of the Smart Growth Operational Programme, axis IV: Increasing the research potential (Measure 4.3).  P. H. also acknowledge that this work is partially carried out under IRA Programme, project no. FENG.02.01-IP.05-0006/23, financed by the FENG program 2021-2027, Priority FENG.02, Measure FENG.02.01., with the support of the FNP. R.R. acknowledges support from the Early Career Scheme (ECS) grant "Device-Independent Random Number Generation and Quantum Key Distribution with Weak Random Seeds" (Grant No. 27210620), the General Research Fund (GRF) grant "Semi-device-independent cryptographic applications of a single trusted quantum system" (Grant No. 17211122) and the Research Impact Fund (RIF) "Trustworthy quantum gadgets for secure online communication" (Grant No. R7035-21). S.S.B acknowledges funding by the Spanish MICIN (project PID2022-141283NB-I00) with the support of FEDER funds, and by the Ministry for Digital Transformation and of Civil Service of the Spanish
Government through the QUANTUM ENIA project call- Quantum Spain project, and by the European Union through
the Recovery, Transformation and Resilience Plan - NextGeneration EU within the framework of the Digital Spain 2026 Agenda.

\appendix

\section{Success Probability for Reconstruction of Relations}\label{app:rate}
Given a relation $\mathcal{R}\subseteq X \times Y \times B$ for the bipartite prepare and measure a scenario where $X$ and $Y$ are the set of inputs for Alice and Bob and $B$ is the set of outputs for Bob, we are interested in success probability $P_k(\mathcal{R})$ of relation reconstruction after $k$ number of rounds, where $k$ is large. Additionally, Alice and Bob's protocol is agnostic to the number of rounds. Every tuple $(x,y,b)\in \mathcal{R}$ must occur at least once in these $k$ rounds for the correct reconstruction of the relation $\mathcal{R}$. The cardinality $|\mathcal{R}|=\Gamma$ is the total number of all such events, which implies $k \geq \Gamma$ for reconstruction to be possible. Here we assume that the inputs are sampled from a uniform distribution. If Alice encodes her input $x\in X$ in the message $\tau_x$ in each round and Bob outputs $b\in B$ depending on his input $y\in Y$ and Alice's message,     
\begin{align}
    P(x,y,b)&= \sum_{\tau_x}P(x,y,b,\tau_x)\\
    &=\sum_{\tau_x} P(b|y,\tau_x)P(\tau_x|x)P(x)P(y)\\    P(x,y,b)&= P(b|y,x)P(x)P(y)
\end{align}
if $P(\tau_x|x)=1 \forall x\in X$ (this is the situation in the scenario when a pre-shared public coin is not allowed). We can consider a strict ordering of the elements in $\mathcal{R}$. Given this ordered list, we can define $\alpha(k)=\{\alpha_1,\alpha_2,\cdots,\alpha_{\Gamma}
\}$ where $\alpha_i\in \mathbb{N}$ is the frequency of occurrence of the $i^{th}$ element $(x_i,y_i,b_i)$ of ordered list $\mathcal{R}$ given $k$ number of rounds have occurred and thus $\sum_{i=1}^{\Gamma}\alpha_i= k ~\forall ~\alpha$. The instances favourable for successful reconstruction of relation correspond to the set of $\alpha(k)$ where each of the elements of $\mathcal{R}$ occur with non-zero frequency.
 The probability of reconstruction of $\mathcal{R}$ given $k$ number of rounds is thus given by the total probability of occurrences of the $\alpha(k)$ with the aforementioned property. 
\begin{align}
    P_k(\mathcal{R})&= \sum_{\alpha(k)}P(\alpha(k)|k)\nonumber\\
    &=\sum_{\alpha}P(\{\alpha_1,\alpha_2,\cdots,\alpha_{\Gamma}\}|k)\nonumber\\
    &=\sum_{\alpha}(\prod_{i=1}^{\Gamma}P^{\alpha_i}(x_i,y_i,b_i))
\end{align}
Since, $\forall~\alpha~\forall i\in\{1,2\cdots,\Gamma\},\alpha_i>0$, therefore, 
\begin{align}
P_k(\mathcal{R})&=\left(\prod_{i=1}^{\Gamma}P(x_i,y_i,b_i)\right)\left(\sum_{\alpha}(\prod_{i=1}^{\Gamma}P^{\alpha_i-1}(x_i,y_i,b_i))\right)\nonumber\\
\end{align}
Notice that if any of the terms $P(x_i,y_i,b_i)=0$ then the probability of successful reconstruction after $k$ rounds $P_k(\mathcal{R})$ becomes zero as well. Therefore,
\begin{equation}
P_k(\mathcal{R})\neq 0 \implies P(b|x,y) \neq 0~~ \forall (x,y,b) \in \mathcal{R}
\end{equation}

\noindent 
\textbf{Remark:} $P(b|x,y,\tau_x)=1$ $\forall x\in X,y \in Y$ such that  $\exists!b\in B$ satisfying $(x,y,b)\in \mathcal{R}$. For rest of the $(x,y,b)\in\mathcal{R}$, $P(b|x,y)\in(0,1).$

Now, we define $B_{x,y}=\{b\in B:(x,y,b)\in\mathcal{R}\}$ which is the set of all acceptable outputs for Bob given the input is $x$ and $y$ for Alice and Bob respectively. Then, $\sum_{b\in B_{x,y}}P(b|x,y,\tau_x)=1~~ \forall~~ B_{x,y}$.\\

We aim to maximise the success probability $P_k(\mathcal{R})$ in the scenario when Alice and Bob are not aware of the total number of rounds, say $k_{max}$, a priory and thus they should decide the probabilities of the events in $\mathcal{R}$ independent of $k_{max}$. To achieve this we start by using the Lagrange multiplier.\\

Now, in order to maximise the success probability of reconstruction for $k$ number of rounds we define 
\begin{align}
&L=P_k(\mathcal{R})-\sum_{B_{x,y}}\lambda_{B_{x,y}} \left(1-\sum_{b\in B_{x,y}}P(b|x,y)\right)
\end{align}

For $j^{th}$ element $(x_j,y_j,b_j)$ in ordered list of $\mathcal{R}$, 
\begin{align}
&\frac{\partial L}{\partial P(x_j,y_j,b_j)}=0\nonumber\\
\implies& \sum_{\alpha(k)}\left(\alpha_j P(x_j,y_j,b_j)^{-1}\right)\left(\prod_{i=1}^{\Gamma}P^{\alpha_i}(x_i,y_i,b_i)\right)\nonumber\\
&-\lambda_{B_{x_j,y_j}}\left(P(b_j|x_j,y_j)\right)\\
\implies& \lambda_{B_{x_j,y_j}}=\frac{\sum_{\alpha(k)}\alpha_j \left(\prod_{i=1}^{\Gamma}P^{\alpha_i}(x_i,y_i,b_i)\right)}{P^2(x_j,y_j,b_j)}
\end{align}
For a given $k$, the optimal probabilities $P(x_i,y_i,b_i)=P(b_i|x_i,y_i)P(x_i,y_i)$ can be calculated that yields maximum value of $P_k(\mathcal{R})$. However, for any arbitrary $k$, the expression of $\lambda_{B_{x_i,y_i}}$ is a function of $k$ as $\alpha(k)$ and $\alpha_i$ are a function of $k$. Since Alice and Bob do not have prior information about k, they have to agree on values of probabilities $P(x_i,y_i,b_i)$ independent of $k$. Thus, the obvious solution is $P(b|x,y)=constant$ $\forall~ 
 b\in B_{x,y}$. Here we assume that the inputs are sampled from a uniform distribution. This shows the necessity of our payoff function. Maximising the Payoff guarantees that the $P_k(\mathcal{R})$ is maximised to some local maxima.

\section{A concrete example}\label{app:example}

 Here we provide an example of a particular simple graph to help solidify the ideas of the clique labelling problem, relation reconstruction task for $\mathcal{R}(\mathcal{G}^{(n,\omega)})$ and the conditional probability table introduced in section \ref{sec:ccr}. Consider the graph $\mathcal{G}^{(n=2,\omega=3)}$ (see Fig. \ref{fig:consistentcolouring} for vertex indexing), with $n=2$ maximum cliques of size $\omega=3$ that share a common vertex. 

\begin{figure}[h]
    \centering
\begin{DiagramV}{0}{0}
\begin{move}{0,0} 
\fill[black] (0,0) circle (0.25);
\draw (0,0-1) node {$v_1$};
\fill[black] (6,0) circle (0.25);
\draw (6,0-1) node {$v_3$};
\fill[black] (3,5.196) circle (0.25);
\draw (3,5.196+1) node {$v_2$};
\draw (0,0) -- (6,0);
\draw (0,0) -- (3,5.196);
\draw (3,5.196) -- (6,0);
\draw (3,5.196/2 -0.5) node{$C_1$};
\fill[black] (3+6,5.196) circle (0.25);
\draw (3+6,5.196+1) node {$v_4$};
\fill[black] (6+6,0) circle (0.25);
\draw (6+6,0-1) node {$v_5$};
\draw (0+6,0) -- (6+6,0);
\draw (0+6,0) -- (3+6,5.196);
\draw (3+6,5.196) -- (6+6,0);
\draw (3+6,5.196/2 -0.5) node {$C_2$};
\end{move}
\end{DiagramV}    
    \caption{In this example, the graph $\mathcal{G}^{(2,3)}$ consists of two maximum cliques $C_1$ and $C_2$ of size $\omega=3$.}
    \label{fig:consistentcolouring}
\end{figure}

The mapping of binary colourings to clique labellings for maximum clique $C_1$ can be given by:
\begin{align}
    &f(v_1)=1, f(v_2)=f(v_3)=0 
    \implies  &g_{C_1}=0\nonumber\\
    &f(v_2)=1, f(v_1)=f(v_3)=0 
    \implies  &g_{C_1}=1\nonumber\\
    &f(v_3)=1, f(v_1)=f(v_2)=0 
    \implies  &g_{C_1}=2
\end{align}
Similarly, the mapping of binary colourings to clique labellings for maximum clique $C_2$ can be given by:
\begin{align}
    &f(v_3)=1, f(v_4)=f(v_5)=0 
    \implies  &g_{C_2}=0\nonumber\\
    &f(v_4)=1, f(v_3)=f(v_5)=0 
    \implies  &g_{C_2}=1\nonumber\\
    &f(v_5)=1, f(v_3)=f(v_4)=0 
    \implies  &g_{C_2}=2
\end{align}

Then relation $\mathcal{R}(\mathcal{G}^{(2,3)})$ induced by the clique labelling problem with tuples $(C_x,a,C_y,b)$ can be concretely given by:
\begin{align}
     &\mathcal{R}(\mathcal{G}^{(2,3)})=  
    \{(C_1,0,C_2,1),(C_1,0,C_2,2),(C_1,1,C_2,1),\nonumber\\
    &(C_1,1,C_2,2), (C_1,2,C_2,0),(C_2,1,C_1,0),(C_2,1,C_1,1),\nonumber\\&(C_2,2,C_1,0), (C_2,2,C_1,1),(C_2,0,C_1,2),(C_1,0,C_1,0),\nonumber\\&(C_1,1,C_1,1), (C_1,2,C_1,2),(C_2,0,C_2,0),(C_2,1,C_2,1),\nonumber\\&(C_2,2,C_2,2)\}  
\end{align}

 For this graph, the table of conditional probability $P(b|C_x,C_y,a)$ for all compatible labelling $a,b$ and maximum cliques $C_x,C_y$ is shown in table \ref{table:d3n2}.

\begin{table}[h]
\begin{center}
\begin{tabular}{|c c|ccc|ccc| } 
\hline
& & &$C_1$& & &$C_2$& \\
 & & $b=0$&$b=1$&$b=2$ & $b=0$&$b=1$&$b=2$\\
 \hline
     & $a=0$ & $1$ &$0$ & $0$& $0$& $*$ & $*$ \\
$C_1$&$a=1$  &$0$ & $1$ &$0$ & $0$& $*$ & $*$ \\
     &$a=2$  &$0$ & $0$ &$1$ & $1$& $0$ & $0$ \\
 \hline
     &$a=0$ & $0$& $0$ & $1$ & $1$&  $0$ &$0$  \\
$C_2$&$a=1$ & $*$& $*$ & $0$ &$0$ & $1$ &$0$  \\
     &$a=2$ & $*$& $*$ & $0$ &$0$ & $0$ &$1$\\
 \hline
\end{tabular}
\caption{Example of a table of conditional probabilities $P(b| C_x,C_y,a)$ corresponding to the distributed computation of $\mathcal{R}(\mathcal{G}^{(2,3)})$ (and relation reconstruction task) based on the graph in Fig. \ref{fig:consistentcolouring}. For distributed computation, {\it i.e.} satisfying (\textbf{T0}), the entries marked by $* \in [0,1]$ are free entries up to normalisation. For reconstruction of relation $\mathcal{R}(\mathcal{G}^{(2,3)})$, the free entries marked by $*$ belong to $(0,1)$ up to normalisation. For achieving optimal payoff $\mathcal{P}_{\mathcal{R}(\mathcal{G}^{(2,3)})}^{*}$,  all free elements marked by $*=0.5$.}
\label{table:d3n2}
\end{center}
\end{table}

 The entries marked with \textbf{*} are the free non-negative entries up to normalisation and the entries with $0$ or $1$ are constrained from the consistency conditions for the distributed clique labelling problem. This will give a table for which the conditions in ({\bfseries T0}) are satisfied. A table satisfying condition in ({\bfseries T1}) must have positive numbers at all the entries marked with \textbf{*}. In this example, a table satisfying condition ({\bfseries T2}) must have $0.5$ at all the entries marked with \textbf{*}.

\section{Proof of Theorem \ref{theo:D0classicalcomm}}\label{app:prooftheo1}
Before we delve into the proof, let us introduce a few notations that we will frequently use in this section. Prior to the distributed computation of the relation, Alice and Bob are given $\mathcal{G}^{(n,\omega)}$ and they construct a table $M$ whose entries are conditional probabilities $P(b| C_x,C_y,a)$ of compatible labels $a,b$, for all possible maximum cliques $C_x,C_y\in\mathcal{C}$. In this table the probability $p(b| C_x,C_y,a)\equiv((C_x,a,C_y,b))$ are entries corresponding to the event $(C_x,a,C_y,b)$ where $(C_x,a)\in X$, $C_y\in Y$ and $b\in B$.  The rows and the columns of this table are indexed as $(C_x,a)_\mathsf{r}$ and $(C_y,b)_\mathsf{c}$ respectively. In this table, the index runs over all the $a,b$ first and then updates the $C_x,C_y$. This table has $n\omega$ rows and $n\omega$ columns and may be perceived as a $n\times n$ block matrix with elements indexed $(C_x,C_y)$. We have $\mathbf{I}_{\omega\times \omega}$ on the diagonal blocks of the table as Bob has to output the same label as Alice whenever they get the same maximum cliques as input. The aforementioned distributed computation of the relation task can be mapped to the following properties (\textbf{T0}) of the table $M$. We have equivalence between the communication task and the table $M$ with the constraint (\textbf{T0}).
\begin{enumerate}[start=0,label={(\bfseries T\arabic*):}]
   \item Consistent labelling of cliques: If the event $(C_x,a,C_y,b)\notin \mathcal{R}(\mathcal{G}^{(n,\omega)})\implies P(b|C_x,a,C_y)=0$.
   \end{enumerate}
\begin{proof}
If Alice and Bob manage to compress the $nd$ rows of the table $M$ (i.e., the set of all possible inputs for Alice) into at least $\omega$ partitions such that no two rows in the same partition have entries in any columns that are different (may be due to constraints imposed by property (\textbf{T0}) or by choice filling the probabilities corresponding to events in $\mathcal{R}(\mathcal{G}^{(n,\omega)})$) then there exists a protocol that proves Theorem \ref{theo:D0classicalcomm}. Alice will communicate with Bob the partition to which her input belongs and then Bob can suitably pick a label for her input maximum clique $C_y$ while satisfying the probability distribution table that parties agreed upon at the start and thereby satisfying the consistency condition.

However, notice that there cannot be any less than $\omega$ number of partitions of the rows of the table $M$ satisfying (\textbf{T0}) such that no two rows in the same partition have entries in any columns that are different. This can be easily shown as every two rows corresponding to each block diagonal entry of $M$, \textit{i.e.} $(C_x=C_i,C_y=C_i)=\mathbf{I}_{\omega\times \omega}$, are distinct. Thus, each of the $\omega$ rows corresponding to Alice's input maximum clique $C_x=C_i$ must belong to a different partition.

This implies that every  disjoint partition $\tau(i) 
 (i\in \{0,1,\cdots,\omega-1\}$) of the rows described above must have exactly one row of the form $(C_x,a)$ for each maximum clique $C_x$, i.e., a row corresponding to exactly one out of all the possible label $a$ for every maximum clique $C_x$. In the following, we argue that there exists $\omega$  such disjoint partitions of rows. But before we proceed, we will list some properties of the table $M$ 
when such partitioning is possible.\\
 
 If there is an imposition that the rows of table $M$ can be partitioned into at least $\omega$ disjoint partitions $\tau(i)$ while satisfying the constraints discussed above this leads to restrictions on the structure of table $M$ that can be decided by both Alice and Bob in order to perform the distributed computation of relation specified by CLP.
 
 \begin{itemize}
     \item If some row (say $(C_x,a')_\mathsf{r}$) of off-diagonal block matrix $(C_x,C_y)$ has more than one non-zero entries (say($(C_x,a',C_y,\Tilde{b}))\neq0$ and $((C_x,a',C_y,\Tilde{b'}))\neq 0$) then the corresponding row in $M$ cannot belong to any partition that contains a row with index $(C_{x'(=y)},a)_\mathsf{r}$ where $a\in \{0,\cdots,\omega-1\}$ as there exist column $(C_{y},b)_\mathsf{c}$ where these two rows have different entries. This is because the block matrix $(C_y,C_y)=\mathbf{I}_{\omega\times \omega}$ and thus none of the rows have non-zero entries in two different columns in this block. Thus, this row must belong to a new partition thereby increasing the total number of partitions to $\omega+1$.
     \item If some column (say $(C_y,b')_\mathsf{c}$) of off-diagonal block matrix $(C_x,C_y)$ has more than one non-zero row entries then the rows corresponding to these nonzero entries in $M$ can only belong to the partition that contains the row $(C_{\Tilde{x}(=y)},\Tilde{a}(=b'))$. However, as discussed above exactly one out of all the possible labels $a$ for every maximum clique $C_x$ can belong to a partition. Therefore, Alice and Bob will be forced to create at least $\omega+1$ partitions. Therefore, if the number of partitions is restricted to $\omega$ then each row and column of every off-diagonal block matrix $(C_x, C_y)$ is some permutation $\Pi_{C_x,C_y}$ of $\mathbf{I}_{\omega\times\omega}$.
     \item The table must have the property $M=M^T$. If this does not hold then there exists an element for which $((C_x,a,C_y,b))=1\ne ((C_{x'(=y)},a'(=b),C_{y'(=x)},b'(=a)))$. The row $(C_x,a)_\mathsf{r}$ must belong to same partition as $(C_{x'(=y)},\Tilde{a}(=b))_\mathsf{r}$ as $((C_x,a,C_y,b))=1=((C_{x'(=y)},\Tilde{a},C_{y},b))$ only for $\Tilde{a}=b$. For any other allowed value of $\Tilde{a}$, $((C_{x'(=y)},\Tilde{a},C_{y},b))=0$. However, the row $(C_x,a)_\mathsf{r}$ and $(C_{x'(=y)},\Tilde{a}(=b))_\mathsf{r}$ have different entries in the column $(C_{y"(=x)},b"(=a))_\mathsf{c}$. $((C_x,a,C_{y"(=x)},b"(=a)))=1\neq ((C_{x'(=y)},a'(=b),C_{y'(=x)},b'(=a)))$. Thus, the row $(C_x,a)_\mathsf{r}$ cannot belong to any partition that contains a row indexed $(C_{x'(=y)}, \Tilde{a})$ where $\Tilde{a}\in\{0,\cdots,\omega-1\}$. 
 \end{itemize}

 Now, we will create a specific kind of $\omega$ disjoint partitions($\tau(i), i\in \{0,\cdots,\omega-1\}$) of the input received by Alice considering a probability table having the form discussed above.
 \begin{itemize}
 \item \textbf{Step 1:} $\forall a \in \{0,1,\cdots,\omega-1\}, (C_1,a)_\mathsf{r}\in\tau(a)$.
 \item \textbf{Step 2:} $\forall j\in\{2,\cdots,n\}$, say the block matrix $(C_1,C_j)$ is a permutation matrix $\Pi_{1,C_j}$ then the row $(C_j,a')_\mathsf{r}\in \tau(a)$ where $a'$ is the $(a)^{th}$ element of $\Pi_{1,C_j}*(0~1~\cdots~\omega-1)^T$.

\end{itemize}
When Alice communicates the partition to which her input $(C_x,a)$ belongs, Bob can pick the label for maximum clique $C_y$ that obeys the consistency labelling of pairwise clique condition. It is important to note that each row associated with Alice's input maximum clique $C_x$ must belong to a distinct partition else Bob might not be able to assign a label obeying the consistency condition.

For example, consider the graph shown in Fig. \ref{fig:consistentcolouring}. Alice and Bob adopt a deterministic strategy and fill the free entries marked with \textbf{*} in Table \ref{table:d3n2} with $0$s and $1$s as seen in Table \ref{table:d3n2_T0}.

\begin{table}[h]
\begin{center}
\begin{tabular}{|c c|ccc|ccc| } 
\hline
& & &$C_1$& & &$C_2$& \\
 & & $b=0$&$b=1$&$b=2$ & $b=0$&$b=1$&$b=2$\\
 \hline
     & $a=0$ & $1$ &$0$ & $0$ & $0$& $0$ & $1$ \\
$C_1$&$a=1$  &$0$ & $1$ &$0$ & $0$& $1$ & $0$ \\
     &$a=2$  &$0$ & $0$ &$1$ & $1$& $0$ & $0$ \\
 \hline
     &$a=0$ & $0$& $0$ & $1$ & $1$& $0$ &$0$  \\
$C_2$&$a=1$ & $0$& $1$ & $0$ &$0$ & $1$ &$0$  \\
     &$a=2$ & $1$& $0$ & $0$ &$0$ & $0$ &$1$\\
 \hline
\end{tabular}
\caption{Example of a table of conditional probabilities $P(b| C_x,C_y,a)$ for the graph in Fig. \ref{fig:consistentcolouring} satisfying (\textbf{T0}).}
\label{table:d3n2_T0}
\end{center}
\end{table}
For Table \ref{table:d3n2_T0}, we can make three partitions $\tau(0),\tau(1)$ and $\tau(2)$ for the rows such that exactly one row of each maximum clique belongs to a partition. In this the partitions are $\tau(0)=\{(C_1,a=0)_\mathsf{r},(C_2,a=2)_\mathsf{r}\}$, $\tau(1)=\{(C_1,a=1)_\mathsf{r},(C_2,a=1)_\mathsf{r}\}$ and $\tau(2)=\{(C_1,a=2)_\mathsf{r},(C_2,a=0)_\mathsf{r}\}$. Upon receiving $C_x$ and $a$ in each round Alice can send $i$ corresponding to $\tau(i)$. After knowing the partition $\tau(i)$, Bob can always pick the label for his maximum clique $C_y$ that does not violate the consistency condition. Thus, a classical three-level system is sufficient for performing the distributed computation of $\mathcal{R}(\mathcal{G}^{(2,3)})$.
\end{proof}

\section{Proof of Lemma \ref{theo:D1classicalcomm}}\label{app:prooflemma1}

\begin{proof}
Before the game begins, Alice and Bob construct the table $M$ of conditional probabilities $P(b| C_x,C_y,a)$ which has $n\omega$ rows and $n\omega$ columns. We will refer to each of the rows and columns of this table as $(C_x,a)_\mathsf{r}$ and $(C_y,b)_\mathsf{c}$ respectively. If $M$ satisfies  {\bfseries T(0)-T(1)} then upon receiving $(C_x,a)$ Alice communicates the relevant row to Bob and relation reconstruction becomes possible as Bob with $P(b| C_x,C_y,a)$ outputs clique label $b$ for his input maximum clique $C_y$. Therefore, we have a trivial upper bound on the dimension of the classical system which is required as $n\omega$. The deterministic classical strategy employed for Theorem \ref{theo:D0classicalcomm} cannot reconstruct the relation since, the conditional probability table must contain nonzero entries corresponding to all events $(C_x,a,C_y,b)\in \mathcal{R}(\mathcal{G}^{(n,\omega)})$. 

Therefore, we cannot use the strategy as used before. Nonetheless, observe that if two rows of the probability table can be made identical while satisfying the consistency condition, Alice and Bob can encode them in the same communication message. In the table, there is redundancy when the same vertex shows up in different maximum cliques. For instance, let vertex $v$  be in both maximum clique $C_i$ and $C_j$. Then the rows in the conditional probability table corresponding to $(C_x=C_i,a)_\mathsf{r}$ and $(C_{x}=C_j,a')_\mathsf{r}$, where label $a$ and $a'$ for the maximum clique $C_i$ and $C_{j}$ respectively colour the vertex $v$ as $1$, can be assigned the same entries. For such a vertex $v$, $(C_x=C_i,a,C_y=C_j,b=a'),(C_x=C_j,a',C_y=C_i,b=a),(C_x=C_j,a',C_y=C_j,b=a'),(C_x=C_i,a,C_y=C_i,b=a)\in\mathcal{R}(\mathcal{G}^{(n,\omega)})$. Also for any other $C_y (\ne C_i,C_j)$, $(C_x=C_i,a,C_y,b)\in \mathcal{R}(\mathcal{G}^{(n,\omega)})\implies(C_x=C_j,a',C_y,b)\in\mathcal{R}(\mathcal{G}^{(n,\omega)})$. Thus, the entries in the table $M$ corresponding to the rows $(C_x=C_i,a)_\mathsf{r}$ and $(C_{x}=C_j,a')_\mathsf{r}$ can be assigned identically (especially the non-zero entries) while guaranteeing relation reconstruction without violation of the consistency condition. The entries that are necessarily zero in one of the rows are also zero for the other row. Therefore, these two inputs $(C_x=C_i,a)_\mathsf{r}$ and $(C_{x}=C_j,a')_\mathsf{r}$ can be encoded in the same message. 

Therefore, Alice and Bob can remove all redundant rows in this manner and end up with an encoding based on the compressed table which now has $|\mathcal{V}|$ distinct rows, where each row corresponds to a distinct vertex. Sufficiency of communicating $|\mathcal{V}|$-level classical system follows trivially since it allows Alice to send all information about her input. $|\mathcal{V}|\leq n\omega$ is saturated if all the maximum cliques are disconnected in the given graph. 

Now we prove the necessity of  $|\mathcal{V}|$-level classical system 
 to  perform the relation reconstruction
 when considering $\mathcal{R}(\mathcal{G}^{(n,\omega)})$. For every vertex $v$ in a maximum clique $C_i$ where $i \in \{1,\cdots,n\}$ there is an input corresponding to this vertex for Alice $(C_i,a)$ where label $a$ assigns colour $1$ to $v$ and rest of the vertices in the maximum clique are assigned $0$. This is due to condition {(\bfseries G0)}. For any maximum clique $C_i$, each of the Alice's input, $(C_i,a)$ where $a\in \{0,\cdots,\omega-1\}$, must be encoded with different message alphabet. This is because Bob needs to exactly guess the input clique label of Alice whenever his input is $C_y=C_i$. Now for any two vertices $v,v'$ that belong to two different maximum cliques, say $v\in C_i$ and $v' \in C_j$ (where $i \neq j$), there exists a maximum clique $C_k$ and a vertex $u$ $\in C_k$ (where $k$ maybe $i$, $j$ or some other number) such that it is orthogonal exactly to one of these vertices (say $v$ w.l.o.g.). This is due to condition {(\bfseries G1)}. Let Alice's input corresponding to $v$ and $v'$ be $(C_x=C_i,a)$ and $(C_x=C_j,a')$ respectively and Bob has input $C_y=C_k$. For these rounds, $P(b|C_x=C_i,a,C_y=C_k)=0$ and $P(b|C_x=C_j,a',C_y=C_k)> 0$ where Bob's output label $b$ for maximum clique $C_y=C_k$ assigns $1$ exactly to vertex $u$. Thus, the inputs corresponding to every pair of vertices that do not belong to the same maximum clique must be encoded with different message alphabets to accomplish relation reconstruction and obtain a non-zero payoff. Since there are $|\mathcal{V}|$ vertices, the classical message must be encoded in a system of dimension $\geq |\mathcal{V}|$.

 We now argue that locally randomising over the deterministic encoding and decoding protocols (that is the usage of {\it Private Coins}) cannot lower the necessary classical communication, that is, using less than $\log_2 |\mathcal{V}|$ bit along with private coins cannot accomplish ({\bfseries T0})-({\bfseries T1}). To see this, we will consider a convex combination of deterministic encoding of Alice for protocols with communication of a $(|\mathcal{V}|-1)$-level classical system. Consider some maximum clique $C_i$. In any deterministic encoding, each of Alice's input $(C_i,a)$ where $a\in \{0,\cdots,\omega-1\}$, must be encoded with a different message alphabet. This is so because if Alice and Bob receive the same maximum clique as input, then their labelling for the maximum clique must match. Also, this deterministic encoding will encode some of the inputs corresponding to different vertices, say $v$ and $v'$, in the same message alphabet. There will be some inputs  $(C_x=C_i,a)$ and $(C_x=C_{j(\neq i)},a')$ encoded in the same message. Here labels $a$ and $a'$ assign binary colour $1$ to $v$ in $C_i$ and $v'(\neq v)$ in $C_j$ respectively while the rest of the vertices in these maximum cliques are assigned $0$. Individually, each of these encodings will be unsuccessful in relation reconstruction. Furthermore, since Alice and Bob do not have access to public coin, therefore, Bob is not aware of Alice's choice of encoding in a given round. Thus, Bob cannot use decoding that is correlated with Alice's encoding strategy in a given round. If Bob tries to satisfy consistency conditions then he will not be able to have non-zero probability corresponding to all the events $(C_x,a,C_y,b)\in\mathcal{R}(\mathcal{G}^{(n,\omega)})$.\\

 Thus, communication of $|\mathcal{V}|$-level classical system is necessary for reconstruction of the relation $\mathcal{R}(\mathcal{G}^{(n,\omega)})$.        
\end{proof}

\section{Proof of Theorem \ref{theo:paleyg2}}\label{app:prooftheo4}
\begin{proof}
We are looking to compute the dimension of the faithful representation that gives the optimal solution to the Lov\'{a}sz-theta optimisation of $\mathcal{G}_{Paley(q)}$, i.e., we want to find the minimum $\xi^*(\mathcal{G}_{Paley(q)})$ such that $|u_i \rangle \in \mathbb{R}^{\xi^*(\mathcal{G}_{Paley(q)})}$ for the vectors $|u_i \rangle \in S_{opt}$. This quantity is given by the rank of the Gram Matrix $M_{opt}$ of the set of (normalised) vectors $S_{opt}$. We have that
\[(M_{opt})_{k,l} = \begin{cases} 
      1 & k = l \\
      0 & k \sim l \\
      2/(q^{1/2}+1) & (k \nsim l) \wedge (k \neq l). 
   \end{cases}
\]
In other words, $M_{opt} = I + \frac{2}{q^{1/2} + 1} A(\mathcal{\overline{G}}_{Paley(q)})$, where $A(\mathcal{\overline{G}}_{Paley(q)})$ denotes the adjacency matrix of the complement of $\mathcal{G}_{Paley(q)}$ (which is isomorphic to $\mathcal{G}_{Paley(q)}$ since the graph is self-complementary). 

To compute $rank(M_{opt})$, we calculate its spectrum and show that it has exactly $(q+1)/2$ non-zero eigenvalues, so that $rank(M_{opt}) = (q+1)/2$. 

To do this, we compute the spectrum of $A(\mathcal{\overline{G}}_{Paley(q)}) = A(\mathcal{G}_{Paley(q)})$. Following \cite{Lovasz07}, let us define a matrix $K$ based on the quadratic characters $\chi(k-l)$ 
\begin{align}
     \chi(k-l) = \begin{cases}
		1 & (k-l) \; \text{is quadratic residue modulo} \; q \\
		0 & k = l \\
		-1 & \text{else}.
	\end{cases}
\end{align}
 by $K_{k,l} = \chi(k-l)$. By the property of the characters that $\chi(xy) = \chi(x) \chi(y)$ and $\sum_{x=0}^{q-1} \chi(x) = 0$, we can prove the result $K^2 = q I - J$ where $J$ denotes the all-ones matrix (See Lemma \ref{lemma:paley} in Appendix \ref{app:lemmaforpaley} for the proof).

We also see by direct term-by-term comparison that the adjacency matrix of the Paley graph can be written as
\begin{equation}
A(\mathcal{G}_{Paley(q)}) = \frac{1}{2} \left(K + J - I \right).
\end{equation} 
We, therefore, obtain that
\begin{equation}
\left( A(\mathcal{G}_{Paley(q)}) \right)^2 = \frac{q-1}{4} \left(J + I \right) - A(\mathcal{G}_{Paley(q)}).
\end{equation}
Now observe that the all-ones vector $|j \rangle$ is an eigenvector of $A(\mathcal{G}_{Paley(q)})$ and consider another eigenvector $| e_{\lambda} \rangle$ corresponding to eigenvalue $\lambda \neq 0$. Since $|e_{\lambda} \rangle$ is orthogonal to $|j \rangle$, we have that $J | e_{\lambda} \rangle = 0$, so that
\begin{eqnarray}
\left( A(\mathcal{G}_{Paley(q)}) \right)^2 | e_{\lambda} \rangle = \lambda^2 | e_{\lambda} \rangle = \left(\frac{q - 1}{4} - \lambda \right) | e_{\lambda} \rangle, 
\end{eqnarray}
or in other words that
\begin{eqnarray}
&&\lambda^2 + \lambda - \frac{q-1}{4} = 0, \nonumber \\
&& \implies \lambda = \frac{1}{2} \left(-1 \pm q^{1/2} \right).
\end{eqnarray}
Thus, the spectrum and corresponding degeneracies of $A\left(\mathcal{\overline{G}}_{Paley(q)} \right)$ are found to be
\[ \text{spec}\left(A\left(\mathcal{\overline{G}}_{Paley(q)} \right) \right) = \begin{cases}
		(q-1)/2 & \quad 1 \\
		\frac{1}{2} \left(-1 + q^{1/2} \right) & \quad (q-1)/2 \\
		\frac{1}{2} \left(-1 - q^{1/2} \right) & \quad (q-1)/2.
	\end{cases}
\]

As we have seen, the Gram Matrix $M_{opt}$ from the optimal representation giving rise to $\theta(\mathcal{G}_{Paley(q)})$ is given by
\begin{equation}
M_{opt} = I + \frac{2}{q^{1/2} + 1} A\left(\mathcal{\overline{G}}_{Paley(q)} \right).
\end{equation}
Therefore, the spectrum of $M_{opt}$ consists of exactly $(q+1)/2$ non-zero eigenvalues given by
\[ \text{spec}\left(M_{opt} \right) = \begin{cases}
		\sqrt{q} & \quad 1 \\
		2\sqrt{q}/(1+\sqrt{q}) & \quad (q-1)/2 \\
		0 & \quad (q-1)/2.
	\end{cases}
\]
Therefore, we obtain that $rank(M_{opt}) = \xi^*(\mathcal{G}_{Paley(q)}) = (q+1)/2$.

We can even go further and note that since the adjacency matrix of the Paley graph is a circulant matrix (the $k$-th row is a cyclic permutation of the $1$-st row with offset $k$), the \textit{eigenvectors} of the adjacency matrix $A(\mathcal{\overline{G}}_{Paley(q)})$ (and therefore the Gram Matrix $M_{opt}$) are the Fourier vectors
\begin{equation}
|e_{\lambda} \rangle = \frac{1}{q} \left(1, \omega^{\lambda}, \omega^{2 \lambda}, \ldots, \omega^{(q-1) \lambda} \right),
\end{equation} 
with $\lambda = 0, 1, \ldots, q-1$, where $\omega = \exp\left(\frac{2 \pi i}{q} \right)$ is a primitive $q$-th root of unity. Note that $|e_0 \rangle = |j \rangle$ is the all-ones vector. We can then explicitly calculate that

\begin{equation}
\begin{split}
M_{opt} & |e_{\lambda} \rangle = \\ &\left[ \frac{\sqrt{q} -1}{\sqrt{q} + 1} + \frac{1}{\sqrt{q} + 1} \sum_{\substack{l: l \neq 1 \\ (1-l) \; \text{is a quad. res. mod}~ q}} \omega^{(l-1) \lambda} \right.\\
& \left. - \frac{1}{\sqrt{q} + 1} \sum_{\substack{l: l \neq 1 \\ (1-l) \; \text{is not a quad. res. mod}~ q}} \omega^{(l-1) \lambda} \right]  |e_{\lambda} \rangle
\end{split}
\end{equation}
We can then explicitly compute for prime $q$ not only the eigenvalues of $M_{opt}$ as above but also see that the eigenvalue $\sqrt{q}$ corresponds to the eigenvector $|j \rangle = |e_0 \rangle$, the eigenvalues $2 \sqrt{q}/(1 + \sqrt{q})$ correspond to the eigenvectors $|e_{\lambda} \rangle$ for $\lambda$ being the remaining quadratic residues modulo $q$, and the zero eigenvalues correspond to the eigenvectors $|e_{\lambda} \rangle$ for $\lambda$ being the quadratic non-residues modulo $q$. 

\end{proof}

\section{Proof of Result used in Theorem \ref{theo:paleyg2}}\label{app:lemmaforpaley}

Consider a matrix $K$ (as defined in theorem \ref{theo:paleyg2}) based on the quadratic characters $\chi(k-l)$
\begin{align}
     \chi(k-l) = \begin{cases}
		1 & (k-l) \; \text{is quadratic residue modulo} \; q \\
		0 & k = l \\
		-1 & \text{else}.
	\end{cases}
\end{align}

by $K_{k,l} = \chi(k-l)$. Using the property of the characters that $\chi(xy) = \chi(x) \chi(y)$ and $\sum_{x=0}^{q-1} \chi(x) = 0$, we will now prove the result that we have used in theorem \ref{theo:paleyg2}.
\begin{lemma}
\label{lemma:paley}
$K^2 = q I - J$, where $J$ denotes the all-ones matrix.
\end{lemma}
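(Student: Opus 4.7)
The plan is to compute $(K^2)_{k,l} = \sum_{m \in \mathbb{F}_q} \chi(k-m)\chi(m-l)$ entry by entry and match it against $qI - J$. I will split into the diagonal case $k = l$ and the off-diagonal case $k \neq l$, using the multiplicativity $\chi(xy) = \chi(x)\chi(y)$, the convention $\chi(0) = 0$, the orthogonality relation $\sum_{x \in \mathbb{F}_q} \chi(x) = 0$, and the fact that for Paley graphs $q \equiv 1 \pmod 4$, which gives $\chi(-1) = 1$.

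For the diagonal, I would substitute and use $\chi(k-m)\chi(m-k) = \chi(-1)\chi(k-m)^2$. Since $\chi(k-m)^2 = 1$ for all $m \neq k$ and vanishes for $m = k$, the sum reduces to $\chi(-1)(q-1) = q-1$, matching $(qI - J)_{k,k}$.

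For the off-diagonal case, let $c = k - l \neq 0$ and change variables $n = m - l$, so $(K^2)_{k,l} = \sum_{n \in \mathbb{F}_q} \chi(c - n)\chi(n)$. The terms at $n = 0$ and $n = c$ vanish; for the remaining $n$, I multiply and divide inside the character: $\chi(n)\chi(c-n) = \chi(n)^2 \chi((c-n)/n) = \chi(c/n - 1)$. As $n$ runs over $\mathbb{F}_q \setminus \{0, c\}$, the quantity $t := c/n - 1$ runs bijectively over $\mathbb{F}_q \setminus \{-1, 0\}$, so the sum becomes $\sum_{t \in \mathbb{F}_q} \chi(t) - \chi(-1) - \chi(0) = 0 - 1 - 0 = -1$, which matches $(qI - J)_{k,l}$.

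The only subtle step is the substitution and multiplicative factorisation in the off-diagonal case together with tracking the excluded indices; everything else is direct bookkeeping. Combining the two cases gives $K^2 = qI - J$ as claimed.
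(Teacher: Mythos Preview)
Your proof is correct and follows essentially the same approach as the paper: both compute $(K^2)_{k,l}$ entrywise, handle the diagonal by observing that each column has $q-1$ entries in $\{\pm 1\}$ (you phrase this via $\chi(-1)=1$, the paper simply counts), and reduce the off-diagonal sum via the multiplicativity $\chi(a)\chi(b)=\chi(a/b)$ to a full character sum minus one or two excluded values. The substitutions differ cosmetically (you use $t=c/n-1$, the paper uses $(l-k)/j'+1$), but the mechanism and bookkeeping are the same.
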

\begin{proof}
We want to prove that the diagonal entries of $K^2$ are equal to $(q-1)$ and the off-diagonal entries are equal to $-1$. The diagonal entries are given by the squared norms of the columns of $K$, which have one zero entry, $(q-1)/2$ entries of value $1$ (corresponding to the quadratic residues modulo $q$ and the degree of each vertex in $\mathcal{G}_{Paley(q)}$) and $(q-1)/2$ entries of value $-1$. Therefore, the squared norms of the columns and hence the diagonal entries of $K^2$ are equal to $q-1$.

The off-diagonal entries $(K^2)_{k,l}$ are given by $\sum_{j=0}^{q-1} \chi(k-j) \chi(l-j) = \sum_{j'=0}^{q-1} \chi(j') \chi((l-k)+j')$. Since $\chi(0) = 0$, the term for $j'=0$ vanishes and we have $\sum_{j'=1}^{q-1} \chi(j') \chi((l-k)+j')$. Since $\chi(j') \in \{\pm 1\}$ for $j' \neq 0$, the sum reduces to $\sum_{j'=1}^{q-1}  \chi((l-k)+j')/\chi(j') = \sum_{j'=1}^{q-1} \chi((l-k)/j' + 1)$ where we used the property of the characters that $\chi(xy) = \chi(x) \chi(y)$. We finally obtain $\sum_{j'=1}^{q-1} \chi((l-k)/j' + 1) = \left[\sum_{j'' = 0}^{q-1} \chi(j'') \right] - \chi(1) = 0 - 1 = -1$ where we used the property that as $j'$ ranges over $[q-1]$, the argument $(l-k)/j' + 1$ ranges over elements $\{0,\ldots, q-1\} \setminus \{1\}$. Therefore, we obtain the off-diagonal entries to be $-1$ thus showing that $K^2 = q I - J$.
\end{proof}

\section{Proof of Theorem \ref{theo:dntod2nlowerbound}}\label{app:prooftheo5}
\begin{proof} 
The amount of classical public coin, while communicating $\omega$-level classical system, depends on the graph and can be upper bounded by the total number of different classical deterministic encoding and decoding strategies (or the total number of different tables of conditional probabilities that Alice and Bob can prepare while satisfying the constraints mentioned in Appendix \ref{app:prooftheo1}). We observe that among different graphs $\mathcal{G}$ with the same number of maximum cliques $n$ of clique size $\omega$, the graph in which all maximum cliques are disconnected requires the most amount of classical public coin assistance. On the other hand, graphs in which every maximum clique shares the most number of its vertices with other maximum cliques, require the least amount of classical public coin due to the least number of $*$ entries in their conditional probability table (for example in Table \ref{table:SRdto2}). We also know that the most number of vertices that any two maximum cliques can share is $\omega-2$ to have an orthogonal representation in $\C^{\omega}$ (Proposition \ref{prop:lovasz}). An example is provided in Fig. \ref{fig:graphlowerbound} for $\omega=5$ and $n=2$.

\begin{figure}[h]
     \centering
\begin{DiagramV}[1.5]{0}{0}
\begin{move}{0,0}
\fill[black] (0.5,0) circle (0.25);
\fill[black] (0,-2) circle (0.25);
\fill[black] (0,2) circle (0.25);
\fill[black] (-2,1) circle (0.25);
\fill[black] (-2,-1) circle (0.25);
\fill[black] (2,1) circle (0.25);
\fill[black] (2,-1) circle (0.25);
\draw (0,2) -- (0,-2) -- (2,-1) -- (2,1) -- (0,2)--(2,-1)--(0.5,0)--(2,1)--(0,-2);
\draw (0,2) -- (0,-2) -- (-2,-1) -- (-2,1) -- (0,2) --(-2,-1)--(0.5,0)--(-2,1)--(0,-2)--(0.5,0)--(0,2);
\draw (0,-4) node {\large $\mathcal{G}_1$};
\end{move}
\end{DiagramV}
\begin{DiagramV}[1.5]{-2}{0}
\begin{move}{0,0}
\fill[black] (-1.5,0) circle (0.25);
\fill[black] (0,-2) circle (0.25);
\fill[black] (0,2) circle (0.25);
\fill[black] (2,1) circle (0.25);
\fill[black] (2,-1) circle (0.25);
\draw (0,2) -- (0,-2) -- (2,-1) -- (2,1) -- (0,2)--(2,-1)--(-1.5,0)--(2,1)--(0,-2) --(-1.5,0) --(0,2);
\draw (3,-4) node {\large $\mathcal{G}_2$};
\end{move}
\end{DiagramV}
\begin{DiagramV}[1.5]{0}{0}
\begin{move}{0,0}
\fill[black] (1.5,0) circle (0.25);
\fill[black] (0,-2) circle (0.25);
\fill[black] (0,2) circle (0.25);
\fill[black] (-2,1) circle (0.25);
\fill[black] (-2,-1) circle (0.25);
\draw (0,2) -- (0,-2) -- (-2,-1) -- (-2,1) -- (0,2)--(-2,-1)--(1.5,0)--(-2,1)--(0,-2) --(1.5,0) --(0,2);
\end{move}
\end{DiagramV}
\caption{Two Graphs with ($\omega=5, n=2$), the graph on the left $\mathcal{G}_1$ has two maximum cliques of size $\omega=5$ and $\omega-2=3$ vertices common between these maximum cliques. The graph on the right $\mathcal{G}_2$ consists of two disconnected maximum cliques of size $\omega=5$.}
    \label{fig:graphlowerbound}
\end{figure}
To find the lower bound on the classical public coin for a graph with $n$ maximum cliques of size $\omega$, we can calculate the amount of classical public coin necessary for a graph where all the maximum size cliques share $\omega-2$ vertices. Such a graph will saturate the lower bound.  The order of the graph in this case is $|\mathcal{V}|=\omega + 2(n-1)$. 

We also observe that for such a graph the associated conditional probability $n\omega \times n\omega$ table with entries $P(b|C_x,C_y,a)$, the number and structure of the free entries $*$ is equivalent to that of a graph with $n$ disconnected maximum cliques of size $\omega=2$. Thus, the number of classical deterministic strategies and therefore classical public coin required for these two graphs is the same. For example, in the case of the graph shown in Fig. \ref{fig:consistentcolouring} (or left side of Fig. Fig. \ref{fig:SRdto2}) we see that Table \ref{table:SRdto2} is the conditional probability table which is also equivalent (in terms of $*$) to the conditional probability table for the graph on the right $\mathcal{G}^{(n=2,\omega=2)}$ in Fig. \ref{fig:SRdto2}. 
     \begin{figure}[h]
    \centering
\begin{DiagramV}{0}{0}
\begin{move}{0,0} 
\fill[black] (0,0) circle (0.25);
\draw (0,0-1) node {$v_1$};
\fill[black] (6,0) circle (0.25);
\draw (6,0-1) node {$v_3$};
\fill[black] (3,5.196) circle (0.25);
\draw (3,5.196+1) node {$v_2$};
\draw (0,0) -- (6,0);
\draw (0,0) -- (3,5.196);
\draw (3,5.196) -- (6,0);
\draw (3,5.196/2 -0.5) node{$C_1$};
\fill[black] (3+6,5.196) circle (0.25);
\draw (3+6,5.196+1) node {$v_4$};
\fill[black] (6+6,0) circle (0.25);
\draw (6+6,0-1) node {$v_5$};
\draw (0+6,0) -- (6+6,0);
\draw (0+6,0) -- (3+6,5.196);
\draw (3+6,5.196) -- (6+6,0);
\draw (3+6,5.196/2 -0.5) node {$C_2$};
\draw (14,5.196/2 -0.5) node {\large $\cong$};
\end{move}
\end{DiagramV}
\begin{DiagramV}{4}{0}
\begin{move}{0,0} 
\fill[black] (0,0) circle (0.25);
\draw (0,0-1) node {$v_1$};
\fill[black] (3,5.196) circle (0.25);
\draw (3,5.196+1) node {$v_2$};
\draw (0,0) -- (3,5.196);
\draw (3,5.196/2 -0.5) node { $C_1$};
%
\fill[black] (3+6,5.196) circle (0.25);
\draw (3+6,5.196+1) node {$v_4$};
\fill[black] (6+6,0) circle (0.25);
\draw (6+6,0-1) node {$v_5$};
\draw (3+6,5.196) -- (6+6,0);
\draw (3+6,5.196/2 -0.5) node { $C_2$};
\end{move}
\end{DiagramV}   
    \caption{Calculating the lower bound on classical public coin for a graph with $\omega$ sized $n$ maximum cliques with $\omega-2$ common vertices is equivalent (in terms of the number of classical deterministic strategies) to a graph with $\omega=2$ sized $n$ disconnected maximum cliques, as shown in this example continuing the simple example provided before in Fig \ref{fig:consistentcolouring}.}
    \label{fig:SRdto2}
\end{figure}
\begin{table}[h]
\begin{center}
\begin{tabular}{|c c|ccc|ccc| } 
\hline
& & &$C_1$& & &$C_2$& \\
 & & $b=0$&$b=1$&$b=2$ & $b=0$&$b=1$&$b=2$\\
 \hline
     & $a=0$ & $1$ &$0$ & $0$& $0$& $*$ & $*$ \\
$C_1$&$a=1$  &$0$ & $1$ &$0$ & $0$& $*$ & $*$ \\
     &$a=2$  &$0$ & $0$ &$1$ & $1$& $0$ & $0$ \\
 \hline
     &$a=0$ & $0$& $0$ & $1$ & $1$& 0 &$0$  \\
$C_2$&$a=1$ & $*$& $*$ & $0$ &$0$ & $1$ &$0$  \\
     &$a=2$ & $*$& $*$ & $0$ &$0$ & $0$ &$1$\\
 \hline
\end{tabular}
~~$\cong$~~
\begin{tabular}{|c c|cc|cc| } 
\hline
& & $C_1$& & $C_2$& \\
 & & $b=0$&$b=1$ & $b=0$&$b=1$\\
 \hline
$C_1$   & $a=0$ & $1$ &$0$ & $*$ & $*$ \\
        &$a=1$  &$0$ & $1$& $*$ & $*$ \\
 \hline
$C_2$   &$a=0$ &$*$& $*$ & $1$ &$0$  \\
        &$a=1$ & $*$& $*$  &$0$ & $1$  \\
 \hline
\end{tabular}
\caption{The conditional probability table version of the equivalence based on the two graphs in Fig. \ref{fig:SRdto2}, show that the Table for $\omega=3$ sized $n=2$ maximum cliques with $\omega-2=1$ common vertices is equivalent (in terms of number of classical deterministic strategies) to Table with $\omega=2$ sized $n=2$ disconnected maximum cliques.}
\label{table:SRdto2}
\end{center}
\end{table}
 Therefore we have shown that we can calculate the lower bound for the classical public coin required for a graph with $\omega$-sized $n$ maximum cliques by calculating the classical public coin needed for a graph with $n$ maximum cliques of size $2$ that are disconnected.

\end{proof}

\section{Proof of Corollary \ref{corollary:necessary SR}}\label{app:proofcorollaary1}
\begin{proof}
 Using Theorem \ref{theo:dntod2nlowerbound}, to find the lower bound on classical public coin required for a graph $\mathcal{G}^{(n,\omega)}$ to accomplish reconstruction of relation $\mathcal{R}(\mathcal{G}^{(n,\omega)})$, {\it i.e.} satisfy ({\bfseries T0}) and ({\bfseries T1}), we equivalently calculate the classical public coin required for the same task when considering a graph $\mathcal{G}^{(n,\omega=2)}$ where any two maximum cliques are disconnected and one bit communication is allowed.

 For reconstruction of $\mathcal{R}(\mathcal{G}^{(n,\omega=2)})$, we require a convex combination of deterministic strategies while communicating a classical system of $\omega$ dimension. The conditional probability table $M$ resulting from the convex combination of these strategies must have positive entries in the off-diagonal block matrix $(C_x,C_{y\neq x})$ while the diagonal block matrices $(C_x,C_{x})$ must be equal to the identity matrix $\mathbf{I}_{2}$. Thus, in the conditional probability table $M$, for every $(C_x,a,C_{y\neq x})$ there must be at least a pair of deterministic table/ strategy such that one has $P(b|C_x,a,C_y)=0$ and the other table has $P(b|C_x,a,C_y)=1$ as its entry. Any classical deterministic strategy constitutes of filling the table of conditional probability such that every off-diagonal block matrix of this table $(C_x,C_{y\neq x})$ is either  $\mathbf{I}_{2}$ or $\sigma_x$, where $\sigma_x$ is the Pauli-$x$ operator (or the NOT operator). The set of $n$ classical strategies to achieve reconstruction are the following. The $i^{th}$ strategy corresponds to the table where only off-diagonal block matrix $(C_1,C_i)=\sigma_x$ and rest $(C_1,C_{j(\ne i)})=\mathbf{I}_2~ \forall i\in \{2,\cdots,n\}$. Note that fixing the block matrices in the first row alone fixes the entire table if the amount of classical communication is restricted to $1$ bit (See Appendix \ref{app:prooftheo1}).

Note that taking each such $n$ deterministic classical strategy discussed earlier and their convex combinations yields a table of conditional probabilities $P(b|C_x,a,C_y)$, $M$, that leads to some non-zero payoff. It is worth mentioning that the payoff for the above strategy is $\mathcal{P}_{\mathcal{R}(\mathcal{G}^{(n,2)})}=\frac{1}{n} >0$ and since a non-zero payoff ensures relation reconstruction, thus we satisfy ({\bfseries T0}) and ({\bfseries T1}).  However, this does not always help achieve optimal payoff $\mathcal{P}_{\mathcal{R}(\mathcal{G}^{(n,2}))}^{*}$ for the graph under consideration.\\
\end{proof}

\section{Proof of Corollary \ref{cor:SR2}}\label{app:proofcorollaary2}
\begin{proof}
 Consider a graph $\mathcal{G}^{(n,\omega)}$ satisfying {(\bfseries G0)}-{(\bfseries G2)} with faithful orthogonal range $\omega$. From Theorem \ref{theo:dntod2nlowerbound}, the lower bound on classical public coin assistance to $\log_2\omega$ bit channel required for relation reconstruction and obtaining optimal payoff ({\bfseries T2}) is equal to the lower bound on classical public coin required to achieve $\mathcal{P}_{\mathcal{R}(\mathcal{G}^{(n,2)})}^{*}=0.5$ for a graph $\mathcal{G}^{(n,\omega=2)}$ with $n$ disconnected maximum size cliques while communicating one bit.

Similar to the proof of Corollary \ref{corollary:necessary SR} we will again consider a convex combination of deterministic strategies while communicating a classical system of dimension two in the latter task. For any such deterministic strategy, the associated conditional probability table has every off-diagonal block  $(C_x, C_y)$ to be either $\mathbf{I}_2$ or $\sigma_x$, where $\mathbf{I}_2$ is $2\times2$ identity matrix. Also, for any such deterministic strategy $(C_x,C_y)=(C_1,C_x)\oplus_2(C_1,C_y)$ where $\mathbf{I}_2\rightarrow0$ and $\sigma_x\rightarrow 1$. Note that fixing the block matrices in the first row alone fixes the entire table if the amount of classical communication is restricted to $1$ bit (See Appendix \ref{app:prooftheo1}).

In the final table of conditional probability $M$, we want each entry in every off-diagonal block $(C_x,C_y)$ to be $0.5$. This is possible if we have a uniform convex mixture of deterministic tables where half of them have $(C_x,C_y)=\sigma_x$ and the rest have $(C_x,C_y)=\mathbf{I}_2$ such that the effective weight for each free entry $*$ is $0.5$.
For $n=2$, convex combination of two deterministic tables, one with $(C_1,C_2)=\mathbf{I}_2$ and other with $(C_1,C_2)=\sigma_x$, give payoff $\mathcal{P}_{\mathcal{R}(\mathcal{G}^{(2,2)})}=0.5$. For $n=3$, we need four tables i.e. $(C_1,C_2)=\mathbf{I}_2$ or $\sigma_x$
and $(C_1,C_3)=\mathbf{I}_2$ or $\sigma_x$. All convex combinations of any subset of these four deterministic strategies/tables will lead to a table $M$ where some of the off-diagonal block matrix $(C_x,C_y)$ have a contribution from an unequal number of $\mathbf{I}_2$ and $\sigma_x$. For $n\geq 2$, by the similar argument we need a minimal collection of deterministic tables such that corresponding to every two-block matrix of the form $(C_1,C_{j\neq 1})$ and $(C_1,C_{j'\neq 1})$, there are an equal number of tables where $(C_1,C_{j})=\mathbf{I}_2$ and $(C_1,C_{j'})=\mathbf{I}_2$, $(C_1,C_{j})=\mathbf{I}_2$ and $(C_1,C_{j'})=\sigma_x$, $(C_1,C_{j})=\sigma_x$ and $(C_1,C_{j'})=\mathbf{I}_2$, and 
$(C_1,C_{j})=\sigma_x$ and $(C_1,C_{j'})=\sigma_x$. This is exactly the problem for orthogonal arrays that have been discussed above if we substitute $\mathbf{I}_2\rightarrow 0$ and $\sigma_x\rightarrow 1$. In other words, this corresponds to the minimum number of rows required so that any pair of columns have in their rows the entries $\{(0,0),(0,1),(1,0),(1,1)\}$ occurring an equal number of times.
Thus, for a graph $\mathcal{G}^{(n,\omega=2)}$ with $n(>2)$ the parties Alice and Bob need classical public coin with $T_{n-1}$ input to get optimal payoff $\mathcal{P}_{\mathcal{R}(\mathcal{G}^{(n,2)})}^{*}=0.5$ when they are allowed to communicate $\omega$ level classical system. For any graph $\mathcal{G}^{(n,\omega>2)}$ considered here, Alice and Bob need classical public coin with at least $T_{n-1}$ input to get optimal payoff $\mathcal{P}_{\mathcal{R}(\mathcal{G}^{(\omega)})}^{*}$ for the relation $\mathcal{R}(\mathcal{G}^{(n,\omega>2)})$ when they are allowed to communicate $\omega$ level classical system. This completes the proof.  
\end{proof}

\bibliographystyle{ieeetr}

\end{document}